\PassOptionsToPackage{unicode}{hyperref}
\PassOptionsToPackage{hyphens}{url}
\PassOptionsToPackage{dvipsnames,svgnames,x11names}{xcolor}
\documentclass[12pt]{article}
\usepackage{amsmath,amssymb}
\usepackage{paralist}
\usepackage{graphicx,psfrag,epsf,float}
\usepackage{enumerate}
\usepackage{natbib}
\usepackage{etoolbox}    
\usepackage{url} 
\usepackage[english]{babel}
\usepackage{geometry,theorem,xr,enumitem,xcolor}
\usepackage{setspace}
\usepackage{subfig}
\usepackage{makecell}
\usepackage{placeins}

\usepackage[abs]{overpic}

\newtheorem{proposition}{Proposition}
{\theorembodyfont{\rmfamily}}
\newtheorem{theorem}{Theorem}
\newtheorem{assumption}{Assumption}

\newtheorem{lemma}{Lemma}[section]
\newtheorem{suppproposition}[lemma]{Proposition}
\newtheorem{corollary}[lemma]{Corollary}

\makeatletter
\newcommand*{\addFileDependency}[1]{
  \typeout{(#1)}
  \@addtofilelist{#1}
  \IfFileExists{#1}{}{\typeout{No file #1.}}
}
\makeatother

\newcommand*{\myexternaldocument}[1]{%
    \externaldocument{#1}%
    \addFileDependency{#1.tex}%
    \addFileDependency{#1.aux}%
}

\myexternaldocument{supp} 

\usepackage{iftex}
\ifPDFTeX
  \usepackage[T1]{fontenc}
  \usepackage[utf8]{inputenc}
  \usepackage{textcomp} 
\else 
  \usepackage{unicode-math}
  \defaultfontfeatures{Scale=MatchLowercase}
  \defaultfontfeatures[\rmfamily]{Ligatures=TeX,Scale=1}
\fi
\usepackage{lmodern}
\ifPDFTeX\else  
\fi
\IfFileExists{upquote.sty}{\usepackage{upquote}}{}
\IfFileExists{microtype.sty}{
  \usepackage[]{microtype}
  \UseMicrotypeSet[protrusion]{basicmath} 
}{}
\makeatletter
\@ifundefined{KOMAClassName}{
  \IfFileExists{parskip.sty}{%
    \usepackage{parskip}
  }{
    \setlength{\parindent}{0pt}
    \setlength{\parskip}{6pt plus 2pt minus 1pt}}
}{
  \KOMAoptions{parskip=half}}
\makeatother
\usepackage{xcolor}
\makeatletter
\ifx\paragraph\undefined\else
  \let\oldparagraph\paragraph
  \renewcommand{\paragraph}{
    \@ifstar
      \xxxParagraphStar
      \xxxParagraphNoStar
  }
  \newcommand{\xxxParagraphStar}[1]{\oldparagraph*{#1}\mbox{}}
  \newcommand{\xxxParagraphNoStar}[1]{\oldparagraph{#1}\mbox{}}
\fi
\ifx\subparagraph\undefined\else
  \let\oldsubparagraph\subparagraph
  \renewcommand{\subparagraph}{
    \@ifstar
      \xxxSubParagraphStar
      \xxxSubParagraphNoStar
  }
  \newcommand{\xxxSubParagraphStar}[1]{\oldsubparagraph*{#1}\mbox{}}
  \newcommand{\xxxSubParagraphNoStar}[1]{\oldsubparagraph{#1}\mbox{}}
\fi
\makeatother

\usepackage{longtable,booktabs,array}
\usepackage{calc} 
\usepackage{etoolbox}
\makeatletter
\patchcmd\longtable{\par}{\if@noskipsec\mbox{}\fi\par}{}{}
\makeatother
\IfFileExists{footnotehyper.sty}{\usepackage{footnotehyper}}{\usepackage{footnote}}
\makesavenoteenv{longtable}
\usepackage{graphicx}
\makeatletter
\def\maxwidth{\ifdim\Gin@nat@width>\linewidth\linewidth\else\Gin@nat@width\fi}
\def\maxheight{\ifdim\Gin@nat@height>\textheight\textheight\else\Gin@nat@height\fi}
\makeatother
\setkeys{Gin}{width=\maxwidth,height=\maxheight,keepaspectratio}
\makeatletter
\def\fps@figure{htbp}
\makeatother

\makeatletter
\@ifpackageloaded{caption}{}{\usepackage{caption}}
\AtBeginDocument{%
\ifdefined\contentsname
  \renewcommand*\contentsname{Table of contents}
\else
  \newcommand\contentsname{Table of contents}
\fi
\ifdefined\listfigurename
  \renewcommand*\listfigurename{List of Figures}
\else
  \newcommand\listfigurename{List of Figures}
\fi
\ifdefined\listtablename
  \renewcommand*\listtablename{List of Tables}
\else
  \newcommand\listtablename{List of Tables}
\fi
\ifdefined\figurename
  \renewcommand*\figurename{Figure}
\else
  \newcommand\figurename{Figure}
\fi
\ifdefined\tablename
  \renewcommand*\tablename{Table}
\else
  \newcommand\tablename{Table}
\fi
}
\@ifpackageloaded{float}{}{\usepackage{float}}
\floatstyle{ruled}
\@ifundefined{c@chapter}{\newfloat{codelisting}{h}{lop}}{\newfloat{codelisting}{h}{lop}[chapter]}
\floatname{codelisting}{Listing}

\makeatother
\makeatletter
\@ifpackageloaded{caption}{}{\usepackage{caption}}
\@ifpackageloaded{subcaption}{}{\usepackage{subcaption}}
\makeatother

\ifLuaTeX
  \usepackage{selnolig}  
\fi
\usepackage[]{natbib}
\usepackage{bookmark}

\IfFileExists{xurl.sty}{\usepackage{xurl}}{} 
\hypersetup{
  pdftitle={Title},
  pdfauthor={Author 1; Author 2},
  pdfkeywords={3 to 6 keywords, that do not appear in the title},
  colorlinks=true,
  linkcolor={blue},
  filecolor={Maroon},
  citecolor={Blue},
  urlcolor={Blue},
  pdfcreator={LaTeX via pandoc}}

\newcommand{\anon}{1}

\begin{document}

\def\spacingset#1{\renewcommand{\baselinestretch}%
{#1}\small\normalsize} \spacingset{1}


\if1\anon
{
  \title{\bf Consistent Spatial Clustering of Complex Objects via Bregman Geometry}
  \author{Srijato Bhattacharyya, Huiyan Sang\hspace{.2cm}\\
    Department of Statistics, Texas A\&M University\\
    }
  \date{\phantom{\today}}
  \maketitle
} \fi

\if0\anon
{
  \bigskip
  \bigskip
  \bigskip
  \begin{center}
    {\LARGE\bf Consistent Spatial Clustering of Complex Objects via Bregman Geometry}
\end{center}
  \medskip
} \fi

\bigskip
\begin{abstract}
Spatial clustering increasingly involves complex objects, including probability distributions, networks, and structured matrices, for which conventional parametric likelihoods may be difficult to specify. We construct a generalized likelihood by exponentiating a cluster-specific Bregman loss on fixed representations of the responses, avoiding the need to specify a parametric sampling model for each response class. Each cluster is associated with an unknown representative, and the generalized likelihood measures within-cluster homogeneity through the Bregman divergence between the represented observations and that representative. A prior family matched to the Bregman geometry yields conjugate generalized posterior updates and permits exact marginalization of the cluster-specific representatives. The resulting collapsed scores combine within-cluster Bregman dispersion with uncertainty in the representatives, while a spanning-tree partition prior restricts posterior partition support to spatially contiguous clusters. This collapsed representation also leads to a tractable MCMC algorithm for posterior computation. The framework accommodates multiple complex-object response classes within a common inferential construction. We establish posterior consistency for the spatial partition and cluster-specific representatives under infill-domain asymptotics. Simulation studies with different complex-object responses demonstrate accurate recovery of spatial partitions and cluster-specific representatives. We further analyze racial-composition distributions across spatial units in Houston, illustrating geographically coherent clusters with distinct demographic compositions.
\end{abstract}

\noindent%
{\it Keywords:} Bayesian partition models, Bregman divergence, complex-object responses, generalized Bayes, posterior contraction, spatial contiguity. 
\vfill

\newpage
\spacingset{1.8} 

\section{Introduction}\label{SEC:intro}

Clustering is a fundamental tool for identifying latent heterogeneity
in data. In many modern applications, however, the observations are
complex objects, including probability distributions, functions,
networks, and positive-definite matrices, whose structural
characteristics may not be adequately captured by conventional
Euclidean clustering methods. This has motivated the development of
clustering procedures tailored to different object types, including
Wasserstein-based clustering of distribution-valued data
\citep{irpino2014dynamic}, functional data clustering
\citep{peng2008distance, jacques2014functional}, clustering of network-valued
observations \citep{mukherjee2017clustering}, and clustering of
positive-definite matrices using information divergences
\citep{stanitsas2017clustering}. While these methods provide
cluster assignments based on the characteristics of the observed
objects, they do not directly provide posterior uncertainty
quantification for the unknown partition and its cluster-specific
representatives. 

Bayesian model-based clustering provides a probabilistic approach
to inference on the unknown partition and its cluster-specific
parameters. A common approach is to assign a parametric sampling
distribution to the observations within each cluster, with Gaussian
mixture models providing a well-established example for Euclidean
responses \citep{richardson1997bayesian}. For complex-object
responses, the sampling model must instead account for the
characteristics of the corresponding object space. Existing
Bayesian approaches include Dirichlet mixture models for
compositional data \citep{piancastelli2024bayesian}, mixture models
for populations of networks \citep{mantziou2024bayesian}, and
Wishart mixture models for positive-definite matrix responses
\citep{li2026mixture}. These approaches provide posterior inference
for their respective response types, but require the specification
of a sampling distribution and associated cluster-specific
parameters for each object class. This motivates the development
of a common clustering formulation that can accommodate different
complex-object responses without requiring a separate parametric
sampling model for each class.


An alternative to specifying a parametric sampling distribution
is to construct a generalized posterior by replacing the
likelihood with an exponentiated loss function
\citep{bissiri2016general}. In the context of clustering,
\citet{rigon2023generalized} developed a generalized Bayesian
approach based on cluster-specific cohesion functions. Their approach
provides posterior inference on the partition by evaluating
within-cluster losses at empirical cluster representatives, with the
number of clusters fixed in advance. Our goal is instead to retain the
cluster-specific representatives as unknown parameters and to learn the
number of clusters jointly with the partition by assigning a prior to
the partition space. We develop a Bregman-divergence-based generalized
Bayesian framework that enables joint inference on the partition and
its cluster-specific representatives across different classes of
complex-object responses via a tractable collapsed MCMC algorithm.

To illustrate the proposed generalized Bayesian construction, we
focus on spatial clustering, where the objective is to identify spatial
subregions exhibiting distinct patterns in the observed responses.
Spatial clustering plays an important role in applications such as
disease mapping and socioeconomic analysis, where identifying
geographically coherent regions can reveal localized patterns and 
abrupt changes across regional boundaries \citep{bhattacharyya2026bayesian}. Bayesian
approaches to spatial clustering include spatial product
partition models \citep{page2016spatial}
and random graph partition models that incorporate spatial
contiguity through spanning-tree constructions
\citep{teixeira2019bayesian,luo2021bayesian}. These methods have provided
flexible ways to identify spatial clusters while accounting
for the neighborhood structure of the observations.
In many applications, however, the inferential target is not
limited to assigning cluster labels to the observed spatial
units. It is also of interest to identify contiguous subregions
of the underlying spatial domain and determine the cluster
membership of locations at which responses have not been
observed. To address this objective, we combine our generalized
Bayesian construction with the spatial domain partition prior
of \citet{huang2025consistent}, which uses spanning-tree cuts
on a spatial mesh to induce contiguous partitions of the
underlying domain.

A major contribution of this work is to establish posterior clustering
consistency under the proposed generalized Bayesian formulation.
Unlike conventional Bayesian clustering models based on a specified
sampling distribution, our generalized posterior is constructed
directly from a Bregman loss and accommodates different classes of
complex-object responses within a common theoretical formulation.
We establish that, under suitable regularity conditions, the
generalized posterior consistently recovers the correct number of spatial
regions and concentrates around the true domain partition under
infill-domain asymptotics. We further derive posterior contraction
results for the induced clustering of the observed locations,
the cluster-specific representatives, and the associated spatial
representative field. These results provide theoretical guarantees
for simultaneous recovery of the clustering structure and its
representatives without requiring a correctly specified parametric
sampling model for the observed complex objects.

We evaluate the proposed method through simulation studies involving
Euclidean, distribution-valued, count, and positive-definite matrix
responses, comparing its clustering performance with competing methods
and examining the posterior errors as the sample size increases.
We further illustrate its practical utility through an analysis of
racial-composition distributions across Census Block Groups in Houston,
Texas. The remainder of the paper is organized as follows. Section~\ref{sec-methodology} presents the generalized Bayesian model and the spatial partition
prior. Section~\ref{sec-posterior-computation} describes the posterior computation algorithm, and
Section~\ref{sec-theory} establishes the posterior contraction results. Section~\ref{sec:tau-calibration}
discusses learning-rate calibration. Section~\ref{sec-simulation} presents the simulation
studies, followed by the Houston racial-composition analysis in
Section~\ref{sec:houston}. Section~\ref{sec-conc} concludes the paper.

\section{Methodology}\label{sec-methodology}
\noindent\textbf{Data structure and general setup.}
Let $\mathcal D\subset\mathbb R^2$ denote the spatial domain, and suppose that $(s_i,Y_i)$, $i=1,\ldots,n$, are observed, where $s_i\in\mathcal D$ is a spatial location and $Y_i$ takes values in an object space $\mathcal M$. We assume a fixed measurable representation $z:\mathcal M\rightarrow\mathcal V$, where $(\mathcal V,\langle\cdot,\cdot\rangle)$ is a finite-dimensional real inner-product space, and write $X_i=z(Y_i)$. Let $\mathcal U\subseteq\mathcal V$ denote the parameter space of admissible cluster-specific representatives.

The inferential target is a finite partition of $\mathcal D$ into spatially contiguous clusters within which the represented responses share a common representative $\mu_j\in\mathcal U$. Within-cluster homogeneity is quantified by the Bregman divergence $D_\Phi(X_i,\mu_j)$, which measures the discrepancy between each represented response and its corresponding cluster-specific representative. We first develop a Bregman-divergence-based generalized likelihood–prior formulation for the represented responses and their cluster-specific representatives, then introduce a spanning-tree-based prior on spatially contiguous partitions of $\mathcal D$, and finally specialize the framework to several common classes of complex-object responses.

\subsection{Generalized Bayesian model with Bregman loss}
\label{sec-generalized-bayes-model}

\noindent\textbf{Motivation.}
For the generalized likelihood construction developed here, we choose the Bregman divergence for two reasons. First, the empirical Bregman risk of any fixed cluster whose empirical mean lies in $\mathcal U$ is uniquely minimized at that mean \citep{banerjee2005clustering}, providing a natural measure of within-cluster heterogeneity around a well-defined empirical representative. Second, exponentiating the Bregman loss yields a generalized likelihood whose dependence on the cluster representative has an exponential-family form. This permits a geometry-matched conjugate prior for the cluster-specific representatives, yielding closed-form generalized posterior updates and analytic marginalization of the representatives.

The resulting marginalization is particularly advantageous for spatial partition inference because partition updates can be performed without jointly proposing parameters for the affected clusters, thereby avoiding trans-dimensional parameter proposals that can lead to poor MCMC mixing. Such collapsed strategies play an important role in the computational efficiency of Bayesian spatial clustering methods such as \citet{luo2021bayesian, huang2025consistent, bhattacharyya2026bayesian}.

\noindent\textbf{Bregman geometry.} 
Let $\Phi:\mathcal V\rightarrow(-\infty,\infty]$ be a strictly convex
function chosen according to the object type to reflect the geometry of the represented response space, and define its effective domain by $\operatorname{dom}(\Phi)=\{x\in\mathcal V:\Phi(x)<\infty\}$.
We assume $X_i\in\operatorname{dom}(\Phi)$ for $i=1,\ldots,n$. Let
$\mathcal U$ be a nonempty open convex subset of
$\operatorname{int}\{\operatorname{dom}(\Phi)\}$ on which $\Phi$ is
continuously differentiable. For
$x\in\operatorname{dom}(\Phi)$ and $\mu\in\mathcal U$, define the Bregman divergence \citep{bregman1967relaxation} by $D_\Phi(x,\mu)
=
\Phi(x)-\Phi(\mu)
-\left\langle\nabla\Phi(\mu),x-\mu\right\rangle$, with the represented response in the first argument and the cluster-specific representative in the second. For a nonempty cluster $C\subseteq\{1,\ldots,n\}$, let
$R_C(\mu)=\sum_{i\in C}D_\Phi(X_i,\mu)$,
$n_C=|C|$, and
$\bar X_C=n_C^{-1}\sum_{i\in C}X_i$.
The Bregman decomposition of \citet[Proposition~1]{banerjee2005clustering} implies, whenever $\bar X_C\in\mathcal U$, $R_C(\mu)
=
R_C(\bar X_C)
+
n_C D_\Phi(\bar X_C,\mu),$ $
 \mu\in\mathcal U$. Hence $\bar X_C$ is the unique minimizer of $R_C$, and $W(C)
=
\min_{\mu\in\mathcal U}R_C(\mu)
=
R_C(\bar X_C)$. For a partition $\mathcal P=\{C_1,\ldots,C_k\}$ with
$\bar X_{C_j}\in\mathcal U$ for all $j$, define
$W(\mathcal P)=\sum_{j=1}^kW(C_j)$. Then $\min_{(\mu_1,\ldots,\mu_k)\in\mathcal U^k}
\sum_{j=1}^k R_{C_j}(\mu_j)
=
W(\mathcal P)$, with unique minimizer
$(\bar X_{C_1},\ldots,\bar X_{C_k})$. 



For a fixed number \(k\) of clusters, minimizing $W(\mathcal P)$ over partitions yields the hard Bregman clustering criterion of \citet{banerjee2005clustering}, which substitutes the empirical centroid $\bar X_C$ for each cluster representative. Our objective instead requires posterior inference for both the spatial partition and its cluster-specific representatives. We therefore treat the number of clusters, the partition, and the representatives as unknown, and use the cluster risks $R_C(\mu)$ to construct the generalized likelihood. The resulting generalized posterior quantifies uncertainty in both the partition and the representatives.

\noindent\textbf{Generalized likelihood and conjugate prior.}
For a nonempty cluster $C$, define the generalized likelihood
\begin{equation}
L_\tau(X_C\mid\mu_C)
=
\exp\{-\tau R_C(\mu_C)\},
\qquad \mu_C\in\mathcal U,
\label{eq-bregman-generalized-likelihood}
\end{equation}
where $X_C=\{X_i:i\in C\}$, $\mu_C$ is the representative for cluster $C$, and $\tau>0$ is the learning rate. Equation~\eqref{eq-bregman-generalized-likelihood} exponentiates the cluster-specific loss rather than specifying a sampling density for $X_C$ \citep{bissiri2016general}, and therefore need not integrate to one over the observations.

Let $S_C=\sum_{i\in C}X_i$, $H_C=\sum_{i\in C}\Phi(X_i)$, and
$A_\Phi(\mu)=\langle\nabla\Phi(\mu),\mu\rangle-\Phi(\mu)$. Since
$R_C(\mu)=H_C+n_CA_\Phi(\mu)-\langle S_C,\nabla\Phi(\mu)\rangle$,
\begin{equation}
L_\tau(X_C\mid\mu_C)
=
\exp(-\tau H_C)
\exp\left\{
\left\langle\tau S_C,\nabla\Phi(\mu_C)\right\rangle
-\tau n_C A_\Phi(\mu_C)
\right\}.
\label{eq-bregman-generalized-likelihood-dual}
\end{equation}
Hence the representative-dependent kernel depends on $\mu_C$ through
$\{\nabla\Phi(\mu_C),A_\Phi(\mu_C)\}$ and on the observations through $(S_C,n_C)$.
For an exponential-family--Bregman pair,
$p(x\mid\theta)=h(x)\exp\{\langle x,\theta\rangle-\psi(\theta)\}$,
the identities $\theta=\nabla\Phi(\mu_C)$ and $\psi(\theta)=A_\Phi(\mu_C)$
\citep{banerjee2005clustering} transform this kernel into
$\exp\{\tau\langle S_C,\theta\rangle-\tau n_C\psi(\theta)\}$.
This algebraic correspondence motivates the prior specification over the cluster-specific representatives while leaving the sampling distribution of the represented responses unspecified.

Let $\lambda_{\mathcal V}$ denote Lebesgue measure on $\mathcal V$, and define
$Z_\Phi(\chi,\kappa)=\int_{\mathcal U}\exp\{\langle\chi,\nabla\Phi(\mu)\rangle-\kappa A_\Phi(\mu)\}\lambda_{\mathcal V}(d\mu)$ and
$\mathcal H_\Phi=\{(\chi,\kappa):0<Z_\Phi(\chi,\kappa)<\infty\}$.
For $(\chi,\kappa)\in\mathcal H_\Phi$, define
\begin{equation}
\frac{d\Pi_{\chi,\kappa}}{d\lambda_{\mathcal V}}(\mu)
=
\frac{
\exp\left\{
\left\langle\chi,\nabla\Phi(\mu)\right\rangle
-\kappa A_\Phi(\mu)
\right\}
}{
Z_\Phi(\chi,\kappa)
},
\qquad \mu\in\mathcal U.
\label{eq-bregman-prior-family}
\end{equation}
This is an exponential family on $\mathcal U$ with sufficient statistic
$\{\nabla\Phi(\mu),-A_\Phi(\mu)\}$. Under the identification above, its kernel becomes
$\exp\{\langle\chi,\theta\rangle-\kappa\psi(\theta)\}$, which is the usual natural-conjugate kernel
\citep{diaconis1979conjugate}. However, \eqref{eq-bregman-prior-family} is defined directly with respect to
$\lambda_{\mathcal V}(d\mu)$. A Diaconis--Ylvisaker prior defined with respect to Lebesgue measure on the natural-parameter space and transformed through $\theta=\nabla\Phi(\mu)$ would instead induce the additional factor
$\left|\det\nabla^2\Phi(\mu)\right|$. Therefore, the two priors are not generally the same probability measure on $\mathcal U$.

Conditional on a candidate partition, we assign independent common priors $\Pi_{\chi_0,\kappa_0}$ to the cluster-specific representatives, with $(\chi_0,\kappa_0)\in\mathcal H_\Phi$. For a cluster $C$, the generalized posterior is
$\Pi_\tau(d\mu_C\mid X_C)\propto \exp\{-\tau R_C(\mu_C)\}\Pi_{\chi_0,\kappa_0}(d\mu_C)$,
with integrated score
$M_\tau(C)=\int_{\mathcal U}\exp\{-\tau R_C(\mu_C)\}\Pi_{\chi_0,\kappa_0}(d\mu_C)$. For completeness, we define $R_{\varnothing}(\mu)=0$ and hence $M_\tau(\varnothing)=1$. We call $M_\tau(C)$ an integrated generalized-Bayes score rather than a marginal likelihood because \eqref{eq-bregman-generalized-likelihood} is loss-based and need not define a normalized sampling density \citep{jewson2022general}.

\begin{theorem}[Conjugate update and collapsed Bregman score]
\label{thm-bregman-conjugacy-collapse}
Let $C$ be nonempty and suppose
$(\chi_0+\tau S_C,\kappa_0+\tau n_C)\in\mathcal H_\Phi$. Then
$\Pi_\tau(d\mu_C\mid X_C)
=
\Pi_{\chi_0+\tau S_C,\;\kappa_0+\tau n_C}(d\mu_C)$ and
$M_\tau(C)
=
\exp(-\tau H_C)
Z_\Phi(\chi_0+\tau S_C,\kappa_0+\tau n_C)/
Z_\Phi(\chi_0,\kappa_0)$.
If, in addition, $\bar X_C\in\mathcal U$, define
$J_m(u)=
\int_{\mathcal U}
\exp\{-\tau mD_\Phi(u,\mu)\}
\Pi_{\chi_0,\kappa_0}(d\mu)$ for $m\geq1$ and $u\in\mathcal U$. Then
\begin{equation}
M_\tau(C)
=
\exp\{-\tau W(C)\}
J_{n_C}(\bar X_C).
\label{eq-collapsed-risk-decomposition}
\end{equation}
\end{theorem}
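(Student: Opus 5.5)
The plan is a direct computation from the dual form \eqref{eq-bregman-generalized-likelihood-dual} of the generalized likelihood. First I will verify the algebraic identity $R_C(\mu)=H_C+n_CA_\Phi(\mu)-\langle S_C,\nabla\Phi(\mu)\rangle$. It follows by summing $D_\Phi(X_i,\mu)=\Phi(X_i)-\Phi(\mu)-\langle\nabla\Phi(\mu),X_i\rangle+\langle\nabla\Phi(\mu),\mu\rangle$ over $i\in C$. Multiplying $\exp\{-\tau R_C(\mu)\}$ by the prior density \eqref{eq-bregman-prior-family} with respect to $\lambda_{\mathcal V}$ then gives the unnormalized posterior density
\[
\frac{\exp(-\tau H_C)}{Z_\Phi(\chi_0,\kappa_0)}\exp\bigl\{\langle\chi_0+\tau S_C,\nabla\Phi(\mu)\rangle-(\kappa_0+\tau n_C)A_\Phi(\mu)\bigr\},\qquad \mu\in\mathcal U.
\]

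Integrating this density over $\mathcal U$ gives $M_\tau(C)=\exp(-\tau H_C)Z_\Phi(\chi_0+\tau S_C,\kappa_0+\tau n_C)/Z_\Phi(\chi_0,\kappa_0)$. By the hypothesis $(\chi_0+\tau S_C,\kappa_0+\tau n_C)\in\mathcal H_\Phi$, this quantity is finite and strictly positive. The generalized posterior is therefore well defined, and after normalization it is exactly $\Pi_{\chi_0+\tau S_C,\kappa_0+\tau n_C}$. Measurability is not an issue, since $\nabla\Phi$ is continuous on $\mathcal U$.

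For \eqref{eq-collapsed-risk-decomposition}, assume $\bar X_C\in\mathcal U$. The Bregman decomposition of \citet[Proposition~1]{banerjee2005clustering}, quoted earlier, gives $R_C(\mu)=W(C)+n_CD_\Phi(\bar X_C,\mu)$ for every $\mu\in\mathcal U$. Substituting this into the definition of $M_\tau(C)$ lets the constant factor $\exp\{-\tau W(C)\}$ come out of the integral. What remains is $\int_{\mathcal U}\exp\{-\tau n_C D_\Phi(\bar X_C,\mu)\}\Pi_{\chi_0,\kappa_0}(d\mu)=J_{n_C}(\bar X_C)$. This integral is finite because $D_\Phi\ge 0$, by convexity of $\Phi$ and differentiability on $\mathcal U$, so the integrand is bounded by $1$.

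The argument is essentially routine. The only step needing care is justifying the decomposition itself: $D_\Phi(\bar X_C,\mu)$ requires $\bar X_C\in\operatorname{dom}\Phi$, and $\bar X_C\in\mathcal U$ supplies this. The decomposition is also purely algebraic once one uses $\sum_{i\in C}(X_i-\bar X_C)=0$. Concretely, $\sum_{i\in C}\{D_\Phi(X_i,\mu)-D_\Phi(X_i,\bar X_C)\}$ collapses to $n_C\{\Phi(\bar X_C)-\Phi(\mu)-\langle\nabla\Phi(\mu),\bar X_C-\mu\rangle\}$. I would check this line directly rather than rely only on the citation.
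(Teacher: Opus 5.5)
Your proposal is correct and follows the paper's proof step for step. You rewrite $R_C$ in the dual form, multiply by the prior kernel, and integrate using $(\chi_0+\tau S_C,\kappa_0+\tau n_C)\in\mathcal H_\Phi$; you then substitute $R_C(\mu)=W(C)+n_C D_\Phi(\bar X_C,\mu)$ and factor out $\exp\{-\tau W(C)\}$. Your direct check of that decomposition via $\sum_{i\in C}(X_i-\bar X_C)=0$, and your remark that $J_{n_C}(\bar X_C)\le 1$, are harmless additions that the paper handles by citation or leaves implicit.
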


\noindent\textbf{Remark.}
If $Z_\Phi(\chi,\kappa)$ admits an explicitly evaluable expression
at both the prior hyperparameters and all admissible updated
hyperparameters, then Theorem~\ref{thm-bregman-conjugacy-collapse}
yields closed-form collapsed scores $M_\tau(C)$. This permits exact
evaluation of the response-dependent factors in the
Metropolis--Hastings acceptance ratios, enabling collapsed MCMC
updates of the spatial partition without sampling the
cluster-specific representatives.

The proof of theorem~\ref{thm-bregman-conjugacy-collapse} is given in the Supplementary Material. The theorem gives the additive update $(\chi_0,\kappa_0)\mapsto(\chi_0+\tau S_C,\kappa_0+\tau n_C)$, so the conditional generalized posterior of each representative remains in the same prior family and can be sampled directly conditional on the sampled partition whenever that family admits direct simulation. It also gives the collapsed score required for partition inference. Since $M_\tau(C)$ depends on the observations only through the additive quantities $(n_C,S_C,H_C)$, a partition move requires recomputation only for the clusters it modifies.

The Bregman cohesion of \citet{rigon2023generalized} uses
$\exp\{-\lambda W(C)\}$, obtained by substituting the empirical centroid
$\bar X_C$ directly for the cluster representative. In contrast, we treat
$\mu_C$ as an unknown representative with generalized posterior
$\Pi_\tau(d\mu_C\mid X_C)$ and integrate it when computing $M_\tau(C)$.
By \eqref{eq-collapsed-risk-decomposition}, this gives
$M_\tau(C)=\exp\{-\tau W(C)\}J_{n_C}(\bar X_C)$, where the first factor is
the exponentiated minimum empirical Bregman risk and the second arises
from marginalizing the representative. This distinction is important because the representatives are inferential targets, and integrating them allows their uncertainty to enter posterior comparison across partitions. In addition, whereas \citet{rigon2023generalized} condition on a prespecified number of clusters, our spanning-tree partition prior in Section~\ref{sec-spatial-partition-prior} assigns positive probability to partitions with different numbers of spatially contiguous clusters, allowing the number of clusters to be inferred jointly with the partition.


The factor $J_{n_C}(\bar X_C)$ integrates the Bregman kernel centered at the empirical representative $\bar X_C$ with respect to the prior distribution of $\mu$. The integrand attains its maximum value one at $\mu=\bar X_C$ and decreases as $D_\Phi(\bar X_C,\mu)$ increases. Consequently, $J_{n_C}(\bar X_C)$ is larger when the prior assigns greater probability to representative values that are close to $\bar X_C$ in Bregman divergence, and smaller when the prior places its mass farther from $\bar X_C$. The factor $n_C\tau$ controls the concentration of this weighting around $\bar X_C$, so larger clusters place progressively greater weight on representatives lying close to their empirical representative. Thus the collapsed score combines the minimum within-cluster Bregman dispersion $W(C)$ with the prior-weighted support for representative values near $\bar X_C$.

For a candidate partition
$\mathcal P=\{C_1,\ldots,C_k\}$, conditional independence of the
representatives gives
$M_\tau(\mathcal P)=\prod_{j=1}^kM_\tau(C_j)$. If
$\bar X_{C_j}\in\mathcal U$ for all $j$, then
\begin{equation}
\log M_\tau(\mathcal P)
=
-\tau W(\mathcal P)
+
\sum_{j=1}^k
\log J_{n_{C_j}}(\bar X_{C_j}).
\label{eq-partition-score-decomposition}
\end{equation}
The order of the integration term follows from Laplace approximation.
Let $d_{\mathcal V}=\dim(\mathcal V)$ and
$q_0=d\Pi_{\chi_0,\kappa_0}/d\lambda_{\mathcal V}$. Fix $u\in\mathcal U$
and suppose that $\Phi$ is twice continuously differentiable near $u$,
$\nabla^2\Phi(u)$ is positive definite, $q_0$ is continuous and positive
at $u$, and there exist $\varepsilon,\delta>0$ such that the closed
$\varepsilon$-ball around $u$ is contained in $\mathcal U$ and
$\inf_{\mu\in\mathcal U:\|\mu-u\|\geq\varepsilon}
D_\Phi(u,\mu)\geq\delta$. Then, as $m\to\infty$,
$$
\begin{aligned}
J_m(u)
&=
q_0(u)
\left(\frac{2\pi}{\tau m}\right)^{d_{\mathcal V}/2}
\left|\nabla^2\Phi(u)\right|^{-1/2}
\{1+o(1)\},\\
\log J_m(u)
&=
-\frac{d_{\mathcal V}}{2}\log m
+\log q_0(u)
-\frac12\log|\nabla^2\Phi(u)|
+\frac{d_{\mathcal V}}{2}\log\left(\frac{2\pi}{\tau}\right)
+o(1).
\end{aligned}
$$
Therefore integrating a $d_{\mathcal V}$-dimensional representative
contributes the leading term
$-(d_{\mathcal V}/2)\log m$ to the cluster log score. Equivalently,
$J_m(u)$ has leading order $m^{-d_{\mathcal V}/2}$, matching the usual
dimension penalty obtained by integrating a regular
$d_{\mathcal V}$-dimensional Bayesian parameter
\citep{schwarz1978estimating,tierney1986accurate}, although here it arises
from integration of the exponentiated Bregman loss rather than an ordinary
sampling likelihood.

\subsection{Spatial partition prior}
\label{sec-spatial-partition-prior}

We impose spatial contiguity through a spanning-tree partition prior.
Spanning-tree cuts have been used to construct Bayesian priors on
contiguous graph partitions
\citep{teixeira2019bayesian,luo2021bayesian,come2024bayesian}.
We use the block-based domain construction of
\citet{huang2025consistent}. Without loss of generality, take $\mathcal D=[0,1]^2$ and partition it
into $K^2$ square blocks
$\mathcal B_K=\{B_{K,1},\ldots,B_{K,K^2}\}$ of side length $K^{-1}$. As noted by \citet{huang2025consistent}, this construction extends to more general domains that are homeomorphic to $[0,1]^2$ through a bi-Lipschitz homeomorphism.
Let $G_K=(\mathcal B_K,E_K)$ be the mesh graph in which two blocks are
adjacent when they share a boundary segment, and let
$\mathfrak T_K$ denote its set of spanning trees. We assign the discrete
uniform spanning-tree prior
$p_K(T)=|\mathfrak T_K|^{-1},$ $ T\in\mathfrak T_K$, as in \citet{teixeira2019bayesian} and
\citet{bhattacharyya2026bayesian}.


Conditional on $T\in\mathfrak T_K$, assign the number of clusters the truncated Poisson prior
$\Pr(k\mid T)
=
\frac{\lambda_n^k/k!}
{\displaystyle\sum_{h=1}^{k_{\max}}\lambda_n^h/h!},$ $
k\in\{1,\ldots,k_{\max}\}$ where \(\lambda_n>0\) and \(1\le k_{\max}\le K^2\).
Given $(T,k)$, let $F\subset E(T)$ denote the set of deleted edges. Since every spanning tree on the $K^2$ mesh vertices has $K^2-1$ edges, we take
$\Pr(F\mid T,k)
=
\binom{K^2-1}{k-1}^{-1}
\mathbf 1\!\left\{
F\subset E(T),\ |F|=k-1
\right\}.$
Deleting $F$ from $T$ produces exactly $k$ connected components
$\mathcal V_1(T,F),\ldots,\mathcal V_k(T,F)$. These components induce the domain partition
$\pi(T,F)
=
\{D_1(T,F),\ldots,D_k(T,F)\},$ $
D_j(T,F)
=
\bigcup_{B\in\mathcal V_j(T,F)}B.$
Because each $\mathcal V_j(T,F)$ is connected in $T$ and $E(T)\subset E_K$, every $D_j(T,F)$ is contiguous under the mesh adjacency.

Let $\mathfrak T_K(\pi)$ denote the set of spanning trees for which deletion of the $|\pi|-1$ edges separating the cells of $\pi$ produces $\pi$. Marginalizing over the tree-and-cut representation gives the induced prior on supported domain partitions,
$\Pi_K(\pi)
=
\frac{\lambda_n^{|\pi|}/|\pi|!}
{\displaystyle\sum_{h=1}^{k_{\max}}\lambda_n^h/h!}
\binom{K^2-1}{|\pi|-1}^{-1}
\sum_{T\in\mathfrak T_K(\pi)}p_K(T),$ $
\pi\in\mathfrak P_K.$

A domain partition $\pi=\{D_1,\ldots,D_k\}$ induces observation clusters
$C_j(\pi)=\{i:s_i\in D_j\}$, so observations falling in the same mesh
block necessarily receive the same cluster label. Let
$\mathfrak P_K$ denote the support of $\Pi_K$. Since the
cluster-specific representatives have already been integrated out, the
generalized posterior distribution of the spatial partition is
\begin{equation}
\Pi_{\tau,n}(\pi\mid X_{1:n},s_{1:n})
=
\frac{
\Pi_K(\pi)
\displaystyle\prod_{j=1}^{|\pi|}
M_\tau\{C_j(\pi)\}
}{
\displaystyle
\sum_{\widetilde\pi\in\mathfrak P_K}
\Pi_K(\widetilde\pi)
\prod_{j=1}^{|\widetilde\pi|}
M_\tau\{C_j(\widetilde\pi)\}
},
\qquad
\pi\in\mathfrak P_K.
\label{eq-spatial-partition-posterior}
\end{equation}

\subsection{Canonical response geometries.}\label{sec-examples}

The specification $(\mathcal M,z,\mathcal V,\mathcal U,\Phi)$ determines the Bregman loss, the geometry-matched prior in \eqref{eq-bregman-prior-family}, its generalized posterior update, and the collapsed cluster score. For Euclidean responses, $\mathcal M=\mathcal V=\mathcal U=\mathbb R^d$, $z$ is the identity map, and $\Phi(x)=\|x\|^2/2$, so $D_\Phi(x,\mu)=\|x-\mu\|^2/2$ and $\Pi_{\chi,\kappa}=N_d(\chi/\kappa,\kappa^{-1}I_d)$. The cluster-specific representative therefore has a Gaussian generalized posterior, and its contribution can be integrated analytically in the collapsed cluster score.

\medskip
\noindent\emph{Distribution-valued responses.}
Let $\mathcal M=\Delta_{d-1}$, where $\Delta_{d-1}=\{p\in[0,1]^d:\sum_{h=1}^d p_h=1\}$. We use the intrinsic representation $z(p)=(p_1,\ldots,p_{d-1})^\top$, with $\mathcal V=\mathbb R^{d-1}$ and $\mathcal U=\{u\in(0,1)^{d-1}:\sum_{h=1}^{d-1}u_h<1\}$. Take $\Phi$ to be the negative-entropy generator on $\mathcal V$, with effective domain $\operatorname{dom}(\Phi)=\{x\in[0,1]^{d-1}:\sum_{h=1}^{d-1}x_h\leq 1\}$. For $x\in\operatorname{dom}(\Phi)$, write $x_d=1-\sum_{h=1}^{d-1}x_h$ and set $\Phi(x)=\sum_{h=1}^d x_h\log x_h$, with $0\log0=0$. For $\mu\in\mathcal U$, write $\mu_d=1-\sum_{h=1}^{d-1}\mu_h$. Then $\nabla_h\Phi(\mu)=\log(\mu_h/\mu_d)$, $h=1,\ldots,d-1$, and $A_\Phi(\mu)=-\log\mu_d$. Hence $D_\Phi\{z(p),\mu\}=\sum_{h=1}^d p_h\log\{p_h/\mu_h\}=\operatorname{KL}(p\,\|\,\mu)$. For $(\chi,\kappa)\in\mathcal H_\Phi$, define $\alpha_h=\chi_h+1$, $h=1,\ldots,d-1$, and $\alpha_d=\kappa-\sum_{h=1}^{d-1}\chi_h+1$. Identifying $\mu\in\mathcal U$ with the full simplex vector $(\mu_1,\ldots,\mu_d)^\top$, whenever $\alpha_h>0$ for all $h$, $\Pi_{\chi,\kappa}=\operatorname{Dirichlet}(\alpha)$ and $Z_\Phi(\chi,\kappa)=B(\alpha)$. Writing $
\boldsymbol{\alpha}_0
$ for the prior Dirichlet parameter, the generalized posterior is $\mu\mid X_C\sim\operatorname{Dirichlet}\left(
\boldsymbol{\alpha}_0
+\tau\sum_{i\in C}p_i\right)$, and the collapsed score is $M_\tau(C)=\exp\left\{-\tau\sum_{i\in C}\sum_{h=1}^d p_{ih}\log p_{ih}\right\}B\left(
\boldsymbol{\alpha}_0
+\tau\sum_{i\in C}p_i\right)/B(
\boldsymbol{\alpha}_0
)$.

\medskip
\noindent\emph{Positive-definite matrix responses.}
Let $\mathcal M=\mathcal U=\mathbb S_{++}^q$, let $\mathcal V=\mathbb S^q$ be the vector space of real symmetric $q\times q$ matrices equipped with the trace inner product $\langle A,B\rangle=\operatorname{tr}(AB)$, and take $z$ to be the identity map. Take $\Phi$ to be the negative log-determinant generator on $\mathcal V$, with effective domain $\operatorname{dom}(\Phi)=\mathbb S_{++}^q$ and $\Phi(X)=-\log|X|$ for $X\in\mathbb S_{++}^q$. We then have $\nabla\Phi(M)=-M^{-1}$ and $A_\Phi(M)=\log|M|-q$. The resulting Bregman divergence is $D_\Phi(X,M)=\operatorname{tr}(M^{-1}X)-\log|M^{-1}X|-q$, which is the LogDet Bregman divergence generated by $-\log|\cdot|$ \citep{kulis2009low}. Under the parameterization $\chi=\Psi/2$ and $\kappa=(\nu+q+1)/2$, the prior family in \eqref{eq-bregman-prior-family} is inverse-Wishart. Specifically, if $\nu>q-1$ and $\Psi\in\mathbb S_{++}^q$, then $\Pi_{\chi,\kappa}=\operatorname{IW}_q(\nu,\Psi)$. Let $c_q(\nu,\Psi)=|\Psi|^{\nu/2}\{2^{\nu q/2}\Gamma_q(\nu/2)\}^{-1}$ denote the inverse-Wishart normalizing constant. The corresponding Bregman normalizer is $Z_\Phi(\chi,\kappa)=\exp(\kappa q)/c_q(\nu,\Psi)$. Therefore, under the prior $M\sim\operatorname{IW}_q(\nu_0,\Psi_0)$, the generalized posterior is $M\mid X_C\sim\operatorname{IW}_q\left(\nu_0+2\tau n_C,\Psi_0+2\tau S_C\right)$. The collapsed cluster score is $M_\tau(C)=\exp\left\{\tau\sum_{i\in C}\log|X_i|+\tau n_Cq\right\}c_q(\nu_0,\Psi_0)/c_q\left(\nu_0+2\tau n_C,\Psi_0+2\tau S_C\right)$.


Corresponding derivations for nonnegative count vectors and network-valued responses are provided in the Supplementary Material. Together, the Euclidean, distribution-valued, count, network, and positive-definite matrix constructions show that the same generalized-Bayes formulation yields response-specific Bregman losses, geometry-matched representative priors, closed-form generalized posterior updates, and analytically evaluable collapsed cluster scores.

\section{Posterior Computation}
\label{sec-posterior-computation}

We sample the spatial partition through the spanning-tree representation
used to define the prior in Section~\ref{sec-spatial-partition-prior}.
Let $T\in\mathfrak T_K$ denote the current spanning tree and let
$F\subset E(T)$ contain its deleted edges. Then
$k=|F|+1$, and $\pi(T,F)$ denotes the induced domain partition. We retain
$(T,F)$ in the Markov state because direct evaluation of the induced
partition prior $\Pi_K(\pi)$ requires summation over all spanning trees
that can generate $\pi$. The cluster-specific representatives are instead
integrated out using
Theorem~\ref{thm-bregman-conjugacy-collapse}. The resulting target
distribution on $(T,F)$ has unnormalized mass
\begin{equation}
\widetilde\Pi_{\tau,n}(T,F\mid X_{1:n},s_{1:n})
\propto
p_K(T)
\frac{\lambda_n^k}{k!}
\binom{K^2-1}{k-1}^{-1}
\prod_{j=1}^{k} M_\tau\{C_j(\pi(T,F))\},
\qquad k=|F|+1 .
\label{eq-collapsed-tree-cut-posterior}
\end{equation}
Summing \eqref{eq-collapsed-tree-cut-posterior} over all tree--cut states
that induce the same partition $\pi$ recovers the generalized posterior
in \eqref{eq-spatial-partition-posterior}. 

Posterior sampling follows the spanning-tree partition samplers of
\citet{luo2021bayesian} and \citet{huang2025consistent}. Birth moves
delete one currently uncut tree edge and increase the number of clusters
by one. Death moves restore one deleted edge and decrease the number of
clusters by one. Change moves restore one deleted edge and then delete one edge that is uncut in the resulting intermediate tree, leaving the number of clusters unchanged. We additionally
refresh the spanning tree conditional on the current partition while
preserving that partition using an exact sampler derived in \citet{bhattacharyya2026bayesian}. Every Metropolis--Hastings ratio for the
birth, death, and change moves is calculated from
\eqref{eq-collapsed-tree-cut-posterior}. Since the response factor in the MCMC acceptance ratio is a
product of the cluster-specific scores $M_\tau(C)$, factors associated
with clusters whose memberships are unchanged cancel from these ratios.
A proposed partition modification therefore requires evaluation of
$M_\tau(C)$ only for the clusters created, removed, or modified by that
move. The complete proposal distributions, acceptance probabilities, and
tree-hyper construction are provided
in the Supplementary Material.

After a partition $\pi=\{D_1,\ldots,D_k\}$ has been obtained, let
$C_j(\pi)=\{i:s_i\in D_j\}$. Theorem~\ref{thm-bregman-conjugacy-collapse}
gives independent conditional draws
$\mu_j\mid\pi,X_{1:n}\sim
\Pi_{\chi_0+\tau S_{C_j(\pi)},\,
\kappa_0+\tau n_{C_j(\pi)}}$ for $j=1,\ldots,k$.
Posterior summaries of the spatial partition are therefore computed from
the sampled tree--cut states, while posterior summaries of the
cluster-specific representatives are obtained directly from their exact
conditional posterior distributions.

\section{Posterior Contraction Theory}
\label{sec-theory}


Quasi-posteriors formed from general non-likelihood statistical
criterion functions were studied by \citet{chernozhukov2003mcmc}, who
established large-sample concentration and asymptotic normality under
regularity conditions. More recently, \citet{miller2021asymptotic}
provided general sufficient conditions for concentration and asymptotic
normality of generalized posteriors. These general results, however, do
not directly address posterior contraction over a growing
mesh-based spatial partition. In the spatial clustering setting,
\citet{huang2025consistent} developed posterior contraction theory for a
spanning-tree domain partition model under infill asymptotics,
establishing recovery of the number and configuration of spatial
regions together with contraction of the associated regional
parameters and prediction error. Their theoretical analysis is
developed under a Gaussian spatially clustered linear regression
model. Our objective is to extend this contraction theory to the
substantially broader Bregman generalized-Bayes setting considered
here. This extension is
nontrivial, since the Gaussian integrated-likelihood arguments used in
the existing spatial contraction theory are no longer available and
must be replaced by uniform control of the collapsed Bregman
generalized-Bayes score. We show that, under suitable regularity
conditions on the Bregman geometry, the
resulting generalized posterior retains the same mesh-driven
contraction rate \citep{huang2025consistent} for the spatial partition, while also yielding
contraction of the corresponding cluster-specific Bregman
representatives.

\noindent\textbf{Theoretical framework and assumptions.}
We study posterior contraction under infill asymptotics on
$\mathcal D=[0,1]^2$. Let
$\pi_0=\{D_{1,0},\ldots,D_{k_0,0}\}$ denote the true spatial partition, where
$k_0<\infty$ is fixed, and let
$\mathcal B_0=\bigcup_{\ell\neq h}(D_{\ell,0}\cap D_{h,0})$ denote its
boundary set. For $s\in D_{\ell,0}\setminus\mathcal B_0$, write
$\ell_0(s)=\ell$, using any fixed deterministic convention on
$\mathcal B_0$. The transformed responses are $X_i=z(Y_i)\in\mathcal V$. Let $P_0$ denote the true data-generating law and let $E_0$ denote expectation under $P_0$. Using the deterministic boundary convention above, define $\mu_0(s)=\mu_{\ell_0(s),0}$ for all $s\in\mathcal D$. Also define $C_{\ell,0}=\{i:\ell_0(s_i)=\ell\}$ and $\mathcal P_0=\{C_{1,0},\ldots,C_{k_0,0}\}$.

\begin{assumption}[Spatial design and partition prior]
\label{assump-spatial}
\begin{compactenum}
\item The locations $s_1,\ldots,s_n$ are independently sampled from
$\mathcal D$ with density $p_D$ satisfying
$0<p_{\min}\leq p_D(s)\leq p_{\max}<\infty$ for every $s\in\mathcal D$.

\item There exists $C_{\mathcal B}<\infty$ such that
$N(\mathcal B_0,\nu,\|\cdot\|)\leq C_{\mathcal B}\nu^{-1}$
for all sufficiently small $\nu>0$. Moreover, for each $\ell=1,\ldots,k_0$, there exists $c_\ell>0$
such that any $s,s'\in D_{\ell,0}$ can be joined by a path
$P_\ell(s,s')\subseteq D_{\ell,0}$ satisfying
$\operatorname{dist}\{P_\ell(s,s'),\mathcal B_0\}
\geq
\min\{\operatorname{dist}(s,\mathcal B_0),
       \operatorname{dist}(s',\mathcal B_0),c_\ell\}.$

\item The true regions have nonvanishing area,
$\min_{1\leq\ell\leq k_0}|D_{\ell,0}|\geq a_{\min}>0$. For fixed
$\alpha_b>0$ and $0<\alpha_p<\alpha_b$, let
$L_n=\log^{1+\alpha_b}(n)$ and assume
$K\asymp(n/L_n)^{1/2},$ $
\lambda_n=o(1),$ $
\log(\lambda_n^{-1})\asymp n\log^{-\alpha_p}(n),$
so that $K^2L_n\asymp n$.

\item The partition prior is supported on at most a fixed
$k_{\max}$ clusters, where $k_0\leq k_{\max}<\infty$. The spanning-tree
prior is the discrete uniform distribution on $\mathfrak T_K$, as specified
in Section~\ref{sec-methodology}.
\end{compactenum}
\end{assumption}

Assumption~\ref{assump-spatial}(1)--(3) uses the spatial sampling,
boundary-regularity, connectivity, and resolution regime underlying the
spanning-tree contraction theory of \citet{huang2025consistent}. The nonvanishing-area condition rules out asymptotically negligible true
regions and ensures that each region is represented by a nonvanishing
proportion of mesh blocks as the spatial resolution increases. The uniform tree prior in
Assumption~\ref{assump-spatial}(iv) is a specialization of the spanning-tree
prior condition in \citet{huang2025consistent}. In particular, under our prior model,
$p_K(T_1)/p_K(T_2)=1$, so their required tree-prior ratio bound holds. Thus the spatial approximation, block-occupancy, counting,
and good-partition geometric arguments are inherited from their framework,
while the response-side assumption below is different.

\begin{assumption}[Bregman representation and response regularity]
\label{assump-bregman-response}
\begin{compactenum}
\item The inner-product space $\mathcal V$ has fixed intrinsic dimension
$r=\dim(\mathcal V)<\infty$, independent of $n$, and
$X_i\in\operatorname{dom}(\Phi)$ almost surely. The generator
$\Phi:\mathcal V\to(-\infty,\infty]$ is proper, closed, and strictly convex,
and there exists a nonempty open convex set
$\mathcal U\subseteq\operatorname{int}_{\mathcal V}\{\operatorname{dom}(\Phi)\}$
such that $\Phi\in C^2(\mathcal U)$.

\item There exist fixed vectors
$\mu_{1,0},\ldots,\mu_{k_0,0}\in\mathcal U$ such that, conditional on
$s_{1:n}$,
$E_0(X_i\mid s_{1:n})=\mu_{\ell_0(s_i),0}$ for
$i=1,\ldots,n$. Writing
$\mu_0(s)=\mu_{\ell,0}$ for
$s\in D_{\ell,0}\setminus\mathcal B_0$, the distinct regional centers satisfy
$\min_{\ell<h}\|\mu_{\ell,0}-\mu_{h,0}\|\geq\delta_0$ for some
$\delta_0>0$.

\item Let
$\mathcal H_0=\operatorname{conv}\{\mu_{1,0},\ldots,\mu_{k_0,0}\}$.
There exist $\eta,\rho>0$ such that
$\mathcal K=\{u:\operatorname{dist}(u,\mathcal H_0)\leq\eta\}$ and
$\mathcal K_\rho=\{u:\operatorname{dist}(u,\mathcal K)\leq\rho\}$ satisfy
$\mathcal K_\rho\subset\mathcal U$. For constants
$0<h_\Phi\leq\overline h_\Phi<\infty$,
\begin{align*}
h_\Phi I_r\preceq\nabla^2\Phi(u)\preceq\overline h_\Phi I_r \quad
(u\in\mathcal K_\rho),\quad
\inf_{\substack{u\in\mathcal K,\ \mu\in\mathcal U\\
\|\mu-u\|\geq\rho}}
D_\Phi(u,\mu)>0.
\label{eq-bregman-geometry-theory}
\end{align*}

\item Conditional on $s_{1:n}$, the transformed responses
$X_1,\ldots,X_n$ are independent. With
$\epsilon_i=X_i-\mu_0(s_i)$, there exist constants
$\sigma_X^2<\infty$ and $0<b_X<\infty$ such that, for every unit
$v\in\mathcal V$, every $i=1,\ldots,n$, and every
$t\in\mathbb R$ satisfying $|t|\leq b_X^{-1}$,
$E_0\!\left[
\exp\{t\langle v,\epsilon_i\rangle\}
\,\middle|\,s_{1:n}
\right]
\leq
\exp(\sigma_X^2t^2/2).$
\end{compactenum}
\end{assumption}

The local exponential-moment condition in
Assumption~\ref{assump-bregman-response} is weaker than uniform
sub-Gaussianity. In particular, if the centered represented responses
$\epsilon_i=X_i-\mu_0(s_i)$ are uniformly sub-Gaussian conditional on
$s_{1:n}$, so that for some $\sigma_X^2<\infty$ the same quadratic
moment-generating-function bound holds for every $t\in\mathbb R$, then
the stated condition holds automatically on $|t|\leq b_X^{-1}$ for any
fixed $b_X>0$. Thus uniform sub-Gaussianity is a special case of the
response-tail condition used in the contraction theory. Assumption~\ref{assump-bregman-response} replaces the Gaussian response
conditions of \citet{huang2025consistent} with conditions on the
represented responses appropriate for the Bregman formulation. The
parameters $\mu_{\ell,0}$ define the piecewise-constant conditional mean
structure in the representation space. The separation condition requires distinct true regions to have distinct representation-space means. Regions having the same mean are not identifiable under this target and would therefore need to be regarded as a single true region. When
$E_0\{|\Phi(X_i)|\mid s_{1:n}\}<\infty$, the conditional Bregman risk is
also well defined, and
$\mu_{\ell_0(s_i),0}$ is the unique minimizer of
$\mu\mapsto E_0\{D_\Phi(X_i,\mu)\mid s_{1:n}\}$.

The geometric conditions are immediate for the Euclidean quadratic
generator and, for a fixed functional basis, follow from positive
definiteness of its Gram matrix. For probability-vector responses they
hold on compact subsets of the simplex interior, and boundedness of the
simplex gives the exponential-moment condition in
Assumption~\ref{assump-bregman-response}. For positive-definite matrix
responses, the LogDet geometry satisfies the curvature conditions on
compact subsets of $\mathbb S_{++}^q$. If the matrix responses are conditionally Wishart with fixed matrix dimension $q$, fixed degrees of freedom $\nu>q-1$, and regional mean matrices lying in the compact set specified in Assumption~\ref{assump-bregman-response}, then the responses are almost surely positive definite and their centered directional moment-generating functions satisfy the required quadratic bound on a common neighborhood of zero. The generalized Kullback--Leibler geometry
for nonnegative counts likewise satisfies the geometric conditions on
compact subsets of $(0,\infty)^d$. In particular, conditionally Poisson
responses with regional mean vectors in this compact set satisfy the same
local exponential-moment condition. These distributional specifications
are sufficient conditions for the contraction theory and are not imposed
by the generalized-Bayes model itself.

\begin{assumption}[Matching prior and learning rate]
\label{assump-representative-prior}
\begin{compactenum}
\item The generalized-Bayes learning rate is fixed and satisfies
$0<\tau<\infty$.

\item The hyperparameters $\chi_0\in\mathcal V$ and
$\kappa_0\in\mathbb R$ are fixed independently of $n$ and satisfy
$0<Z_\Phi(\chi_0,\kappa_0)<\infty$. Conditional on any supported partition,
the cluster-specific representatives are independent with common prior
$\Pi_{\chi_0,\kappa_0}$ defined in \eqref{eq-bregman-prior-family}.
\end{compactenum}
\end{assumption}

Assumption~\ref{assump-representative-prior} concerns only the new
generalized-Bayes response component. Under
Assumptions~\ref{assump-bregman-response} and
\ref{assump-representative-prior}, the matching-prior density is continuous,
strictly positive, and bounded above and away from zero on the compact set
$\mathcal K_\rho$. No additional local prior-mass assumption is therefore
needed. In contrast, \citet{huang2025consistent} use their Gaussian
response specification to obtain the response-side likelihood-ratio bounds;
here that role is played by the collapsed Bregman score, while their spatial
prior and combinatorial arguments are retained.

For any fixed $\alpha_0>0$, define
\begin{align*}
q_n=\lfloor K\log^{\alpha_0}(n)\rfloor,\qquad
\rho_n=\frac{\log^{\alpha_0}(n)}{K}
\asymp n^{-1/2}\log^{\alpha_0+(1+\alpha_b)/2}(n),\qquad
\zeta_n=q_nL_n\asymp n\rho_n.
\end{align*}
In particular,
$n\rho_n^2\asymp\log^{2\alpha_0+1+\alpha_b}(n)\to\infty$.

\noindent\textbf{Collapsed-score control.}
Recall from \eqref{eq-collapsed-risk-decomposition} that
$M_\tau(C)=\exp\{-\tau W(C)\}J_{|C|}(\bar X_C)$ whenever
$\bar X_C\in\mathcal U$. The following uniform bound is the response-side
reduction used throughout the contraction proof.

\begin{proposition}[Uniform integrated-center control]
\label{prop-uniform-integrated-center}
Under Assumptions~\ref{assump-bregman-response} and
\ref{assump-representative-prior}, there exist constants
$0<c_J\leq C_J<\infty$ such that, uniformly over $m\geq1$ and
$u\in\mathcal K$,
$c_Jm^{-r/2}\leq J_m(u)\leq C_Jm^{-r/2}$.
Consequently, if
$\mathcal P=\{C_1,\ldots,C_k\}$ and
$\widetilde{\mathcal P}=\{A_1,\ldots,A_h\}$ have nonempty cells,
$k,h\leq k_{\max}$, and all their empirical cell centroids belong to
$\mathcal K$, then, uniformly over such pairs,
\begin{equation}
\log\frac{M_\tau(\mathcal P)}
{M_\tau(\widetilde{\mathcal P})}
=
-\tau\{W(\mathcal P)-W(\widetilde{\mathcal P})\}
+O(\log n).
\label{eq-uniform-score-ratio}
\end{equation}
\end{proposition}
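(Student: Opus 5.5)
The argument splits the integral $J_m(u)$ into a local part near $u$ and a tail part away from $u$. Fix $u\in\mathcal K$ and write $q_0=d\Pi_{\chi_0,\kappa_0}/d\lambda_{\mathcal V}$. By Assumption~\ref{assump-representative-prior}, $q_0$ is continuous and strictly positive on $\mathcal U$. Hence on the compact set $\mathcal K_\rho$ it satisfies $0<\underline q\le q_0\le\overline q<\infty$.

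\emph{Local part.} For $\mu$ in the ball $B(u,\rho)\subset\mathcal K_\rho$, the segment from $\mu$ to $u$ stays in the convex ball. Taylor's theorem with integral remainder and the curvature bounds of Assumption~\ref{assump-bregman-response}(3) then give
\[
\tfrac{h_\Phi}{2}\|u-\mu\|^2\le D_\Phi(u,\mu)\le\tfrac{\overline h_\Phi}{2}\|u-\mu\|^2 .
\]
This sandwiches the local integral between
\[
\underline q\int_{B(u,\rho)}e^{-\tau m\overline h_\Phi\|u-\mu\|^2/2}\,d\mu
\quad\text{and}\quad
\overline q\int_{\mathcal V}e^{-\tau m h_\Phi\|u-\mu\|^2/2}\,d\mu
=\overline q\,\bigl(2\pi/(\tau m h_\Phi)\bigr)^{r/2}.
\]
Substituting $v=\sqrt m(\mu-u)$, the lower integral equals $m^{-r/2}$ times a Gaussian integral over $B(0,\sqrt m\rho)\supseteq B(0,\rho)$. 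That Gaussian integral is bounded below by a positive constant independent of $m\ge1$ and $u$. This yields the lower bound $c_Jm^{-r/2}$, since the full integrand is nonnegative and the tail can be dropped.

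\emph{Tail part.} This is the main obstacle, because $\mathcal U$ may be unbounded and $q_0$ is not controlled off $\mathcal K_\rho$. The second condition in Assumption~\ref{assump-bregman-response}(3) gives $\delta:=\inf_{u\in\mathcal K,\|\mu-u\|\ge\rho}D_\Phi(u,\mu)>0$. Then for $m\ge1$,
\[
\int_{\|\mu-u\|\ge\rho}e^{-\tau mD_\Phi(u,\mu)}\,\Pi_{\chi_0,\kappa_0}(d\mu)\le e^{-\tau m\delta}\,\Pi_{\chi_0,\kappa_0}(\mathcal U)=e^{-\tau m\delta},
\]
because the prior is a probability measure. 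It is important to use the total mass $1$ here rather than any density bound, which the assumptions do not provide. Since $\sup_{m\ge1}m^{r/2}e^{-\tau m\delta}<\infty$, the tail is $O(m^{-r/2})$ uniformly in $u$, which completes $J_m(u)\le C_Jm^{-r/2}$.

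\emph{Score ratio.} Every cell centroid lies in $\mathcal K\subset\mathcal U$, so \eqref{eq-partition-score-decomposition} applies to both partitions. Subtracting gives
\[
\log\frac{M_\tau(\mathcal P)}{M_\tau(\widetilde{\mathcal P})}=-\tau\{W(\mathcal P)-W(\widetilde{\mathcal P})\}+\sum_{j}\log J_{n_{C_j}}(\bar X_{C_j})-\sum_{l}\log J_{n_{A_l}}(\bar X_{A_l}).
\]
Each cell has size $1\le n_C\le n$. By the two-sided bound, each $|\log J|\le \max(|\log c_J|,|\log C_J|)+\tfrac r2\log n$. There are at most $2k_{\max}$ such terms, with $k_{\max}$ fixed, so their total is $O(\log n)$ uniformly over admissible pairs, which gives \eqref{eq-uniform-score-ratio}.
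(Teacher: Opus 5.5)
Your proof is correct and follows the paper's argument essentially step for step. The shared steps are: two-sided prior-density bounds on the compact set $\mathcal K_\rho$; the Taylor-remainder quadratic sandwich for $D_\Phi(u,\mu)$ on $B(u,\rho)$; a Gaussian lower bound after rescaling $v=\sqrt m(\mu-u)$; a local/tail split for the upper bound, with the tail bounded by $e^{-\tau m\delta}$ using only that the prior has total mass one; and a final sum of at most $2k_{\max}$ terms, each $O(\log n)$. The differences are cosmetic: you integrate the lower bound over $B(u,\rho)$ rather than $B(u,\rho/\sqrt m)$, and continuity of $q_0$ actually comes from $\Phi\in C^2(\mathcal U)$ in Assumption~\ref{assump-bregman-response}, with Assumption~\ref{assump-representative-prior} supplying only finiteness and positivity of the normalizer.
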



The factor $J_m(u)$ reflects the uncertainty associated with the
cluster-specific representative after the minimum empirical Bregman risk
has been attained at $u$. Its magnitude depends on the local Bregman
geometry around $u$, the prior distribution of the representative, and
the dimension $r$ of the representation space. The first conclusion of
Proposition~\ref{prop-uniform-integrated-center} shows that this
integration contributes a term of order $-(r/2)\log m$ to the
cluster-specific log score. The second conclusion gives its consequence
for partition comparison. Thus Proposition~\ref{prop-uniform-integrated-center} separates the collapsed log-score comparison into the empirical Bregman-risk difference and an integration term that is uniformly $O(\log n)$ over the stated class of partitions. The contraction argument below combines this decomposition with class-specific bounds on the empirical Bregman-risk differences and the spatial partition prior.

\noindent\textbf{Partition contraction.}
For domain partitions
$\pi=\{D_1,\ldots,D_k\}$ and
$\widetilde\pi=\{\widetilde D_1,\ldots,\widetilde D_{\widetilde k}\}$,
and sampled-location partitions
$\mathcal P=\{C_1,\ldots,C_k\}$ and
$\widetilde{\mathcal P}
=\{\widetilde C_1,\ldots,\widetilde C_{\widetilde k}\}$, define
\begin{equation}
\begin{aligned}
\epsilon_1(\pi,\widetilde\pi)
&=1-\sum_{j=1}^k\max_{\ell\leq\widetilde k}
|D_j\cap\widetilde D_\ell|,
&
\epsilon_{n,1}(\mathcal P,\widetilde{\mathcal P})
&=1-\frac1n\sum_{j=1}^k\max_{\ell\leq\widetilde k}
|C_j\cap\widetilde C_\ell|,
\end{aligned}
\label{eq-partition-directed-distances}
\end{equation}
and let
$\epsilon(\pi,\widetilde\pi)
=\epsilon_1(\pi,\widetilde\pi)+\epsilon_1(\widetilde\pi,\pi)$ and
$\epsilon_n(\mathcal P,\widetilde{\mathcal P})
=\epsilon_{n,1}(\mathcal P,\widetilde{\mathcal P})
+\epsilon_{n,1}(\widetilde{\mathcal P},\mathcal P)$.
These label-invariant partition metrics are those used by
\citet{huang2025consistent}.

Let
$\pi_{0,K}^\star
=\{D_{1,0,K}^\star,\ldots,D_{k_0,0,K}^\star\}$
denote the supported comparator constructed in the Supplement. For a
supported candidate
$\pi^\star=\{D_1^\star,\ldots,D_k^\star\}$, let
$\mathcal V_j$ and $\mathcal V_{\ell,0,K}^\star$ be its candidate and
comparator block sets, respectively, set
$b_{j\ell}=|\mathcal V_j\cap\mathcal V_{\ell,0,K}^\star|$, and choose
$m_{\pi^\star}(j)\in\arg\max_{\ell\leq k_0}b_{j\ell}$. Define
$q_K(\pi^\star)=K^2-\sum_{j=1}^k b_{j,m_{\pi^\star}(j)}$ and
\begin{equation}
\Pi_{5,K}^\star=
\left\{
\pi^\star\in\mathfrak P_K:
|\pi^\star|=k_0,\ 
\{m_{\pi^\star}(j):1\leq j\leq k_0\}=\{1,\ldots,k_0\},\
q_K(\pi^\star)<q_n
\right\}.
\label{eq-good-partition-class}
\end{equation}


\begin{theorem}[Posterior partition contraction]
\label{thm-partition-contraction}
Under Assumptions~\ref{assump-spatial}--\ref{assump-representative-prior},
for every fixed $\alpha_0>0$ there exist constants $C,c,M>0$, independent of
$n$, such that, with $P_0$-probability tending to one,
$\Pi_{\tau,n}(\Pi_{5,K}^\star\mid X_{1:n},s_{1:n})
\geq 1-Ce^{-c\zeta_n}$ and
$\Pi_{\tau,n}(|\pi^\star|=k_0\mid X_{1:n},s_{1:n})
\geq 1-Ce^{-c\zeta_n}$. Moreover,
\begin{align}
\Pi_{\tau,n}\!\left\{
\epsilon(\pi^\star,\pi_0)\leq M\rho_n
\,\middle|\,X_{1:n},s_{1:n}
\right\}
&\geq 1-Ce^{-c\zeta_n},\\
\Pi_{\tau,n}\!\left\{
\epsilon_n\{\mathcal P(\pi^\star),\mathcal P_0\}\leq M\rho_n
\,\middle|\,X_{1:n},s_{1:n}
\right\}
&\geq 1-Ce^{-c\zeta_n}.
\label{eq-main-partition-contraction}
\end{align}
\end{theorem}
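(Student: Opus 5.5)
The argument follows the architecture of \citet{huang2025consistent}. Their response-side Gaussian integrated-likelihood bounds are replaced by Proposition~\ref{prop-uniform-integrated-center} together with concentration of empirical Bregman risks. The basic tool is a ratio bound against the comparator $\pi_{0,K}^\star$. For any set $\mathcal A\subset\mathfrak P_K$,
\[
\Pi_{\tau,n}(\mathcal A\mid X_{1:n},s_{1:n})\le \sum_{\pi^\star\in\mathcal A}\frac{\Pi_K(\pi^\star)}{\Pi_K(\pi_{0,K}^\star)}\,\frac{M_\tau\{\mathcal P(\pi^\star)\}}{M_\tau\{\mathcal P(\pi_{0,K}^\star)\}}.
\]
We apply this to the complement of $\Pi_{5,K}^\star$. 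That complement is split into three classes: (i) $|\pi^\star|<k_0$ (underfitting); (ii) $|\pi^\star|>k_0$ (overfitting); (iii) $|\pi^\star|=k_0$ but either the label map $m_{\pi^\star}$ is not surjective or $q_K(\pi^\star)\ge q_n$.

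The first task is a high-probability event $\Omega_n$ on which everything holds uniformly over supported partitions with at most $k_{\max}$ cells. By Assumption~\ref{assump-spatial}(1) and a Chernoff bound, every block holds of order $L_n$ points. The comparator misclassifies only $O(K)$ blocks adjacent to $\mathcal B_0$, by Assumption~\ref{assump-spatial}(2). For the centroids, we use the local MGF condition of Assumption~\ref{assump-bregman-response}(4) with a Bernstein-type bound. A union bound over $\exp\{O(K\log K)\}$ tree-cut-induced cluster sets then gives, for every union $C$ of blocks with $|C|\gtrsim L_n$,
\[
\Big\|\bar X_C-|C|^{-1}\sum_{i\in C}\mu_0(s_i)\Big\|\lesssim \sqrt{K\log n/|C|}.
\]
The mixture means lie in $\mathcal H_0$, so on $\Omega_n$ all centroids of large cells lie in $\mathcal K$. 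Proposition~\ref{prop-uniform-integrated-center} then applies, and each log score ratio equals $-\tau\{W(\mathcal P)-W(\mathcal P_0^\star)\}+O(\log n)$.

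Next we control the risk differences. By the Bregman decomposition, merging or refining cells changes $W$ by $n_C D_\Phi(\bar X_C,\bar X_{C\cup C'})$-type terms. By the Hessian bounds of Assumption~\ref{assump-bregman-response}(3), these are comparable to squared Euclidean distances on $\mathcal K_\rho$. The cases then go as follows.
\begin{compactenum}
\item[(i)] An underfitting partition must merge a positive fraction of two true regions; the area condition gives this. Since $\|\mu_{\ell,0}-\mu_{h,0}\|\ge\delta_0$, this costs $W(\mathcal P)-W(\mathcal P_0^\star)\gtrsim n$. The prior ratio is at most $\lambda_n^{-k_0}\binom{K^2-1}{k_0-1}$ times the tree-count ratio. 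Its log is $O(n\log^{-\alpha_p}n)=o(n)$, so the bound is $e^{-cn}$.
\item[(ii)] For overfitting, refinement can decrease $W$ by at most the noise fitted by extra cells, roughly $K\log n+\zeta_n$ after absorbing $O(K)$ boundary blocks. Each extra cell costs a factor $\lambda_n\approx\exp\{-n\log^{-\alpha_p}n\}$, which dominates $\zeta_n$. Partitions with more cells but $q_K$ large are handled as in case (iii).
\item[(iii)] If $q_K(\pi^\star)\ge q_n$, then at least $q_n$ blocks, about $q_nL_n=\zeta_n$ points, are assigned to a cell whose centroid is at least $\delta_0/2$ from their true center. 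This raises $W$ by $\gtrsim\zeta_n$, beating the noise term of order $\sqrt{\zeta_n K\log n}+K\log n=o(\zeta_n)$. Here we use $\log^{\alpha_0}n$ slack in $q_n$.
\end{compactenum}
The counting of candidates with a given $q_K$ and the prior ratio bounds $\exp\{O(q\log K)\}$ are taken from \citet{huang2025consistent}. Summing gives $Ce^{-c\zeta_n}$, and the two first claims follow.

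The metric claims follow deterministically on $\Omega_n$. For $\pi^\star\in\Pi_{5,K}^\star$, the misassigned area is at most $(q_n+O(K))/K^2\lesssim\rho_n$, and $\epsilon(\pi_{0,K}^\star,\pi_0)\lesssim K^{-1}$. Surjectivity of $m_{\pi^\star}$ makes the directed matching distances agree up to constants. For $\epsilon_n$, block occupancy converts block counts into point counts, giving $\lesssim (q_nL_n)/n\asymp\rho_n$. We expect the main obstacle to be case (iii) together with the uniform centroid step. The risk gain must be shown to dominate the uniform noise fluctuation over exponentially many partitions using only the local (sub-exponential) MGF. This needs cells small enough that centroids might leave $\mathcal K$; we handle these by noting that the cells in question have $\gtrsim L_n$ points, and otherwise by bounding their contribution through $\inf D_\Phi>0$ outside $\rho$-balls.
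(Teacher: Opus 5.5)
Your architecture matches the paper's: posterior odds against $\pi_{0,K}^\star$, the underfit/overfit/bad-matching split, the reduction through Proposition~\ref{prop-uniform-integrated-center}, the counting and tree-ratio bounds of \citet{huang2025consistent}, and deterministic geometry on $\Pi_{5,K}^\star$. The gap is in the uniform centroid step that everything else rests on.

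\emph{The count is wrong.} The number of tree-cut-induced cells is not $\exp\{O(K\log K)\}$. Any split of the mesh into two connected block sets is supported, since you can join spanning trees of the two pieces by one mesh edge. Two interleaved combs, whose roughly $K^2/2$ ``free'' blocks can each go to either side, already give $\exp(cK^2)$ distinct cells. So the bound $\sqrt{K\log n/|C|}$ is not established.

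\emph{The bound would not suffice even if true.} For cells of a few blocks ($|C|\asymp L_n$) it equals $\sqrt{K\log n/L_n}\to\infty$, so it cannot place their centroids in $\mathcal K$. Yet Proposition~\ref{prop-uniform-integrated-center} and the Hessian comparisons in your merge/refine arguments need every cell centroid involved to lie in $\mathcal K$. Overfitted and wrong-boundary partitions routinely contain such small cells. Your fallback through $\inf D_\Phi>0$ does not repair this, because for $\bar X_C\notin\mathcal U$ the factorization \eqref{eq-collapsed-risk-decomposition} is not even available.

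\emph{The missing idea.} For any union of blocks, the deviation of its empirical centroid from the average of $\mu_0(s_i)$ over that union is a convex combination of the per-block residual averages $R_m/N_m$. Here $R_m$ is the residual sum in block $m$ and $N_m\gtrsim L_n$ its count. A Chernoff bound with a union over only the $K^2$ blocks gives $\|R_m\|/N_m\le a_n\asymp\log^{-\alpha_b/2}(n)$ for all $m$. This yields the same bound simultaneously for all $2^{K^2}$ block unions and puts every relevant centroid in $\mathcal K$ at once.

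Case (ii) also rests on a false intermediate bound. The gain from refining the comparator by an overfitted partition is not $O(K\log n+\zeta_n)$. Split one comparator region into two interleaved comb-shaped connected pieces that sort the free blocks by the sign of a fixed residual projection. This can gain order $K^2=n/L_n\gg\zeta_n$. That partition has $q_K(\pi^\star)=0$, so your case (iii) does not catch it.

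What the centroid control actually gives is
$$W(\mathcal P_{0,K}^\star)-W\{\mathcal P(\pi^\star)\}\lesssim na_n^2+KL_n\asymp n\log^{-\alpha_b}(n).$$
Case (ii) closes only because $\log(\lambda_n^{-1})\asymp n\log^{-\alpha_p}(n)$ dominates this bound, as well as the $e^{CK^2}$ count of supported partitions. The first comparison is exactly where $\alpha_p<\alpha_b$ is needed, and your argument never uses that assumption.

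Likewise, your case (iii) noise term $\sqrt{\zeta_nK\log n}$ inherits the unproven centroid bound. The paper bounds the comparator-refinement gain by $C(q\log n+KL_n)$, centring each comparator region at its pure diagonal block collection rather than at the true center. It also needs the loss increase to be of order $qL_n$, not merely $\zeta_n$, to beat the $e^{C_q q\log K}$ count when summing over $q\ge q_n$. It gets this from one off-diagonal overlap of at least $q/\{k_0(k_0-1)\}$ blocks lying in a cluster whose diagonal overlap dominates.
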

The detailed proof of Theorem \ref{thm-partition-contraction} is given in the Supplementary Material. The partition-contraction rate in
Theorem~\ref{thm-partition-contraction} coincides with that of
\citet{huang2025consistent}. In both settings, the rate is determined by
the spatial resolution of the mesh and the complexity of distinguishing
nearby contiguous partitions. The good partition class permits at most
$q_n=K\log^{\alpha_0}(n)$ mismatched blocks, and each block has area of
order $K^{-2}$. The corresponding domain discrepancy is therefore of
order
$\rho_n=q_n/K^2=\log^{\alpha_0}(n)/K$. With
$K\asymp\{n/\log^{1+\alpha_b}(n)\}^{1/2}$, this gives
$\rho_n\asymp
n^{-1/2}\log^{\alpha_0+(1+\alpha_b)/2}(n)$, which is exactly the
partition-contraction rate obtained by \citet{huang2025consistent}.
Replacing their Gaussian response model by the present Bregman
generalized-Bayes construction does not alter this spatial scale. Under
the curvature and separation conditions, differences between distinct
regional centers remain detectable through the Bregman criterion, while
Proposition~\ref{prop-uniform-integrated-center} shows that integrating
the representatives changes partition log-score comparisons only by
$O(\log n)$. Hence the response formulation changes the criterion used
to distinguish candidate partitions, but not the mesh-driven balance
between spatial approximation and partition complexity that determines
the contraction rate.

\noindent\textbf{Representative and field contraction.}
For a joint posterior draw
$\{\pi^\star,\mu_{1:|\pi^\star|}\}$, define
$\mu_{\pi^\star}(s)=\sum_{j=1}^{|\pi^\star|}
\mu_j\mathbf 1\{s\in D_j^\star\}$. General concentration-rate theory for Gibbs posteriors based directly
on loss functions is developed by \citet{syring2023gibbs} under
conditions controlling the excess loss and its stochastic fluctuations.
The result below has an additional spatial component. The
cluster-specific representatives are indexed by an unknown partition,
so their contraction is established jointly with concentration of the
partition on $\Pi_{5,K}^\star$, while contraction of the center field
also accounts for the spatial region on which the sampled and true
partitions disagree.


\begin{theorem}[Representative and center-field contraction]
\label{thm-representative-field-contraction}
Under Assumptions~\ref{assump-spatial}--\ref{assump-representative-prior},
fix $\alpha_0>0$. For any deterministic sequences
$M_n,M_n',M_n''\to\infty$, with $P_0$-probability tending to one,
\begin{align}
\Pi_{\tau,n}\!\left(
\pi^\star\in\Pi_{5,K}^\star,\,
\max_{1\le j\le k_0}
\|\mu_j-\mu_{m_{\pi^\star}(j),0}\|>M_n\rho_n
\,\middle|\,X_{1:n},s_{1:n}
\right)
&\longrightarrow 0,
\label{eq-representative-contraction}
\\
\Pi_{\tau,n}\!\left(
\int_{\mathcal D}
\|\mu_{\pi^\star}(s)-\mu_0(s)\|^2p_D(s)\,ds
>M_n'\rho_n
\,\middle|\,X_{1:n},s_{1:n}
\right)
&\longrightarrow 0,
\label{eq-field-l2-contraction}
\\
\Pi_{\tau,n}\!\left(
\int_{\mathcal D}
D_\Phi\{\mu_0(s),\mu_{\pi^\star}(s)\}p_D(s)\,ds
>M_n''\rho_n
\,\middle|\,X_{1:n},s_{1:n}
\right)
&\longrightarrow 0.
\label{eq-field-bregman-contraction}
\end{align}
\end{theorem}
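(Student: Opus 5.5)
We prove Theorem~\ref{thm-representative-field-contraction} by conditioning on the partition. Theorem~\ref{thm-partition-contraction} already gives $\Pi_{\tau,n}(\Pi_{5,K}^\star\mid X_{1:n},s_{1:n})\geq 1-Ce^{-c\zeta_n}$. Given $\pi^\star$, Theorem~\ref{thm-bregman-conjugacy-collapse} says the representatives are conditionally independent with laws $\Pi_{\chi_0+\tau S_{C_j},\kappa_0+\tau n_{C_j}}$, where $C_j=C_j(\pi^\star)$. It therefore suffices to bound, uniformly over $\pi^\star\in\Pi_{5,K}^\star$ and on a high-probability event $\Omega_n$ for the data, the conditional probability that $\|\mu_j-\mu_{m_{\pi^\star}(j),0}\|>M_n\rho_n$. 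We then average over the partition posterior.

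\emph{Step 1: empirical centroids.} Fix $\pi^\star\in\Pi_{5,K}^\star$ and write $\ell=m_{\pi^\star}(j)$. At most $q_n$ blocks of $D_j^\star$ are mismatched. By the block-occupancy arguments of \citet{huang2025consistent}, on $\Omega_n$ every block contains $\asymp L_n$ points, every cell has $n_{C_j}\asymp n$, and the mismatched observations number $O(q_nL_n)=O(n\rho_n)$. Decompose
\[
\bar X_{C_j}-\mu_{\ell,0}=n_{C_j}^{-1}\sum_{i\in C_j}\epsilon_i+n_{C_j}^{-1}\sum_{i\in C_j}\{\mu_0(s_i)-\mu_{\ell,0}\}.
\]
The second term is $O(\rho_n)$. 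It is deterministic given the locations because the centers are bounded. For the first term, the local MGF condition in Assumption~\ref{assump-bregman-response}(4) gives Bernstein-type bounds. A union bound over the at most $\exp\{O(q_n\log n)\}$ candidate cell configurations (the counting bound from the supplement) gives $\sup\|n_{C_j}^{-1}\sum_{i\in C_j}\epsilon_i\|\lesssim\sqrt{q_n\log n/n}$. This is $o(\rho_n)$ up to logarithms; if it is not, I would take the union over block sets, which only makes the deviation $\sqrt{\rho_n\log n/n}$-type, still $\lesssim\rho_n$. Hence $\|\bar X_{C_j}-\mu_{\ell,0}\|\leq C'\rho_n$, and in particular $\bar X_{C_j}\in\mathcal K$ for large $n$.

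\emph{Step 2: conditional concentration of $\mu_j$.} The conditional density is proportional to $\exp\{-\tau n_{C_j}D_\Phi(\bar X_{C_j},\mu)\}q_0(\mu)$, by the Bregman decomposition. I split the event $\|\mu-\bar X_{C_j}\|>t$ into two ranges.
\begin{itemize}
\item \emph{Far range} ($\|\mu-\bar X_{C_j}\|\geq\rho$). The separation condition in Assumption~\ref{assump-bregman-response}(3) bounds the numerator by $e^{-\tau n\delta'}$.
\item \emph{Local range.} The curvature bound gives $D_\Phi\geq (h_\Phi/2)\|\mu-\bar X_{C_j}\|^2$, so the numerator is bounded by a Gaussian tail of order $n^{-r/2}e^{-cnt^2}$ times $\sup_{\mathcal K_\rho}q_0$.
\end{itemize}
The denominator is $J_{n_{C_j}}(\bar X_{C_j})\geq c_Jn^{-r/2}$ by Proposition~\ref{prop-uniform-integrated-center}. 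Taking $t=M_n\rho_n/2$ with $n\rho_n^2\to\infty$ makes the ratio tend to zero uniformly. Combined with Step 1 and a union over $j\leq k_0$, this proves \eqref{eq-representative-contraction}.

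\emph{Step 3: field contraction.} Split $\mathcal D$ into the matched region $\bigcup_j(D_j^\star\cap D_{m(j),0})$ and the mismatch region, whose $p_D$-mass is $\leq p_{\max}\epsilon(\pi^\star,\pi_0)\lesssim\rho_n$ by Theorem~\ref{thm-partition-contraction}. On the matched region the integrand is $\leq M_n^2\rho_n^2=o(M_n'\rho_n)$ for a suitably slow $M_n$. On the mismatch region, $\|\mu_j-\mu_0(s)\|$ is bounded on the high-probability event, since $\mu_j$ lies near $\mathcal H_0$, so that part contributes $O(\rho_n)$. Choosing $M_n$ to grow slowly relative to $M_n'$ gives \eqref{eq-field-l2-contraction}. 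Partitions outside $\Pi_{5,K}^\star$ carry posterior mass $\leq Ce^{-c\zeta_n}$. For \eqref{eq-field-bregman-contraction}, on the event that all $\mu_j\in\mathcal K_\rho$ we use $D_\Phi(\mu_0,\mu)\leq(\overline h_\Phi/2)\|\mu_0-\mu\|^2$, since the segment lies in $\mathcal K_\rho$ by convexity, and reduce to the $L^2$ bound.

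\emph{Main obstacle.} The hardest step is the uniform centroid control in Step 1. The cells are data-dependent, so the noise average must be controlled uniformly over the combinatorially many cells in $\Pi_{5,K}^\star$, using only the local exponential-moment condition. I would handle this by writing each cell as the true cell plus or minus at most $q_n$ blocks, and controlling the two pieces separately: the true-cell noise at rate $n^{-1/2}$, and the block-wise sums via a maximal inequality over the $\binom{K^2}{q_n}$ block subsets.
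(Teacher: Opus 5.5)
Your architecture matches the paper's. You control each good-class centroid as bias plus residual. You bound the conditional representative posterior by a local Gaussian tail and a far-region term, both divided by $J_m(u)\ge c_Jm^{-r/2}$. You then split the field loss into a matched region and a mismatch region of area $O(\rho_n)$. Steps 2 and 3 work as you describe. Your reduction of the Bregman field loss to the $L^2$ loss is valid and a little tidier than the paper's matched/mismatch split with the constant $B_\Phi$: $\mathcal K_\rho$ is convex, so $D_\Phi(u,v)\le(\overline h_\Phi/2)\|u-v\|^2$ holds for all $u,v\in\mathcal K_\rho$.

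The step that fails as first written is the uniform residual bound in Step 1.
\begin{itemize}
\item A union bound over the $\exp\{O(q_n\log n)\}$ cell configurations yields only $\sqrt{q_n\log n/n}$. Since $q_n\asymp n^{1/2}$ up to logarithms, this is of order $n^{-1/4}$ up to logarithms, which is polynomially larger than $\rho_n\asymp n^{-1/2}$ up to logarithms. So it does not give $\|\bar X_{C_j}-\mu_{\ell,0}\|\le C'\rho_n$.
\item Your fallback figure $\sqrt{\rho_n\log n/n}$ is also not what a union over block sets produces.
\item What rescues the argument is the decomposition in your final paragraph, which is exactly the paper's. Write $C_j(\pi^\star)$ as the comparator cell $C^\star_{m_{\pi^\star}(j),0,K}$ with at most $2q_n$ blocks added or removed.
\item The $k_0$ comparator cells are fixed given $s_{1:n}$, so their residual averages need no union bound. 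This is the paper's event $\mathcal H_n$, at level $C_Hn\rho_n$.
\item The symmetric-difference residual is $O(q_n\sqrt{L_n\log n})$, which gives an average of order $\rho_n(\log n/L_n)^{1/2}=\rho_n\log^{-\alpha_b/2}(n)$. The margin over $\rho_n$ is only this logarithmic factor, which is where $L_n=\log^{1+\alpha_b}(n)$ matters.
\item The paper gets this bound without any maximal inequality over $\binom{K^2}{q_n}$ block subsets. On the blockwise event $\mathcal F_n$, which needs a union over only $K^2$ blocks, each block residual is at most $(M_FN_{K,m}\log n)^{1/2}$. The triangle inequality over the difference blocks then suffices.
\end{itemize}

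A smaller point concerns Step 3. The labelled mismatch set $\bigcup_j(D_j^\star\setminus D_{m_{\pi^\star}(j),0})$ uses the comparator-based match $m_{\pi^\star}(j)$. By contrast, $\epsilon_1(\pi^\star,\pi_0)$ uses $\arg\max_\ell|D_j^\star\cap D_{\ell,0}|$. So bounding the mismatch area by $\epsilon(\pi^\star,\pi_0)$ needs one of two things:
\begin{itemize}
\item a check that the two matchings agree on $\Pi_{5,K}^\star$ for large $n$, which they do because $b_{jj}\ge v_0K^2/2$ while off-diagonal overlaps and comparator errors involve only $O(q_n)$ blocks; or
\item the paper's direct count, which bounds the mismatch area by $(q_n+C_{\mathrm{app}}K)/K^2$.
\end{itemize}
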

The detailed proof of Theorem \ref{thm-representative-field-contraction} is given in the Supplementary Material. On $\Pi_{5,K}^\star$, the matching map $m_{\pi^\star}$ is a bijection, so
$\{\mu_{m_{\pi^\star}(j),0}:1\le j\le k_0\}
=\{\mu_{\ell,0}:1\le\ell\le k_0\}$ deterministically. The substantive representative result is
\eqref{eq-representative-contraction}.
On the good partition class, candidate-region empirical centroids are within $O(\rho_n)$ of their matched true centers, while the conditional generalized posterior concentrates around each empirical centroid under the curvature, prior, and local exponential-moment conditions stated above. For the center-field losses, the proof combines concentration of the cluster-specific representatives with the fact that the spatial mismatch set has area $O(\rho_n)$. The contributions from both the correctly matched portion of the domain and the spatial mismatch region are therefore controlled at order $O(\rho_n)$, yielding the integrated squared and Bregman field-contraction results in \eqref{eq-field-l2-contraction} and \eqref{eq-field-bregman-contraction}. This adds the representative and field-contraction layer to the partition-recovery machinery of \citet{huang2025consistent}.

\section{Learning-rate calibration}
\label{sec:tau-calibration}

The learning rate $\tau$ determines the scale of the generalized likelihood relative to the partition prior. Since this scale is not supplied by a parametric response likelihood, we calibrate $\tau$ by matching the sampling dispersion of the empirical Bregman representative with the dispersion implied by the local quadratic generalized likelihood.

Consider first $m$ independent represented responses with $\mathbb{E}_0(X_i)=\mu_0$ and $\operatorname{Var}_0(X_i)=\Omega$, and write $H_0=\nabla^2\Phi(\mu_0)$ and $\bar X=m^{-1}\sum_{i=1}^mX_i$. By the Bregman risk decomposition, the representative-dependent part of the generalized likelihood is proportional to $\exp\{-\tau mD_\Phi(\bar X,\mu)\}$. Since $D_\Phi(\bar X,\mu)=\frac{1}{2}(\mu-\bar X)^\top\nabla^2\Phi(\bar X)(\mu-\bar X)+o(\|\mu-\bar X\|^2)$ and $\bar X\to\mu_0$ in probability, its local covariance is asymptotically $(\tau m)^{-1}H_0^{-1}$. In contrast, the sampling covariance of $\bar X$ is $\Omega/m$. A scalar $\tau$ cannot generally equate these covariance matrices in every direction, so we match their total dispersion in the local Bregman metric $H_0$. Specifically, $\mathbb{E}_0[m(\bar X-\mu_0)^\top H_0(\bar X-\mu_0)]=\operatorname{tr}(H_0\Omega)$, whereas under the local Gaussian approximation the corresponding expected quadratic dispersion is $r/\tau$, where $r=\dim(\mathcal V)$. Equating the two gives the local population calibration $\tau^\star=r/\operatorname{tr}(H_0\Omega)$.

This criterion is the Bregman specialization of the scalar magnitude adjustment used for composite and criterion-based likelihoods \citep{pauli2011bayesian,ribatet2012bayesian}. For a general loss $\ell(\theta,X)$, let $A=\nabla_\theta^2\mathbb{E}_0\{\ell(\theta,X)\}|_{\theta=\theta_0}$ and $B=\operatorname{Var}_0\{\nabla_\theta\ell(\theta_0,X)\}$. The empirical risk minimizer has sandwich covariance $n^{-1}A^{-1}BA^{-1}$, whereas the local generalized-posterior covariance is $(\tau n)^{-1}A^{-1}$. Matching their expected quadratic dispersions in the $A$ metric yields the scalar magnitude adjustment $r/\operatorname{tr}(A^{-1}B)$. For the Bregman loss $\ell(\mu,X)=D_\Phi(X,\mu)$, $A=H_0$ and $B=H_0\Omega H_0$, so $r/\operatorname{tr}(A^{-1}B)=r/\operatorname{tr}(H_0\Omega)$. Thus the proposed calibration is exactly the corresponding scalar magnitude adjustment for the Bregman criterion. Alternative generalized-Bayes calibrations use different scalar targets, including information matching and frequentist coverage \citep{lyddon2019general,syring2019calibrating}.

For the spatial model, let $\Omega_0(s)=\operatorname{Var}_0(X\mid S=s)$ and $H_0(s)=\nabla^2\Phi\{\mu_0(s)\}$. Define the population Bregman noise level by $a(P_0)=\mathbb{E}_S[\operatorname{tr}\{H_0(S)\Omega_0(S)\}]=\mathbb{E}_0[\{X-\mu_0(S)\}^{\top}H_0(S)\{X-\mu_0(S)\}]$. When $0<a(P_0)<\infty$, define the associated population learning-rate functional by $\tau(P_0)=r/a(P_0)$. Thus $\tau(P_0)$ is not a parameter of the response-generating distribution; it is the unique scalar learning rate induced by $P_0$ under the preceding Bregman-dispersion matching criterion.


To estimate $a(P_0)$ without knowing the true regions, for each mesh block $B$ let $I_B=\{i:s_i\in B\}$ and $m_B=|I_B|$, and, whenever $m_B\geq1$, let $\bar X_B=m_B^{-1}\sum_{i\in I_B}X_i$. Fix any $u_{\mathrm{ref}}\in\mathcal U$ and define $\bar X_B^{\mathrm{int}}=\bar X_B$ if $\bar X_B\in\mathcal U$ and $\bar X_B^{\mathrm{int}}=u_{\mathrm{ref}}$ otherwise. We define $\widehat a_n=\{\sum_{B:m_B\ge2}\sum_{i\in I_B}(X_i-\bar X_B)^\top\nabla^2\Phi(\bar X_B^{\mathrm{int}})(X_i-\bar X_B)\}/\{\sum_{B:m_B\ge2}(m_B-1)\}$ when $\sum_{B:m_B\ge2}(m_B-1)>0$, set $\widehat a_n=0$ otherwise, and define $\widehat\tau_n=r/\widehat a_n$ when $\widehat a_n>0$, with $\widehat\tau_n=1$ otherwise. For a block $B$ contained in the true region $D_{\ell,0}$, let $H_\ell=\nabla^2\Phi(\mu_{\ell,0})$ and $\Omega_i=\operatorname{Var}_0(X_i\mid s_{1:n})$. Conditional independence gives $\mathbb{E}_0[(m_B-1)^{-1}\sum_{i\in I_B}(X_i-\bar X_B)^\top H_\ell(X_i-\bar X_B)\mid s_{1:n}]=m_B^{-1}\sum_{i\in I_B}\operatorname{tr}(H_\ell\Omega_i)$. Thus within-block centering removes the unknown regional mean, while the factor $m_B-1$ gives the degrees-of-freedom correction for the desired Bregman-scaled conditional variation.


\begin{theorem}[Consistency of the learning-rate calibration]
\label{thm:tau-consistency}
Suppose Assumptions~1--2 hold. Assume additionally that there exists a measurable covariance function $\Omega_0(\cdot)$ such that $\operatorname{Var}_0(X_i\mid s_{1:n})=\Omega_0(s_i)$ almost surely for every $i=1,\ldots,n$, and that $0<a(P_0)<\infty$. Define $a_n=n^{-1}\sum_{i=1}^n\operatorname{tr}[\nabla^2\Phi\{\mu_0(s_i)\}\Omega_0(s_i)]$. Then $\widehat a_n-a_n\xrightarrow{P_0}0$ and $a_n\xrightarrow{P_0}a(P_0)$. Consequently, $\widehat a_n\xrightarrow{P_0}a(P_0)$ and $\widehat\tau_n\xrightarrow{P_0}\tau(P_0)$.
\end{theorem}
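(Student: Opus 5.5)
The plan is to prove the two limits separately and then combine them with the continuous mapping theorem. Throughout, write $N_n=\sum_{B:m_B\ge2}(m_B-1)$. Note that $N_n\ge n-(\text{number of occupied blocks})\ge n-K^2$. Since $K^2\asymp n/L_n$ with $L_n\to\infty$, we get $N_n/n\to1$ deterministically. In particular $N_n>0$ for large $n$, so $\widehat a_n$ is given by its main formula.

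\textbf{Step 1: $a_n\to a(P_0)$.} Because the $s_i$ are i.i.d.\ with density $p_D$, we have $a_n=n^{-1}\sum_i g(s_i)$ with $g(s)=\operatorname{tr}[\nabla^2\Phi\{\mu_0(s)\}\Omega_0(s)]\ge0$. Also $E g(S)=a(P_0)<\infty$, so the weak law of large numbers gives the claim. The boundary set $\mathcal B_0$ is Lebesgue null by the covering bound in Assumption~\ref{assump-spatial}, so the boundary convention is irrelevant.

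\textbf{Step 2: $\widehat a_n-a_n\to0$.} Classify each block with $m_B\ge2$ as \emph{interior} ($B\subset D_{\ell,0}$ for some $\ell$) or \emph{boundary}. By Assumption~\ref{assump-spatial}(2), the number of boundary blocks is $O(K)$. Each block's count $m_B$ is Binomial with mean $\asymp L_n$; a Chernoff bound and a union bound give $\max_B m_B=O(L_n)$ with probability tending to one. Hence boundary blocks contain $O(KL_n)=o(n)$ observations.

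For the boundary contribution, the Hessian is bounded only where $\bar X_B^{\mathrm{int}}$ lies in a controlled set. I would bound $\|X_i-\bar X_B\|^2\le 2\|\epsilon_i\|^2+2\|\mu_0(s_i)-\bar X_B\|^2$, using that the centers are bounded and the $\epsilon_i$ have uniformly bounded moments (from the local MGF condition). The Hessian at $\bar X_B^{\mathrm{int}}$ is the delicate part, so I would reduce to the event discussed below. This yields a contribution of order $O_P(KL_n/n)=o_P(1)$.

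For interior blocks, replace $\nabla^2\Phi(\bar X_B^{\mathrm{int}})$ by $H_\ell$. Within a block, $\bar X_B-\mu_{\ell,0}$ is an average of $m_B\asymp L_n\to\infty$ independent sub-exponential vectors. Bernstein's inequality with a union bound over $K^2$ blocks gives
\[
\max_B\|\bar X_B-\mu_{\ell(B),0}\|=O_P\bigl(\sqrt{\log n/L_n}+\log n/L_n\bigr)=o_P(1),
\]
because $L_n=\log^{1+\alpha_b}n$ with $\alpha_b>0$. So with probability tending to one, every interior $\bar X_B$ lies in $\mathcal K\subset\mathcal U$. Then $\bar X_B^{\mathrm{int}}=\bar X_B$, and uniform continuity of $\nabla^2\Phi$ on $\mathcal K_\rho$ gives $\max_B\|\nabla^2\Phi(\bar X_B)-H_\ell\|\to0$. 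Since $N_n^{-1}\sum\|X_i-\bar X_B\|^2=O_P(1)$, the replacement error is $o_P(1)$.

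With $H_\ell$ fixed, the term $N_n^{-1}\sum_B\sum_{i\in I_B}(X_i-\bar X_B)^\top H_\ell(X_i-\bar X_B)$ has conditional mean $N_n^{-1}\sum_B\frac{m_B-1}{m_B}\sum_{i\in I_B}\operatorname{tr}(H_\ell\Omega_0(s_i))$, by the identity displayed before the theorem. This differs from $a_n$ by $O(\max_B m_B^{-1})+o(1)=o_P(1)$, using $\min m_B\gtrsim L_n$ with high probability. To handle the fluctuation, expand the quadratic into $\sum_i\epsilon_i^\top H\epsilon_i-m_B\bar\epsilon_B^\top H\bar\epsilon_B$. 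The blocks are conditionally independent given $s_{1:n}$ with bounded fourth moments (from the MGF bound), so the conditional variance of the sum is $O(\sum_B m_B^2)/N_n^2=O(L_n/n)\to0$. Chebyshev's inequality then gives concentration.

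\textbf{Step 3: conclusion.} Steps 1 and 2 give $\widehat a_n\to a(P_0)>0$. Hence $P(\widehat a_n>0)\to1$, and $x\mapsto r/x$ is continuous at $a(P_0)$, so $\widehat\tau_n\to\tau(P_0)$.

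The main obstacle is the boundary blocks. There $\bar X_B$ mixes two regional means and may lie outside $\mathcal K$, where $\nabla^2\Phi$ is not controlled. I would first try to show that $\bar X_B$ lies within $o(1)$ of $\mathcal H_0\subset\mathcal K$ uniformly, since a convex combination of centers plus vanishing noise stays in $\mathcal K$ by the $\eta$-margin. This gives bounded Hessians on all blocks simultaneously. Failing that, I would use $\bar X_B^{\mathrm{int}}=u_{\mathrm{ref}}$ as a fallback with a fixed Hessian, and rely on the $o(n)$ count of boundary observations.
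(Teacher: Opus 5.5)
Your proposal is correct and follows essentially the same route as the paper's proof. The shared steps are: the interior/boundary block split with $O(KL_n)=o(n)$ boundary observations; uniform block-centroid concentration placing every $\bar X_B$, boundary blocks included via $\bar\mu_B\in\mathcal H_0$ and the $\eta$-margin, in $\mathcal K$, so that $\bar X_B^{\mathrm{int}}=\bar X_B$ and the Hessian is bounded; Hessian replacement by uniform continuity; a conditional variance bound plus Chebyshev for the interior quadratic forms, with the $(m_B-1)/m_B$ correction costing only $O(K^2)=o(n)$; the law of large numbers for $a_n$; and continuous mapping. Your deterministic bound $n-K^2\le N_n\le n$ slightly streamlines the paper's occupancy-event step, and since your primary treatment of the boundary blocks is exactly the paper's, the $u_{\mathrm{ref}}$ fallback is unnecessary.
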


The proof is given in the Supplementary Material. 
Hence the implemented calibration consistently estimates the population scalar magnitude adjustment associated with the true response distribution and the chosen Bregman geometry, without specifying a parametric conditional response likelihood.

\section{Simulation Study}
\label{sec-simulation}

We evaluate the recovery of spatial partitions and compare our method (\textbf{Bregman-Bayes}) with competing clustering procedures, and examine the empirical behavior of the posterior errors as the number of sampled locations increases.

\noindent\textbf{Data generation and model settings.}
Throughout the simulation study, we sample locations uniformly from the fixed three-cluster U-shaped domain shown in Figure~1(a), and use the same spatial domain and true partition for all response classes. We consider four response-generating mechanisms. The Euclidean, count-vector, and positive-definite matrix responses have representation dimension 15, while the distribution-valued responses lie on a 16-category simplex with intrinsic dimension 15. For Euclidean responses, $X_i\mid s_i\in D_{\ell,0}
\sim N_{15}(\mu_{\ell,0},1.10^2I_{15}),
\qquad \ell=1,2,3,$ where $\mu_{\ell,0}
=
2\sqrt{\frac{3}{2}}
\begin{pmatrix}
e_\ell-\mathbf 1_3/3\\
\mathbf 0_{12}
\end{pmatrix}$, with $e_\ell$ denoting the $\ell$th standard basis vector in
$\mathbb R^3$. The three regional means are equidistant, with
$\|\mu_{\ell,0}-\mu_{h,0}\|^2=12$ for $\ell\neq h$.

For distribution-valued responses, $P_i\mid s_i\in D_{\ell,0}\sim\operatorname{Dirichlet}(3000p_\ell)$ on 16 categories, where
$p_1=(0.003,0.003,0.007,0.007,q,\ldots,q)$,
$p_2=(0.005,0.005,0.005,0.005,q,\ldots,q)$, and
$p_3=(0.007,0.007,0.003,0.003,q,\ldots,q)$, with
$q=0.98/12$ repeated twelve times. Thus all between-region differences occur in the first four low-probability categories, while the remaining twelve categories have the same probabilities in all three regions.

For nonnegative count-vector responses, the 15 coordinates are conditionally independent with $X_{ih}\mid s_i\in D_{\ell,0}\sim\operatorname{Poisson}(\mu_{\ell h})$, where
$\mu_1=(0.8\,\mathbf 1_8,60\,\mathbf 1_7)$,
$\mu_2=(2.0\,\mathbf 1_8,60\,\mathbf 1_7)$, and
$\mu_3=(4.2\,\mathbf 1_8,60\,\mathbf 1_7)$.
The first eight coordinates contain the regional signal, while the final seven coordinates have intensity 60 in every region and therefore contain no partition information.

For positive-definite matrix responses, each observation is a $5\times5$ matrix and hence has a 15-dimensional symmetric-matrix representation. We generate
$X_i\mid s_i\in D_{\ell,0}\sim W_5(8,M_\ell/8)$, so that $E(X_i\mid s_i\in D_{\ell,0})=M_\ell$, where
$M_1=\operatorname{diag}(0.010,0.010,0.010,100,100)$,
$M_2=\operatorname{diag}(0.026,0.026,0.026,100,100)$, and
$M_3=\operatorname{diag}(0.068,0.068,0.068,100,100)$.
The regional signal consists of successive multiplicative changes in the three small eigenvalues of the regional mean matrices, with successive ratios approximately $2.6$, while the two common eigenvalues equal to 100 generate substantial sampling variability that dominates ordinary Frobenius distance without contributing to between-region mean separation.

The representative priors are $N_{15}(0,I_{15})$ for Euclidean responses, $\operatorname{Dirichlet}(\mathbf 1_{16})$ for distribution-valued responses, independent $\operatorname{Gamma}(1,1)$ distributions for count coordinates, and $\operatorname{IW}_5(6,I_5)$ for positive-definite matrices. We parameterize the mesh resolution and cluster-number penalty as
$K_n=\left\lfloor c_b\{n/\log^{1+\alpha_b}(n)\}^{1/2}\right\rfloor$ and
$\log(\lambda_n^{-1})=c_p n/\log^{\alpha_p}(n)$, where $c_b>0$ controls the spatial mesh resolution and $c_p>0$ controls the strength of the prior penalty on the number of regions. We set $c_b=3.5$, $\alpha_b=1$, $c_p=0.5$, $\alpha_p=0.5$, $\alpha_0=0.1$, and $k_{\max}=8$.
For each simulated dataset, the learning rate $\tau$ is selected once using the calibration procedure in Section~\ref{sec:tau-calibration} and is then held fixed during posterior sampling. Unless stated otherwise, we run $80{,}000$ MCMC iterations, discard the first $60{,}000$, and retain every twentieth subsequent draw.

For a finite-sample illustration, we generate one distribution-valued dataset with $n=3000$ using the specification above. For this illustration, we run $120{,}000$ MCMC iterations with a burn-in of $80{,}000$ and thinning interval 20. The displayed posterior partition is the retained draw having the largest collapsed posterior mass, and the corresponding cluster-specific representatives are taken from the same posterior draw.

\begin{figure}[t]
    \centering
    \includegraphics[width=0.7\textwidth]{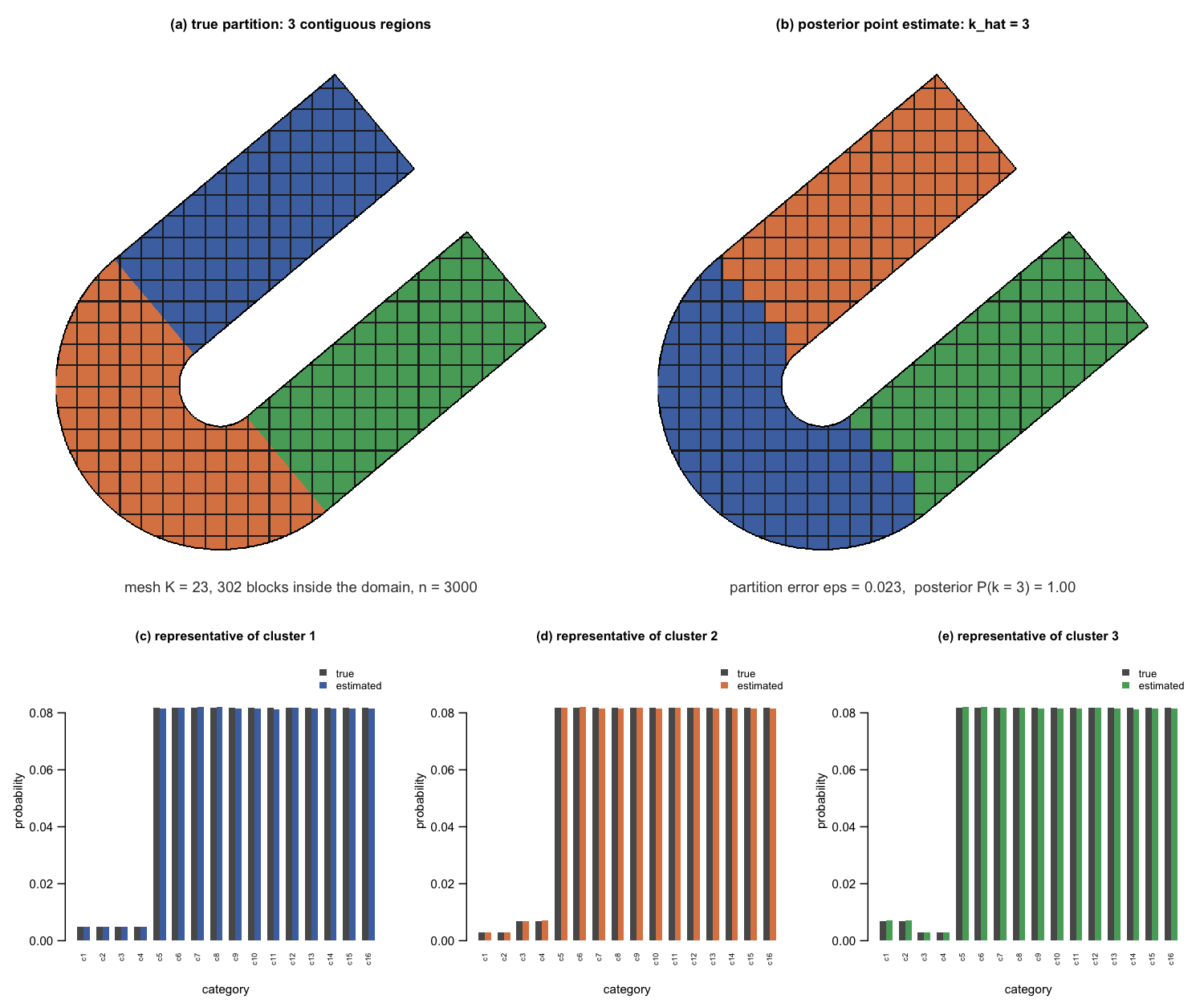}
    \caption{Illustration for one distribution-valued dataset with $n=3000$. Panel (a) shows the true three-region partition and panel (b) shows the posterior partition point estimate. Panels (c)--(e) compare the true and estimated probability-vector representatives for the three matched regions.}
    \label{fig-simulation-illustration}
\end{figure}

Figure~\ref{fig-simulation-illustration} shows that the estimated partition reproduces the two internal boundaries of the U-shaped truth up to the resolution of the square mesh. The remaining discrepancies are concentrated in blocks intersecting the true boundaries. The posterior assigns probability one to $k=3$, the selected partition contains three regions, and its domain partition error is $0.023$. Panels (c)--(e) show close agreement between the true and estimated representatives across all 16 categories. In particular, the method recovers the regional differences among the four low-probability categories while correctly estimating the twelve larger probabilities that are common across regions. The figure therefore illustrates simultaneous recovery of the spatial partition and its associated cluster-specific representatives.

\subsection{Comparison}
\label{sec-simulation-comparison}

We compare the methods at $n=1000$ using 30 independently generated datasets from each of the four response-generating mechanisms above. We consider SKATER \citep{assunccao2006efficient}, which constructs a minimum spanning tree using Euclidean attribute dissimilarities and recursively removes edges according to reductions in within-region Euclidean sums of squares, hard Bregman clustering \citep{banerjee2005clustering}, which minimizes the response-specific Bregman within-cluster criterion without imposing spatial contiguity, and a Gaussian SPAT-CLUS \citep{huang2025consistent} comparator using the same meshed partition prior as we use in our model. Conditional on a partition $\mathcal P=\{C_1,\ldots,C_k\}$, the Gaussian comparator uses
$X_i\mid i\in C_j,\mu_j,\sigma^2\sim N_r(\mu_j,\sigma^2I_r)$ and
$\mu_j\mid\sigma^2,\lambda\sim N_r(0,\sigma^2\lambda^{-1}I_r)$,
with $\sigma^2\sim\operatorname{IG}(1/2,1/2)$ and
$\lambda\sim\operatorname{Gamma}(5\times10^{-7},5\times10^{-7})$ under the shape-rate parameterization. The cluster means are integrated analytically, while $\sigma^2$ and $\lambda$ are updated within the MCMC sampler.

SKATER and Gaussian SPAT-CLUS are applied to the same spatial mesh used by our method. 
Hard Bregman clustering is applied directly to the individual observations. Bregman-Bayes and Gaussian SPAT-CLUS use exactly the same spanning-tree partition prior and both estimate the number of clusters with support restricted by $k_{\max}=8$. SKATER and hard Bregman clustering are instead supplied the true number of clusters $k_0=3$. We evaluate the estimated partition at the observed locations using the adjusted Rand index (ARI) and the symmetric set-partition discrepancy $\epsilon_n$ defined in Section~\ref{sec-theory}. For both Bayesian methods, the reported partition is the variation-of-information (VI) optimal posterior clustering obtained using SALSO \citep{dahl2022search}.

\begin{table}[!t]
\centering
\small
\setlength{\tabcolsep}{8.5pt}
\caption{Comparison of spatial partition recovery at $n=1000$ over 30 independently generated datasets. Entries for ARI and $\epsilon_n$ are Monte Carlo means with standard deviations in parentheses. Larger ARI and smaller $\epsilon_n$ indicate better recovery. SKATER and Bregman $k$-means are supplied the true number of clusters $k_0=3$, while Bregman-Bayes and Gaussian SPAT-CLUS estimate $k$ under the same partition prior. The best mean ARI and $\epsilon_n$ within each response class are shown in bold, with both values bolded when the displayed means are tied.}
\label{tab-simulation-comparison}
\begin{tabular}{llccc}
\toprule
Response & Method & ARI & $\epsilon_n$ & $\widehat{k}$ \\
\midrule
Euclidean
& Bregman-Bayes & $\mathbf{0.960\ (0.008)}$ & $\mathbf{0.026\ (0.006)}$ & $3.00\ (0.00)$ \\
& Gaussian SPAT-CLUS & $\mathbf{0.960\ (0.009)}$ & $0.027\ (0.006)$ & $3.00\ (0.00)$ \\
& SKATER & $0.954\ (0.014)$ & $0.030\ (0.009)$ & $3$ (given) \\
& Bregman $k$-means & $0.709\ (0.022)$ & $0.211\ (0.018)$ & $3$ (given) \\
\addlinespace[2pt]

Distribution
& Bregman-Bayes & $\mathbf{0.956\ (0.012)}$ & $\mathbf{0.029\ (0.008)}$ & $3.00\ (0.00)$ \\
& Gaussian SPAT-CLUS & $0.000\ (0.000)$ & $0.643\ (0.009)$ & $1.00\ (0.00)$ \\
& SKATER & $0.524\ (0.028)$ & $0.356\ (0.027)$ & $3$ (given) \\
& Bregman $k$-means & $0.787\ (0.019)$ & $0.150\ (0.015)$ & $3$ (given) \\
\addlinespace[2pt]

Count
& Bregman-Bayes & $\mathbf{0.901\ (0.140)}$ & $\mathbf{0.068\ (0.100)}$ & $2.87\ (0.35)$ \\
& Gaussian SPAT-CLUS & $0.000\ (0.000)$ & $0.639\ (0.015)$ & $1.00\ (0.00)$ \\
& SKATER & $0.511\ (0.034)$ & $0.364\ (0.034)$ & $3$ (given) \\
& Bregman $k$-means & $0.792\ (0.020)$ & $0.147\ (0.015)$ & $3$ (given) \\
\addlinespace[2pt]

SPD matrix
& Bregman-Bayes & $\mathbf{0.952\ (0.013)}$ & $\mathbf{0.032\ (0.009)}$ & $3.00\ (0.00)$ \\
& Gaussian SPAT-CLUS & $0.000\ (0.000)$ & $0.642\ (0.010)$ & $1.00\ (0.00)$ \\
& SKATER & $0.000\ (0.001)$ & $0.643\ (0.011)$ & $3$ (given) \\
& Bregman $k$-means & $0.802\ (0.029)$ & $0.139\ (0.022)$ & $3$ (given) \\
\bottomrule
\end{tabular}
\end{table}

For Euclidean responses, Bregman-Bayes and Gaussian SPAT-CLUS both attain mean ARI $0.960$, with mean $\epsilon_n$ values $0.026$ and $0.027$, respectively. SKATER gives comparable recovery with mean ARI $0.954$. This agreement is expected because the Gaussian SPAT-CLUS sampling model is correctly specified and the Euclidean Bregman divergence reduces to squared-Euclidean loss. In contrast, nonspatial Bregman $k$-means has mean ARI $0.709$ despite being supplied $k_0=3$, showing the gain from spatial aggregation and contiguity even when the response geometry itself is Euclidean.

The three non-Euclidean mechanisms separate the response-specific Bregman geometry from the Euclidean geometry. For distribution-valued responses, the regional differences occur only among probabilities between $0.003$ and $0.007$, while the twelve larger coordinates are identical across regions and dominate ordinary Euclidean variation. Kullback--Leibler geometry weights changes in these rare categories according to their local probability scale. Bregman-Bayes attains mean ARI $0.956$, compared with $0.787$ for nonspatial Bregman $k$-means and $0.524$ for SKATER. Gaussian SPAT-CLUS selects a single region in every replication, producing mean ARI zero. For counts, the seven Poisson coordinates with intensity 60 contain no clustering signal but contribute substantially to raw Euclidean variation, whereas the generalized Kullback--Leibler curvature scales coordinate $h$ by $1/\mu_h$. Bregman-Bayes consequently attains mean ARI $0.901$, compared with $0.792$ for Bregman $k$-means and $0.511$ for SKATER. Gaussian SPAT-CLUS again selects one region in every replication. The larger standard deviation for Bregman-Bayes in this setting and the average $\widehat{k}=2.87$ reflect occasional finite-sample merging of two true regions.

The positive-definite matrix experiment gives the strongest separation from Euclidean and Frobenius criteria. The regional signal consists of multiplicative changes in the three small eigenvalues, with ratios approximately $2.6$, while the two common eigenvalues equal to 100 generate large Frobenius-scale sampling variation without contributing regional information. The LogDet divergence responds to the relative matrix change and is invariant to the common scale in the manner required by this construction. Bregman-Bayes attains mean ARI $0.952$, while Bregman $k$-means attains $0.802$. SKATER and Gaussian SPAT-CLUS both fail to recover the partition under their Euclidean or Gaussian criteria. The systematic underclustering of Gaussian SPAT-CLUS across the non-Euclidean settings is consistent with its use of an inappropriate Euclidean geometry, under which variation in response components carrying little or no regional information can obscure the region-specific signal. Across the distribution, count, and matrix experiments, Bregman $k$-means consistently improves on methods using an inappropriate Euclidean geometry, while Bregman-Bayes further improves recovery by combining the response-specific divergence with spatial contiguity and regional pooling. Moreover, unlike SKATER and Bregman $k$-means, Bregman-Bayes is not supplied with the true number of regions.

\FloatBarrier

\subsection{Asymptotic Study}
\label{sec-simulation-asymptotic}

We next examine the posterior errors as the number of sampled locations increases for the distribution-valued, count, and positive-definite matrix responses. The spatial domain, true three-cluster partition, regional representatives, and response-generating distributions are held fixed at the specifications used in the comparison study. For distribution-valued and positive-definite matrix responses, we use
$n\in\{200,250,435,550,1000,1610,2392,3365\}$.
For count responses, we use
$n\in\{1000,1610,2392,3365,4538,5435,6960,8100\}$ because the finite-sample count posterior begins to recover the three-region partition reliably near $n=1000$ under the present signal and prior specification. 

For each response class and sample size, we generate 30 independent datasets. For distribution-valued and positive-definite matrix responses, we run $120{,}000$ MCMC iterations, discard the first $100{,}000$, and retain every twentieth subsequent draw. For count responses, we run $220{,}000$ iterations, discard the first $180{,}000$, and retain every fortieth draw. Thus $M=1000$ posterior draws are retained for every fitted dataset.

For each simulated dataset, let $s=1,\ldots,M$ index the retained posterior draws. For draw $s$, let $k_s$ and $\pi_s$ denote its number of regions and domain partition, let $\mu_{s,j}$ denote the representative of its $j$th region, and let $M_s(j)$ be the true region having the largest intersection area with that estimated region. We use the normalization $N_n=n^{1/2}/\log^{\alpha_0+(1+\alpha_b)/2}(n)=n^{1/2}/\log^{1.1}(n)$ corresponding to the spatial contraction rate in Section~\ref{sec-theory}. For each simulated dataset we compute four posterior summaries. The first is the posterior probability of the correct region number, $\widehat p_k=M^{-1}\sum_{s=1}^M\mathbf 1(k_s=k_0)$. The second is the normalized posterior mean domain-partition error, $e_1=N_nM^{-1}\sum_{s=1}^M\epsilon(\pi_s,\pi_0)$. The third is the normalized maximum representative error, $e_2=N_nM^{-1}\sum_{s=1}^M\max_{1\leq j\leq k_s}\|\mu_{s,j}-\mu_{M_s(j),0}\|$. The fourth is the normalized integrated Bregman field error, $e_4=N_nM^{-1}\sum_{s=1}^M\int_{\mathcal D}D_\Phi\{\mu_0(u),\mu_s(u)\}\,du$. The norm in $e_2$ is the norm of the representation space used in the theory, while $e_4$ uses the response-specific Bregman divergence with the true field in its first argument. Figure~\ref{fig-simulation-asymptotic} gives boxplots of these four dataset-level posterior summaries across the 30 replications at every sample size. As in the finite-sample study, $\tau$ is estimated separately for each dataset using the calibration procedure of Section~\ref{sec:tau-calibration} and then held fixed during that posterior run.

\begin{figure}[!t]
    \centering
    \includegraphics[width=\textwidth]{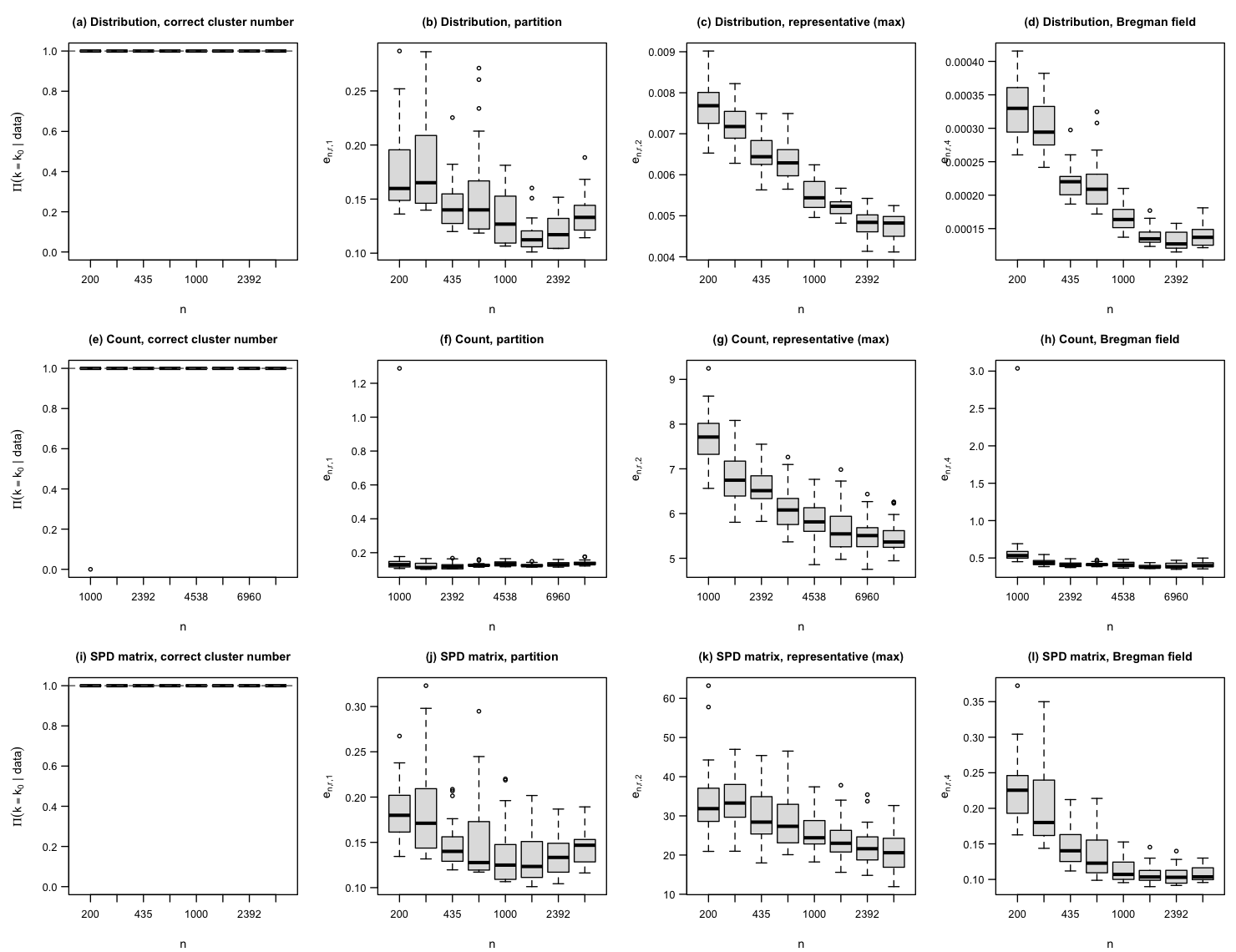}
    \caption{Asymptotic simulation results for distribution-valued, count, and positive-definite matrix responses. Each row corresponds to one response class and each boxplot summarizes 30 independently generated datasets. The four columns show the posterior probability of the true number of regions, the posterior mean domain-partition error, the maximum representative error, and the integrated Bregman center-field error. The three error measures are multiplied by $N_n=n^{1/2}/\log^{1.1}(n)$.}
    \label{fig-simulation-asymptotic}
\end{figure}

Figure~\ref{fig-simulation-asymptotic} shows stable recovery of the true number of regions over the displayed sample-size ranges. For distribution-valued and positive-definite matrix responses, the posterior probability of $k=3$ is essentially one throughout the experiment. For counts, the posterior is also concentrated on $k=3$ over the displayed range apart from an isolated failure at the smallest sample size, after which the posterior probability is essentially one.

The normalized representative and Bregman field errors decrease substantially as $n$ increases for all three response classes. The normalized partition error decreases over the initial sample sizes and then fluctuates within a bounded range. This behavior is expected because the blockwise partition cannot improve beyond the approximation error induced by the spatial mesh, and multiplication by $N_n$ makes the finite-sample behavior of this mesh contribution visible. The effect is particularly apparent for the count response, whose displayed range begins near the finite-sample threshold at which the posterior consistently separates all three regions. For distribution-valued and positive-definite matrix responses, the representative and Bregman field errors show a clear reduction while the scaled partition error approaches a stable level determined primarily by boundary blocks.


\FloatBarrier

\section{Houston Racial-Composition Analysis}
\label{sec:houston}

\noindent\textbf{Data and analysis.}
We apply the proposed method to racial-composition data for 333 Census Block Groups (CBGs) in central Harris County, Texas. We restrict the analysis to CBGs whose centroids lie within longitude $[-95.5,-95.3]$ and latitude $[29.65,29.75]$ in the urban core of Houston, Texas, including the Heights, Montrose, River Oaks, Museum District, Midtown, Third Ward, East End, and Near Northside areas. For CBG $i$, the response is the four-category composition $P_i=(P_{i,\mathrm{White}},P_{i,\mathrm{African\ American}},P_{i,\mathrm{Asian}},P_{i,\mathrm{Hispanic}})$, obtained by normalizing the four recorded components to sum to one, and its location $s_i$ is taken to be the CBG centroid.

We use the negative-entropy generator, so that the Bregman divergence between two compositions is the Kullback--Leibler divergence, and assign the cluster-specific representatives the matched $\mathrm{Dirichlet}(1,1,1,1)$ prior. All remaining settings are those of the simulation study, namely $c_p=0.5$, $\alpha_p=0.5$, $c_b=3.5$, $\alpha_b=1$ and $k_{\max}=8$ . The learning rate is calibrated as per Section~\ref{sec:tau-calibration}. We run $80{,}000$ MCMC iterations, discard the first $60{,}000$, and retain every twentieth subsequent draw, giving $1{,}000$ posterior draws. The reported partition is the variation-of-information optimal posterior clustering of the 333 CBGs obtained using SALSO \citep{dahl2022search}.

For each CBG, we average its assigned representative over the retained posterior draws to obtain its posterior mean representative. The representative displayed for an estimated cluster is the average of these CBG-specific posterior means over the CBGs assigned to that cluster by the reported partition.

\begin{figure}[t]
    \centering
    \includegraphics[width=\textwidth]{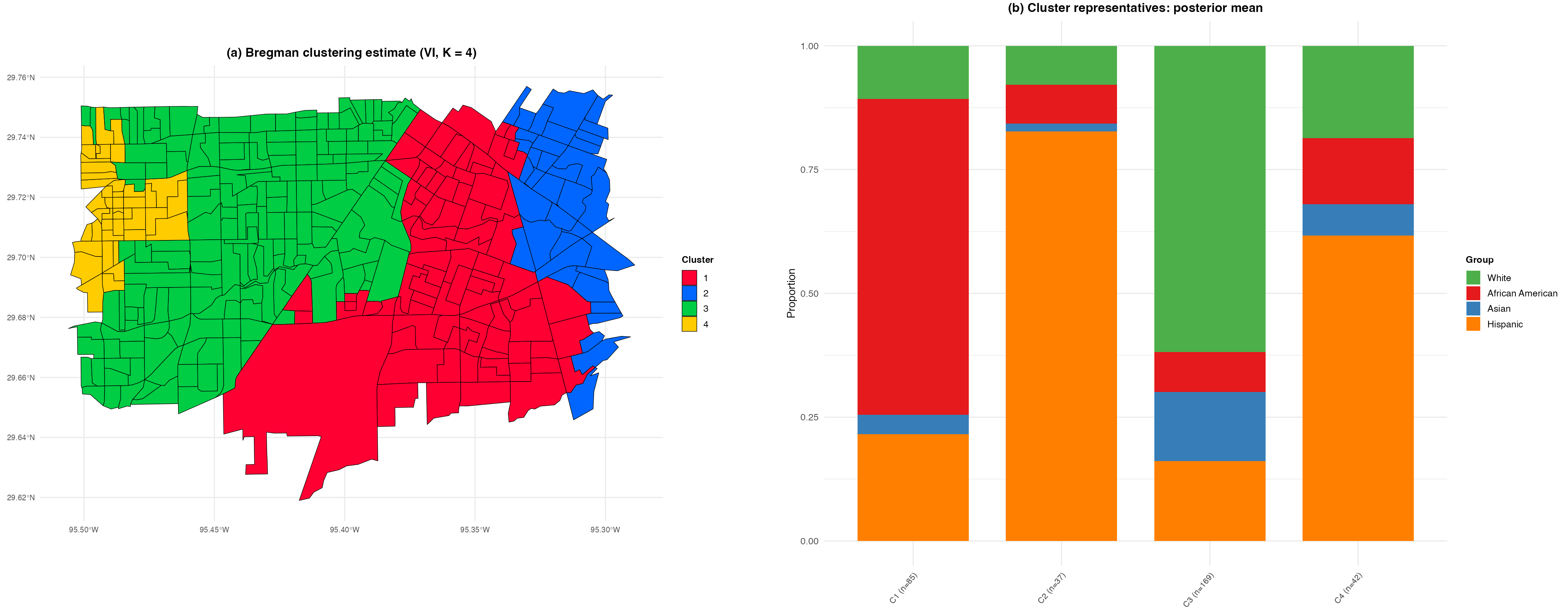}
    \caption{Spatial clustering of racial-composition distributions in central Harris County.
    Panel (a) shows the variation-of-information optimal posterior partition, which contains
    four clusters. Panel (b) shows the cluster-wise
    averages of the CBG-specific posterior mean representatives for White, African American,
    Asian, and Hispanic composition.}
    \label{fig:houston}
\end{figure}

\noindent\textbf{Results.}
The posterior assigns probability one to $k=4$ regions, and Figure~\ref{fig:houston} shows the corresponding four racial-composition regimes. Cluster~1, containing 85 CBGs, occupies the southern and east-central part of the study region and has posterior mean composition $(0.107,0.638,0.040,0.215)$, so it is predominantly African American. Cluster~2, containing 37 CBGs, occupies the northeastern portion of the domain and extends southward along its eastern edge. Its composition is $(0.079,0.079,0.015,0.827)$, with approximately $83\%$ Hispanic representation. Cluster~3, containing 169 CBGs, covers most of the western and central-northern portion of the domain and has composition $(0.619,0.080,0.140,0.161)$; it is predominantly White and also has the largest Asian proportion among the four clusters. Cluster~4, containing 42 CBGs, lies along the northwestern boundary and has composition $(0.186,0.134,0.063,0.617)$. It forms a second Hispanic-dominant region, but with a substantially smaller Hispanic proportion and larger White and African American proportions than cluster~2. The inferred partition therefore separates geographically contiguous areas according to differences in their complete racial-composition vectors.

\section{Conclusion}\label{sec-conc}

We have developed a generalized Bayesian framework for spatial clustering of complex-object responses in which the generalized likelihood is constructed directly from a response-specific Bregman divergence. Combined with a spatially contiguous partition prior, this yields joint inference on the number of regions, their spatial configuration, and their cluster-specific representatives without requiring a fully specified sampling distribution for each response class. The framework covers several important classes of complex objects, admits tractable collapsed computation, and is supported by posterior contraction results.

Several extensions are natural. One direction is to incorporate covariates directly into the generalized Bayesian model, allowing spatial clustering to account for observed predictors or to estimate region-specific relationships between covariates and complex-object responses jointly with the partition. A second direction is to move beyond Bregman geometry and develop analogous generalized Bayesian constructions for object spaces whose natural discrepancy is not induced by a Bregman potential, while retaining tractable inference for spatial partitions and representative objects. Finally, the learning rate $\tau$ determines both the scale of the generalized likelihood and the concentration of the resulting generalized posterior. The calibration considered here targets the scale of the Bregman loss, and an important direction is to develop calibration procedures that additionally preserve uncertainty quantification, for example by selecting $\tau$ so that generalized posterior credible regions attain an appropriate frequentist coverage interpretation. \\













\vspace{-5mm}  
\baselineskip=14pt 
\singlespacing 
\begingroup
    \setstretch{0.5}
    \bibliographystyle{apalike}
    {\small
    \bibliography{bibliography}}
\endgroup 


\clearpage
\appendix


\makeatletter

\newcommand{\appcontents}{%
    \section*{Contents of the Supplementary Material}
    \@starttoc{apc}
}

\newcommand{\appsection}[1]{%
    \section{#1}
    \addcontentsline{apc}{section}
    {\protect\numberline{\thesection}#1}
}

\newcommand{\appsubsection}[1]{%
    \subsection{#1}
    \addcontentsline{apc}{subsection}
    {\protect\numberline{\thesubsection}#1}
}

\newcommand{\appsubsubsection}[1]{%
    \subsubsection{#1}
    \addcontentsline{apc}{subsubsection}
    {\protect\numberline{\thesubsubsection}#1}
}

\makeatother

\begin{center}
{\Large\bf Supplementary Material}
\end{center}

\bigskip

\appcontents

\clearpage

\appsection{Proof of Theorem~\ref{thm-bregman-conjugacy-collapse}}
\label{app-proof-theorem1}

\noindent\textit{Proof.}
Let $C$ be a nonempty cluster. Recall that
\[
R_C(\mu)
=
H_C+n_CA_\Phi(\mu)
-\langle S_C,\nabla\Phi(\mu)\rangle,
\]
where
\[
S_C=\sum_{i\in C}X_i,
\qquad
H_C=\sum_{i\in C}\Phi(X_i),
\qquad
A_\Phi(\mu)
=
\langle\nabla\Phi(\mu),\mu\rangle-\Phi(\mu).
\]
It follows that $\exp\{-\tau R_C(\mu)\}
=
\exp(-\tau H_C)
\exp\left\{
\langle\tau S_C,\nabla\Phi(\mu)\rangle
-\tau n_CA_\Phi(\mu)
\right\}.$

Multiplying by the prior density in
\eqref{eq-bregman-prior-family} gives a generalized posterior
kernel proportional to
$\exp\left\{
\left\langle
\chi_0+\tau S_C,\nabla\Phi(\mu)
\right\rangle
-
(\kappa_0+\tau n_C)A_\Phi(\mu)
\right\}.$
Since
$(\chi_0+\tau S_C,\kappa_0+\tau n_C)
\in\mathcal H_\Phi,$
this kernel is integrable, and therefore
\[
\Pi_\tau(d\mu_C\mid X_C)
=
\Pi_{\chi_0+\tau S_C,\,
      \kappa_0+\tau n_C}(d\mu_C).
\]

Integrating the generalized likelihood with respect to
$\Pi_{\chi_0,\kappa_0}$ gives
\[
\begin{aligned}
M_\tau(C)
&=
\int_{\mathcal U}
\exp\{-\tau R_C(\mu)\}
\Pi_{\chi_0,\kappa_0}(d\mu)
\\
&=
\frac{\exp(-\tau H_C)}
     {Z_\Phi(\chi_0,\kappa_0)}
\int_{\mathcal U}
\exp\left\{
\left\langle
\chi_0+\tau S_C,\nabla\Phi(\mu)
\right\rangle
-
(\kappa_0+\tau n_C)A_\Phi(\mu)
\right\}
\lambda_{\mathcal V}(d\mu)
\\
&=
\exp(-\tau H_C)
\frac{
Z_\Phi(\chi_0+\tau S_C,\kappa_0+\tau n_C)
}{
Z_\Phi(\chi_0,\kappa_0)
}.
\end{aligned}
\]

Now suppose additionally that $\bar X_C\in\mathcal U$.
By the Bregman decomposition,
$R_C(\mu)
=
R_C(\bar X_C)
+
n_CD_\Phi(\bar X_C,\mu).$
Since
$W(C)=R_C(\bar X_C),$
we obtain
\[
\begin{aligned}
M_\tau(C)
&=
\int_{\mathcal U}
\exp\left\{
-\tau W(C)
-\tau n_CD_\Phi(\bar X_C,\mu)
\right\}
\Pi_{\chi_0,\kappa_0}(d\mu)
\\
&=
\exp\{-\tau W(C)\}
\int_{\mathcal U}
\exp\{-\tau n_CD_\Phi(\bar X_C,\mu)\}
\Pi_{\chi_0,\kappa_0}(d\mu)
\\
&=
\exp\{-\tau W(C)\}
J_{n_C}(\bar X_C).
\end{aligned}
\]
This proves the result.
\hfill$\square$


\appsection{Uniform integrated-center control}
\label{app-integrated-center}

\appsubsection{Prior-density and local quadratic bounds}
\label{app-integrated-center-preliminaries}

Write
\[
q_0(\mu)
=
\frac{d\Pi_{\chi_0,\kappa_0}}
     {d\lambda_{\mathcal V}}(\mu)
=
\frac{
\exp\left\{
\langle\chi_0,\nabla\Phi(\mu)\rangle
-\kappa_0A_\Phi(\mu)
\right\}
}{
Z_\Phi(\chi_0,\kappa_0)
},
\qquad \mu\in\mathcal U.
\]

\begin{lemma}[Uniform prior-density bounds]
\label{lem-app-prior-density}
Under Assumptions~\ref{assump-bregman-response} and
\ref{assump-representative-prior}, there exist constants
$0<\underline q\leq\overline q<\infty$ such that
\[
\underline q
\leq q_0(\mu)
\leq\overline q,
\qquad
\mu\in\mathcal K_\rho.
\]
\end{lemma}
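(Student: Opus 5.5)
The approach is to show that $\log q_0$ is a continuous real-valued function on the compact set $\mathcal K_\rho$, and then take its extreme values.

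\emph{Compactness.} $\mathcal H_0$ is the convex hull of finitely many points, so it is compact. Hence $\mathcal K=\{u:\operatorname{dist}(u,\mathcal H_0)\leq\eta\}$ is closed and bounded, and so is $\mathcal K_\rho$. Since $\mathcal V$ is finite-dimensional with $r=\dim(\mathcal V)$ fixed (Assumption~\ref{assump-bregman-response}(1)), $\mathcal K_\rho$ is compact. By Assumption~\ref{assump-bregman-response}(3), $\mathcal K_\rho\subset\mathcal U$.

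\emph{Continuity.} On $\mathcal U$ we have $\Phi\in C^2(\mathcal U)$. Therefore $\Phi$ and $\nabla\Phi$ are continuous and finite there, because $\mathcal U$ lies in the interior of $\operatorname{dom}(\Phi)$. It follows that
\[
A_\Phi(\mu)=\langle\nabla\Phi(\mu),\mu\rangle-\Phi(\mu)
\]
is continuous and finite on $\mathcal U$. Since $\chi_0$ and $\kappa_0$ are fixed (Assumption~\ref{assump-representative-prior}(2)), the exponent
\[
g(\mu)=\langle\chi_0,\nabla\Phi(\mu)\rangle-\kappa_0A_\Phi(\mu)
\]
is a continuous real-valued function on $\mathcal U$. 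The normalizer satisfies $0<Z_\Phi(\chi_0,\kappa_0)<\infty$ by Assumption~\ref{assump-representative-prior}(2). Hence $\log q_0=g-\log Z_\Phi(\chi_0,\kappa_0)$ is continuous and finite on $\mathcal U$, and in particular on $\mathcal K_\rho$.

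\emph{Conclusion.} By the extreme value theorem, $\log q_0$ attains a finite minimum $m_\ast$ and a finite maximum $M_\ast$ on $\mathcal K_\rho$. Setting $\underline q=e^{m_\ast}>0$ and $\overline q=e^{M_\ast}<\infty$ gives the claim. The constants do not depend on $n$, because $\Phi$, $\chi_0$, $\kappa_0$, $\mathcal K_\rho$ and the true centers $\mu_{\ell,0}$ are all fixed.

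The only point needing care is checking that $\mathcal K_\rho$ is nonempty and compact and sits inside the open set $\mathcal U$, where $\Phi$ is finite and $C^2$. Assumption~\ref{assump-bregman-response}(3) supplies exactly this containment. It is nonempty because it contains the centers $\mu_{\ell,0}$. No further obstacle is expected.
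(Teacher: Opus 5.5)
Your proposal is correct and follows essentially the same route as the paper: continuity of $\langle\chi_0,\nabla\Phi(\mu)\rangle-\kappa_0A_\Phi(\mu)$ on $\mathcal U$ from $\Phi\in C^2(\mathcal U)$, compactness of $\mathcal K_\rho\subset\mathcal U$, the extreme value theorem, and $0<Z_\Phi(\chi_0,\kappa_0)<\infty$. Working with $\log q_0$ instead of the exponentiated numerator is only a cosmetic difference.
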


\noindent\textit{Proof.}
By Assumption~\ref{assump-bregman-response},
$\Phi\in C^2(\mathcal U)$. Hence
$\nabla\Phi$ and
$A_\Phi(\mu)
=
\langle\nabla\Phi(\mu),\mu\rangle-\Phi(\mu)$
are continuous on $\mathcal U$. Therefore
$\mu\longmapsto
\exp\left\{
\langle\chi_0,\nabla\Phi(\mu)\rangle
-\kappa_0A_\Phi(\mu)
\right\}$
is continuous and strictly positive on $\mathcal U$.

Since $\mathcal V$ is finite dimensional,
$\mathcal H_0=\operatorname{conv}
\{\mu_{1,0},\ldots,\mu_{k_0,0}\}$ is compact, and therefore
$\mathcal K_\rho$ is compact. By
Assumption~\ref{assump-bregman-response},
$\mathcal K_\rho\subset\mathcal U$. The numerator of $q_0$
thus attains a strictly positive minimum and a finite maximum on
$\mathcal K_\rho$. Since
$0<Z_\Phi(\chi_0,\kappa_0)<\infty$
by Assumption~\ref{assump-representative-prior}, the stated bounds
follow.
\hfill$\square$

\begin{lemma}[Local quadratic comparison]
\label{lem-app-local-bregman}
Under Assumption~\ref{assump-bregman-response}, if
$u\in\mathcal K$ and $\mu\in\mathcal U$ satisfy
$\|\mu-u\|\leq\rho$, then
\[
\frac{h_\Phi}{2}\|u-\mu\|^2
\leq
D_\Phi(u,\mu)
\leq
\frac{\overline h_\Phi}{2}\|u-\mu\|^2.
\]
\end{lemma}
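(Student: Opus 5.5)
The plan is to apply Taylor's theorem with integral remainder to $\Phi$ along the segment joining $\mu$ and $u$, and then insert the Hessian bounds from Assumption~\ref{assump-bregman-response}(3).

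First I will check that the whole segment lies in the region where those bounds hold. Since $u\in\mathcal K$ and $\|\mu-u\|\leq\rho$, every point $\mu_t=\mu+t(u-\mu)$, $t\in[0,1]$, satisfies
\[
\operatorname{dist}(\mu_t,\mathcal K)\leq\|\mu_t-u\|=(1-t)\|\mu-u\|\leq\rho .
\]
Hence the closed segment $[\mu,u]$ lies in $\mathcal K_\rho$. By assumption $\mathcal K_\rho\subset\mathcal U$, so $\Phi$ is $C^2$ on an open set containing the segment, and there
\[
h_\Phi I_r\preceq\nabla^2\Phi(\mu_t)\preceq\overline h_\Phi I_r .
\]

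Next I will expand the divergence. By definition $D_\Phi(u,\mu)=\Phi(u)-\Phi(\mu)-\langle\nabla\Phi(\mu),u-\mu\rangle$. Set $g(t)=\Phi(\mu_t)$, which is $C^2$ on $[0,1]$, with $g'(t)=\langle\nabla\Phi(\mu_t),u-\mu\rangle$ and $g''(t)=\langle u-\mu,\nabla^2\Phi(\mu_t)(u-\mu)\rangle$. Then
\[
D_\Phi(u,\mu)=g(1)-g(0)-g'(0)=\int_0^1(1-t)\,g''(t)\,dt .
\]
Applying the Hessian bounds pointwise gives
\[
h_\Phi\|u-\mu\|^2\leq g''(t)\leq\overline h_\Phi\|u-\mu\|^2 .
\]
Since $\int_0^1(1-t)\,dt=1/2$, integrating yields both inequalities of Lemma~\ref{lem-app-local-bregman}.

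The only step that needs care is the segment-containment argument, because the Hessian bounds are assumed only on $\mathcal K_\rho$. It works because the distance from each $\mu_t$ to $\mathcal K$ is measured through the anchor point $u\in\mathcal K$. Convexity of $\mathcal K$ is not needed for this. The rest of the argument is routine.
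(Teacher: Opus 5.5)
Your proposal is correct and matches the paper's proof essentially step for step. Both show the segment $\mu+t(u-\mu)$ lies in $\mathcal K_\rho\subset\mathcal U$ by measuring distance to the anchor $u\in\mathcal K$, write $D_\Phi(u,\mu)$ via Taylor's formula with integral remainder, apply the Hessian bounds pointwise, and integrate using $\int_0^1(1-t)\,dt=1/2$.
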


\noindent\textit{Proof.}
For $t\in[0,1]$, let
$\gamma(t)=\mu+t(u-\mu).$
Since $u\in\mathcal K$ and $\|\mu-u\|\leq\rho$,
$\operatorname{dist}\{\gamma(t),\mathcal K\}
\leq
\|\gamma(t)-u\|
=
(1-t)\|\mu-u\|
\leq\rho.$
Hence $\gamma(t)\in\mathcal K_\rho\subset\mathcal U$ for all
$t\in[0,1]$.
Taylor's formula with integral remainder gives
\[
D_\Phi(u,\mu)
=
\int_0^1
(1-t)
\left\langle
u-\mu,
\nabla^2\Phi\{\mu+t(u-\mu)\}(u-\mu)
\right\rangle dt.
\]
The Hessian bounds in
Assumption~\ref{assump-bregman-response} therefore yield
\[
h_\Phi\|u-\mu\|^2
\leq
\left\langle
u-\mu,
\nabla^2\Phi\{\mu+t(u-\mu)\}(u-\mu)
\right\rangle
\leq
\overline h_\Phi\|u-\mu\|^2.
\]
Integrating and using
$\int_0^1(1-t)\,dt=1/2$ proves the result.
\hfill$\square$

\appsubsection{Proof of Proposition~\ref{prop-uniform-integrated-center}}
\label{app-proof-integrated-center}

\noindent\textit{Proof.}
Recall that
$J_m(u)
=
\int_{\mathcal U}
\exp\{-\tau mD_\Phi(u,\mu)\}
\Pi_{\chi_0,\kappa_0}(d\mu),
\qquad
m\geq1,\quad u\in\mathcal K.$

We first prove the lower bound. For $m\geq1$,
\[
B\left(u,\frac{\rho}{\sqrt m}\right)
\subseteq\mathcal K_\rho
\subseteq\mathcal U.
\]
By Lemmas~\ref{lem-app-prior-density} and
\ref{lem-app-local-bregman},
\[
\begin{aligned}
J_m(u)
&\geq
\underline q
\int_{\|\mu-u\|\leq\rho/\sqrt m}
\exp\left\{
-\frac{\tau m\overline h_\Phi}{2}
\|\mu-u\|^2
\right\}
\lambda_{\mathcal V}(d\mu).
\end{aligned}
\]
With the change of variables
$v=\sqrt m(\mu-u)$,
\[
J_m(u)
\geq
\underline q\,m^{-r/2}
\int_{\|v\|\leq\rho}
\exp\left\{
-\frac{\tau\overline h_\Phi}{2}\|v\|^2
\right\}
\lambda_{\mathcal V}(dv).
\]
The last integral is a fixed strictly positive finite constant.
Hence there exists $c_J>0$, independent of $m$ and $u$, such that
$J_m(u)\geq c_Jm^{-r/2}.$

For the upper bound, define
\[
\delta_{\Phi,\rho}
=
\inf_{\substack{
u\in\mathcal K,\ \mu\in\mathcal U\\
\|\mu-u\|\geq\rho}}
D_\Phi(u,\mu).
\]
By Assumption~\ref{assump-bregman-response},
$\delta_{\Phi,\rho}>0$. Split
\[
J_m(u)
=
J_m^{\mathrm{loc}}(u)
+
J_m^{\mathrm{far}}(u),
\]
where the two terms integrate respectively over
$\{\|\mu-u\|<\rho\}$ and
$\{\|\mu-u\|\geq\rho\}$.

On the local region,
Lemmas~\ref{lem-app-prior-density} and
\ref{lem-app-local-bregman} give
\[
\begin{aligned}
J_m^{\mathrm{loc}}(u)
&\leq
\overline q
\int_{\mathcal V}
\exp\left\{
-\frac{\tau m h_\Phi}{2}\|\mu-u\|^2
\right\}
\lambda_{\mathcal V}(d\mu)
\\
&=
\overline q
\left(\frac{2\pi}{\tau h_\Phi}\right)^{r/2}
m^{-r/2}.
\end{aligned}
\]
On the nonlocal region,
$J_m^{\mathrm{far}}(u)
\leq
e^{-\tau m\delta_{\Phi,\rho}},$
because $\Pi_{\chi_0,\kappa_0}$ is a probability measure.
Since
$\sup_{x\geq1}
x^{r/2}e^{-\tau\delta_{\Phi,\rho}x}
<\infty,$
there exists $C_{\mathrm{far}}<\infty$ such that
\[
e^{-\tau m\delta_{\Phi,\rho}}
\leq
C_{\mathrm{far}}m^{-r/2},
\qquad m\geq1.
\]
Consequently,
\[
J_m(u)
\leq
\left\{
\overline q
\left(\frac{2\pi}{\tau h_\Phi}\right)^{r/2}
+
C_{\mathrm{far}}
\right\}
m^{-r/2}.
\]
Thus, for some constants
$0<c_J\leq C_J<\infty$ independent of $m$ and $u$,
\[
c_Jm^{-r/2}
\leq
J_m(u)
\leq
C_Jm^{-r/2},
\qquad
m\geq1,\quad u\in\mathcal K.
\]

It remains to prove the partition-score statement.
The preceding bounds imply that
\[
\log J_m(u)
=
-\frac r2\log m+\xi_m(u),
\qquad
|\xi_m(u)|\leq C_\Gamma,
\]
uniformly over $m\geq1$ and $u\in\mathcal K$, for some
$C_\Gamma<\infty$.

By Theorem~\ref{thm-bregman-conjugacy-collapse},
$\log M_\tau(\mathcal P)
=
-\tau W(\mathcal P)
+
\sum_{j=1}^k
\log J_{n_{C_j}}(\bar X_{C_j}),$
and similarly
$\log M_\tau(\widetilde{\mathcal P})
=
-\tau W(\widetilde{\mathcal P})
+
\sum_{\ell=1}^h
\log J_{n_{A_\ell}}(\bar X_{A_\ell}).$
Therefore,
\[
\begin{aligned}
\log\frac{M_\tau(\mathcal P)}
{M_\tau(\widetilde{\mathcal P})}
&=
-\tau\{W(\mathcal P)-W(\widetilde{\mathcal P})\}
-\frac r2
\left\{
\sum_{j=1}^k\log n_{C_j}
-
\sum_{\ell=1}^h\log n_{A_\ell}
\right\}
+
\sum_{j=1}^k
\xi_{n_{C_j}}(\bar X_{C_j})
-
\sum_{\ell=1}^h
\xi_{n_{A_\ell}}(\bar X_{A_\ell}).
\end{aligned}
\]
Because all cells are nonempty,
$1\leq n_{C_j},n_{A_\ell}\leq n$, while
$k,h\leq k_{\max}$ and $k_{\max}$ is fixed. Hence the final two
lines are $O(\log n)$ uniformly over all such pairs of partitions.
Thus
\[
\log\frac{M_\tau(\mathcal P)}
{M_\tau(\widetilde{\mathcal P})}
=
-\tau\{W(\mathcal P)-W(\widetilde{\mathcal P})\}
+
O(\log n),
\]
uniformly over the stated class of partition pairs.
\hfill$\square$

\appsection{Preliminaries for the proof of
Theorem~\ref{thm-partition-contraction}}
\label{app-theorem2-preliminaries}

\appsubsection{Spatial comparator and occupancy bounds}
\label{app-spatial-comparator}

Recall the mesh partition
$\mathcal B_K=\{B_{K,1},\ldots,B_{K,K^2}\}$.
The comparator construction of \citet{huang2025consistent}, under
Assumption~\ref{assump-spatial}, yields a supported partition
$\pi_{0,K}^\star
=
\{D_{1,0,K}^\star,\ldots,D_{k_0,0,K}^\star\}
\in\mathfrak P_K$
with exactly $k_0$ regions. Let
$\mathcal V_{\ell,0,K}^\star
=
\{B\in\mathcal B_K:B\subseteq D_{\ell,0,K}^\star\},
\qquad
\ell=1,\ldots,k_0,$

and define the comparator-error block set
\[
W_{0,K}
=
\bigcup_{\ell=1}^{k_0}
\left\{
B\in\mathcal V_{\ell,0,K}^\star:
B\not\subseteq D_{\ell,0}
\right\}.
\]
The comparator approximation result of
\citet{huang2025consistent} gives, for a constant
$C_{\mathrm{app}}<\infty$,
\begin{equation}
|W_{0,K}|
\leq
C_{\mathrm{app}}K.
\label{eq-app-comparator-error}
\end{equation}

For each mesh block write
$N_{K,m}
=
\#\{i:s_i\in B_{K,m}\},$
and define the occupancy event
$\mathcal A_n
=
\left\{
c_BL_n\leq N_{K,m}\leq C_BL_n
\text{ for every }m=1,\ldots,K^2
\right\},$
where $0<c_B<C_B<\infty$ are fixed constants. The uniform
block-occupancy result of \citet{huang2025consistent} gives
\begin{equation}
P_0(\mathcal A_n)\longrightarrow1.
\label{eq-app-occupancy-probability}
\end{equation}
Consequently, if $\mathcal S\subseteq\mathcal B_K$ is any collection
of blocks and
\[
I(\mathcal S)
=
\{i:s_i\in B\text{ for some }B\in\mathcal S\},
\qquad
N(\mathcal S)=|I(\mathcal S)|,
\]
then, on $\mathcal A_n$,
\begin{equation}
c_B|\mathcal S|L_n
\leq
N(\mathcal S)
\leq
C_B|\mathcal S|L_n.
\label{eq-app-block-union-occupancy}
\end{equation}

\begin{lemma}[Macroscopic comparator regions]
\label{lem-app-macroscopic-comparator}
There exists $v_0>0$ such that, for all sufficiently large $n$,
\[
\min_{1\leq\ell\leq k_0}
|\mathcal V_{\ell,0,K}^\star|
\geq
v_0K^2.
\]
Moreover, on $\mathcal A_n$, there exists $c_{\mathrm{obs}}>0$ such
that, for all sufficiently large $n$,
\[
\min_{1\leq\ell\leq k_0}
|C_{\ell,0,K}^\star|
\geq
c_{\mathrm{obs}}n,
\qquad
C_{\ell,0,K}^\star
=
\{i:s_i\in D_{\ell,0,K}^\star\}.
\]
\end{lemma}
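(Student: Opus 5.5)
The plan is to compare each comparator region with its true counterpart through the comparator-error block set $W_{0,K}$, and then convert block counts into observation counts with the occupancy bounds \eqref{eq-app-block-union-occupancy}.

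\emph{Block-count bound.} Fix $\ell$. Every block $B\in\mathcal B_K$ that lies entirely inside $D_{\ell,0}$ either belongs to $\mathcal V_{\ell,0,K}^\star$ or belongs to some $\mathcal V_{h,0,K}^\star$ with $h\neq\ell$. In the second case $B\not\subseteq D_{h,0}$; this holds because $B\subseteq D_{\ell,0}$ and the interiors of distinct true regions are disjoint, with blocks of positive area. Hence $B\in W_{0,K}$. Since the comparator cells exhaust $\mathcal B_K$,
\[
|\mathcal V_{\ell,0,K}^\star|\geq \#\{B\in\mathcal B_K:B\subseteq D_{\ell,0}\}-|W_{0,K}|.
\]
It remains to bound $\#\{B:B\subseteq D_{\ell,0}\}$ from below.

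\emph{Blocks inside a true region.} Any block contained in $D_{\ell,0}$ but not in its interior away from $\mathcal B_0$ is handled as follows. Blocks meeting $D_{\ell,0}$ without being contained in it must intersect $\mathcal B_0$; the outer boundary of $[0,1]^2$ causes no issue, since blocks tile the square. By the covering bound in Assumption~\ref{assump-spatial}(2), with $\nu=K^{-1}$, the set $\mathcal B_0$ is covered by at most $C_{\mathcal B}K$ balls of radius $K^{-1}$. Each such ball meets at most a fixed number (say $9$) of blocks, so at most $9C_{\mathcal B}K$ blocks intersect $\mathcal B_0$. Since $D_{\ell,0}$ is covered by the blocks meeting it, each of area $K^{-2}$,
\[
\#\{B:B\subseteq D_{\ell,0}\}\geq K^2|D_{\ell,0}|-9C_{\mathcal B}K\geq a_{\min}K^2-9C_{\mathcal B}K.
\]
Combining this with \eqref{eq-app-comparator-error} gives
\[
|\mathcal V_{\ell,0,K}^\star|\geq a_{\min}K^2-(9C_{\mathcal B}+C_{\mathrm{app}})K\geq (a_{\min}/2)K^2
\]
for large $n$, because $K\to\infty$. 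Take $v_0=a_{\min}/2$.

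\emph{Observation count.} On $\mathcal A_n$, observe that $C_{\ell,0,K}^\star\supseteq I(\mathcal V_{\ell,0,K}^\star)$; in fact the two are equal up to boundary conventions. Then \eqref{eq-app-block-union-occupancy} yields
\[
|C_{\ell,0,K}^\star|\geq c_Bv_0K^2L_n.
\]
Assumption~\ref{assump-spatial}(3) gives $K^2L_n\asymp n$, so this is at least $c_{\mathrm{obs}}n$.

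\emph{Main obstacle.} The main subtlety is the bookkeeping in the first step. One must ensure that a block of $D_{\ell,0}$ assigned to another comparator cell is counted in $W_{0,K}$. One must also ensure that the covering-number bound on $\mathcal B_0$ really controls the number of blocks straddling region boundaries, which requires a constant bound on the number of mesh blocks hit by a ball of radius $K^{-1}$. Both points are elementary but need care.
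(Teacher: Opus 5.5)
Your argument is correct, and your observation-count step is the same as the paper's, but your block-count step takes a different route.

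\emph{The paper's route.} The paper works with areas rather than with interior blocks. Every point of $D^\star_{\ell,0,K}\setminus D_{\ell,0}$ lies in a block of $W_{0,K}$, so \eqref{eq-app-comparator-error} gives $\sum_\ell|D^\star_{\ell,0,K}\setminus D_{\ell,0}|\le C_{\mathrm{app}}/K$. Both families partition $[0,1]^2$ up to null sets, so $\sum_\ell|D_{\ell,0}\setminus D^\star_{\ell,0,K}|$ equals the same quantity. This gives $|D^\star_{\ell,0,K}|\ge a_{\min}-C_{\mathrm{app}}/K$, and hence $|\mathcal V^\star_{\ell,0,K}|=K^2|D^\star_{\ell,0,K}|\ge(a_{\min}/2)K^2$.

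\emph{Your route.} You count the blocks lying wholly inside $D_{\ell,0}$ and subtract $|W_{0,K}|$. This obliges you to bound the number of blocks straddling region boundaries, which needs three extra ingredients:
the covering-number condition in Assumption~\ref{assump-spatial}(2);
an absolute bound on the number of mesh blocks a ball of radius $K^{-1}$ can meet;
and a connectedness argument showing that a block meeting $D_{\ell,0}$ without lying in it must hit $\mathcal B_0$.
The last step uses that the true regions are closed, which the definition of $\mathcal B_0$ as pairwise intersections presupposes. Without closedness, $\mathcal B_0$ could be empty while blocks still straddle a boundary.

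\emph{Comparison.} Your version is more hands-on and gives an explicit count of interior blocks. However, it re-uses the boundary regularity that already underlies the comparator bound. The paper's mass-balance identity makes that step unnecessary: once \eqref{eq-app-comparator-error} is granted, it needs nothing about $\mathcal B_0$ beyond its being Lebesgue-null.
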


\noindent\textit{Proof.}
For each $\ell$, define
\[
R_{\ell,K}^{+}
=
D_{\ell,0,K}^\star\setminus D_{\ell,0},
\qquad
R_{\ell,K}^{-}
=
D_{\ell,0}\setminus D_{\ell,0,K}^\star.
\]
Up to block boundaries of Lebesgue measure zero,
$\cup_{\ell=1}^{k_0}R_{\ell,K}^{+}$ is contained in the union of
the blocks in $W_{0,K}$. Since each block has area $K^{-2}$,
\eqref{eq-app-comparator-error} gives
\[
\sum_{\ell=1}^{k_0}|R_{\ell,K}^{+}|
\leq
\frac{C_{\mathrm{app}}}{K}.
\]
Both
$\{D_{1,0},\ldots,D_{k_0,0}\}$ and
$\{D_{1,0,K}^\star,\ldots,D_{k_0,0,K}^\star\}$ partition
$\mathcal D$ up to null boundaries, so
\[
\sum_{\ell=1}^{k_0}|R_{\ell,K}^{-}|
=
\sum_{\ell=1}^{k_0}|R_{\ell,K}^{+}|
\leq
\frac{C_{\mathrm{app}}}{K}.
\]
Hence, for every $\ell$,
$|D_{\ell,0,K}^\star|
\geq
|D_{\ell,0}|-|R_{\ell,K}^{-}|
\geq
a_{\min}-\frac{C_{\mathrm{app}}}{K}.$
Since $K\to\infty$, the right-hand side is at least
$a_{\min}/2$ for all sufficiently large $n$. Because
$D_{\ell,0,K}^\star$ is a union of blocks of area $K^{-2}$,
\[
|\mathcal V_{\ell,0,K}^\star|
=
K^2|D_{\ell,0,K}^\star|
\geq
\frac{a_{\min}}{2}K^2.
\]
Thus the first claim holds, for example, with
$v_0=a_{\min}/2$.

On $\mathcal A_n$, \eqref{eq-app-block-union-occupancy} gives
\[
|C_{\ell,0,K}^\star|
\geq
c_B|\mathcal V_{\ell,0,K}^\star|L_n
\geq
c_Bv_0K^2L_n.
\]
Since $K^2L_n\asymp n$, the second claim follows.
\hfill$\square$

An immediate consequence of
\eqref{eq-app-block-union-occupancy} and
\eqref{eq-app-comparator-error} is
\begin{equation}
N(W_{0,K})
\leq
C_BC_{\mathrm{app}}KL_n
\qquad\text{on }\mathcal A_n.
\label{eq-app-comparator-error-count}
\end{equation}

\appsubsection{Bregman refinement bounds}
\label{app-bregman-refinement}

We first record the quadratic comparison needed when both arguments
belong to $\mathcal K$.

\begin{lemma}[Quadratic comparison on $\mathcal K$]
\label{lem-app-bregman-K}
Under Assumption~\ref{assump-bregman-response}, for every
$u,v\in\mathcal K$,
\[
\frac{h_\Phi}{2}\|u-v\|^2
\leq
D_\Phi(u,v)
\leq
\frac{\overline h_\Phi}{2}\|u-v\|^2.
\]
\end{lemma}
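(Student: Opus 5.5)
The plan is to reduce the claim to Lemma~\ref{lem-app-local-bregman} by exploiting convexity of $\mathcal K$. The set $\mathcal H_0$ is the convex hull of finitely many points, so it is convex and compact. Since $u\mapsto\operatorname{dist}(u,\mathcal H_0)$ is a convex function, its sublevel set $\mathcal K=\{u:\operatorname{dist}(u,\mathcal H_0)\leq\eta\}$ is also convex and compact. Hence for $u,v\in\mathcal K$ the whole segment $\gamma(t)=v+t(u-v)$, $t\in[0,1]$, lies in $\mathcal K\subset\mathcal K_\rho\subset\mathcal U$. This membership is the only geometric input needed.

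Next I will write the Bregman divergence through Taylor's formula with integral remainder. This is valid because $\Phi\in C^2(\mathcal U)$ and the segment stays inside $\mathcal U$:
\[
D_\Phi(u,v)=\int_0^1(1-t)\left\langle u-v,\nabla^2\Phi\{v+t(u-v)\}(u-v)\right\rangle dt .
\]
Assumption~\ref{assump-bregman-response}(3) bounds the Hessian along the segment between $h_\Phi I_r$ and $\overline h_\Phi I_r$. Integrating these bounds and using $\int_0^1(1-t)\,dt=1/2$ gives both inequalities at once. The argument never needs the restriction $\|u-v\|\leq\rho$ imposed in Lemma~\ref{lem-app-local-bregman}, because convexity of $\mathcal K$ replaces that proximity condition.

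There is no substantive obstacle here. The one point to justify carefully is convexity of $\mathcal K$, which follows from convexity of the distance to a convex set; this is what lets the result hold globally on $\mathcal K$ rather than only locally. For completeness I would note a fallback: if one preferred not to use convexity, the segment from $v$ to $u$ could be split into pieces of length at most $\rho$ and Lemma~\ref{lem-app-local-bregman} applied on each piece. That route is messier, because Bregman divergences are not additive along a path, so the direct convexity argument is the one I will use.
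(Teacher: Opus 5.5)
Your proposal is correct and follows the paper's proof essentially step for step. Like the paper, you use convexity of $\mathcal K$ to keep the segment inside $\mathcal K_\rho$, apply Taylor's formula with integral remainder, and integrate the Hessian bounds using $\int_0^1(1-t)\,dt=1/2$. Your explicit justification of convexity, via convexity of the distance to the convex set $\mathcal H_0$, fills in the one step the paper only asserts.
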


\noindent\textit{Proof.}
The set
\[
\mathcal K
=
\{u:\operatorname{dist}(u,\mathcal H_0)\leq\eta\}
\]
is convex because $\mathcal H_0$ is convex. Hence the line segment
$v+t(u-v)$, $0\leq t\leq1$, lies in
$\mathcal K\subseteq\mathcal K_\rho$. Taylor's formula gives
\[
D_\Phi(u,v)
=
\int_0^1(1-t)
\left\langle
u-v,
\nabla^2\Phi\{v+t(u-v)\}(u-v)
\right\rangle dt.
\]
The Hessian bounds in
Assumption~\ref{assump-bregman-response} and
$\int_0^1(1-t)\,dt=1/2$ yield the result.
\hfill$\square$

For sampled-location partitions $\mathcal Q$ and $\mathcal P$, write
$\mathcal Q\preceq\mathcal P$ when $\mathcal Q$ is a refinement of
$\mathcal P$. For $C\in\mathcal P$, let
$\mathcal Q(C)=\{A\in\mathcal Q:A\subseteq C\}.$

\begin{lemma}[Bregman refinement identity and bounds]
\label{lem-app-refinement}
Suppose $\mathcal Q\preceq\mathcal P$ and the empirical centroids of
all cells of $\mathcal P$ and $\mathcal Q$ belong to $\mathcal K$.
Then
\begin{equation}
W(\mathcal P)-W(\mathcal Q)
=
\sum_{C\in\mathcal P}
\sum_{A\in\mathcal Q(C)}
|A|D_\Phi(\bar X_A,\bar X_C)
\geq0.
\label{eq-app-exact-refinement}
\end{equation}
Moreover,
\begin{align}
W(\mathcal P)-W(\mathcal Q)
&\geq
\frac{h_\Phi}{2}
\sum_{C\in\mathcal P}
\frac{1}{|C|}
\sum_{\substack{A,B\in\mathcal Q(C)\\A<B}}
|A||B|
\|\bar X_A-\bar X_B\|^2,
\label{eq-app-refinement-lower}
\\
W(\mathcal P)-W(\mathcal Q)
&\leq
\frac{\overline h_\Phi}{2}
\sum_{C\in\mathcal P}
\frac{1}{|C|}
\sum_{\substack{A,B\in\mathcal Q(C)\\A<B}}
|A||B|
\|\bar X_A-\bar X_B\|^2.
\label{eq-app-refinement-upper}
\end{align}
Here $A<B$ denotes any fixed ordering of the cells in
$\mathcal Q(C)$, so each unordered pair is counted once.
\end{lemma}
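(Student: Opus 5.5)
The plan is to obtain the exact identity \eqref{eq-app-exact-refinement} cell by cell from the Bregman decomposition, and then to sandwich each Bregman term by Lemma~\ref{lem-app-bregman-K} and rewrite the resulting weighted sums of squares in pairwise form.

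\textbf{Step 1: the exact identity.} Fix $C\in\mathcal P$. Since $\mathcal Q\preceq\mathcal P$, the cells in $\mathcal Q(C)$ partition $C$, so
\[
R_C(\mu)=\sum_{A\in\mathcal Q(C)}R_A(\mu)\qquad(\mu\in\mathcal U).
\]
Each $\bar X_A$ lies in $\mathcal K\subset\mathcal U$. The Bregman decomposition (Banerjee et al., Proposition~1, recalled in Section~\ref{sec-methodology}) therefore gives $R_A(\mu)=W(A)+|A|D_\Phi(\bar X_A,\mu)$. Evaluating at $\mu=\bar X_C\in\mathcal K$ and summing over $A\in\mathcal Q(C)$ yields
\[
W(C)=\sum_{A\in\mathcal Q(C)}\{W(A)+|A|D_\Phi(\bar X_A,\bar X_C)\}.
\]
Summing over $C\in\mathcal P$ gives \eqref{eq-app-exact-refinement}. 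Nonnegativity follows because $D_\Phi\ge0$ by convexity.

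\textbf{Step 2: quadratic sandwich.} Both $\bar X_A$ and $\bar X_C$ lie in $\mathcal K$, so Lemma~\ref{lem-app-bregman-K} bounds each term $D_\Phi(\bar X_A,\bar X_C)$ between $(h_\Phi/2)\|\bar X_A-\bar X_C\|^2$ and $(\overline h_\Phi/2)\|\bar X_A-\bar X_C\|^2$.

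\textbf{Step 3: pairwise rewriting.} It remains to verify, for each $C$,
\[
\sum_{A\in\mathcal Q(C)}|A|\,\|\bar X_A-\bar X_C\|^2=\frac{1}{|C|}\sum_{A<B}|A||B|\,\|\bar X_A-\bar X_B\|^2 .
\]
This is the standard weighted-variance identity, since $\bar X_C=|C|^{-1}\sum_{A\in\mathcal Q(C)}|A|\bar X_A$ and $\sum_A|A|=|C|$. To check it, expand $\sum_{A,B}|A||B|\|\bar X_A-\bar X_B\|^2$ over ordered pairs and write $\bar X_A-\bar X_B=(\bar X_A-\bar X_C)-(\bar X_B-\bar X_C)$. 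The cross term vanishes because $\sum_A|A|(\bar X_A-\bar X_C)=0$. The ordered sum therefore equals $2|C|\sum_A|A|\|\bar X_A-\bar X_C\|^2$, and halving it to count unordered pairs gives the claim.

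Substituting Step 3 into the bounds of Step 2 and summing over $C\in\mathcal P$ yields \eqref{eq-app-refinement-lower} and \eqref{eq-app-refinement-upper}.

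The argument is mostly mechanical. The only point requiring care is confirming that every centroid used lies in the region where the decomposition and the Hessian bounds apply, and this is exactly what the hypothesis that all centroids belong to the convex set $\mathcal K$ provides.
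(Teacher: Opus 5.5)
Your proposal is correct and follows the paper's proof essentially step for step. You apply the per-cell Bregman centroid decomposition at $\bar X_C$ to obtain the exact identity, sandwich each term with Lemma~\ref{lem-app-bregman-K}, and pass to pairwise form via the weighted variance identity, which you verify explicitly (the cross term vanishes because $\sum_{A}|A|(\bar X_A-\bar X_C)=0$) where the paper only asserts it.
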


\noindent\textit{Proof.}
For $A\in\mathcal Q(C)$, the Bregman centroid decomposition gives
$\sum_{i\in A}D_\Phi(X_i,\bar X_C)
=
W(A)+|A|D_\Phi(\bar X_A,\bar X_C).$
Summing first over $A\in\mathcal Q(C)$ and then over
$C\in\mathcal P$ yields \eqref{eq-app-exact-refinement}.

By Lemma~\ref{lem-app-bregman-K},
\[
\frac{h_\Phi}{2}
\|\bar X_A-\bar X_C\|^2
\leq
D_\Phi(\bar X_A,\bar X_C)
\leq
\frac{\overline h_\Phi}{2}
\|\bar X_A-\bar X_C\|^2.
\]
Finally, since
$\bar X_C
=
\frac{1}{|C|}
\sum_{A\in\mathcal Q(C)}|A|\bar X_A,$
the weighted variance identity gives
\[
\sum_{A\in\mathcal Q(C)}
|A|\|\bar X_A-\bar X_C\|^2
=
\frac{1}{|C|}
\sum_{\substack{A,B\in\mathcal Q(C)\\A<B}}
|A||B|
\|\bar X_A-\bar X_B\|^2.
\]
Combining these displays proves both bounds.
\hfill$\square$

\appsubsection{A uniform high-probability response event}
\label{app-uniform-response-event}

For each observation write
\[
\epsilon_i
=
X_i-\mu_{\ell_0(s_i),0},
\]
and for each block define
\[
R_{K,m}
=
\sum_{i:s_i\in B_{K,m}}\epsilon_i.
\]
Fix a constant $M_F>0$ and let
\[
\mathcal F_n
=
\bigcap_{m=1}^{K^2}
\left\{
\|R_{K,m}\|^2
\leq
M_FN_{K,m}\log n
\right\}.
\]

\begin{lemma}[Uniform block-residual concentration]
\label{lem-app-block-residual}
Under Assumption~\ref{assump-bregman-response}, $M_F$ can be chosen
so that
\[
P_0(\mathcal F_n)\longrightarrow1.
\]
More precisely, if $\sigma_X^2>0$, then for all sufficiently large
$n$ and every realization $s_{1:n}\in\mathcal A_n$,
\[
P_0(\mathcal F_n^c\mid s_{1:n})
\leq
2rK^2
n^{-M_F/(2r\sigma_X^2)}.
\]
Thus any fixed choice
\[
M_F>2r\sigma_X^2
\]
is sufficient.
\end{lemma}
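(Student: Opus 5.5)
The plan is to condition on $s_{1:n}$, reduce the vector bound to scalar directional bounds, and apply a Chernoff bound to each block residual, which is a sum of $N_{K,m}$ conditionally independent terms. Fix an orthonormal basis $v_1,\ldots,v_r$ of $\mathcal V$. Since $\|R_{K,m}\|^2=\sum_{a=1}^r\langle v_a,R_{K,m}\rangle^2$, the event $\|R_{K,m}\|^2>M_FN_{K,m}\log n$ forces some coordinate to satisfy
\[
|\langle v_a,R_{K,m}\rangle|>\{M_FN_{K,m}\log n/r\}^{1/2}=:x_m .
\]
A union bound over the $K^2$ blocks, the $r$ directions and the two signs then reduces the problem to bounding $P_0(\langle \pm v_a,R_{K,m}\rangle>x_m\mid s_{1:n})$ uniformly. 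Blocks with $N_{K,m}=0$ have $R_{K,m}=0$ and can be dropped, although on $\mathcal A_n$ there are none.

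Conditional independence in Assumption~\ref{assump-bregman-response}(4) gives
\[
E_0[\exp\{t\langle v,R_{K,m}\rangle\}\mid s_{1:n}]\le\exp(N_{K,m}\sigma_X^2t^2/2)
\qquad\text{for every unit } v \text{ and } |t|\le b_X^{-1}.
\]
Markov's inequality then yields
\[
P_0(\langle v,R_{K,m}\rangle>x\mid s_{1:n})\le\exp\{-tx+N_{K,m}\sigma_X^2t^2/2\}.
\]
I would choose the unconstrained optimizer $t^\star=x_m/(N_{K,m}\sigma_X^2)$, which gives the tail bound
\[
\exp\{-x_m^2/(2N_{K,m}\sigma_X^2)\}=n^{-M_F/(2r\sigma_X^2)}.
\]
Summing over the $2rK^2$ events produces the stated bound $2rK^2n^{-M_F/(2r\sigma_X^2)}$.

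The main obstacle is that the moment bound only holds locally, so $t^\star$ must satisfy $t^\star\le b_X^{-1}$. This is where the occupancy event $\mathcal A_n$ enters. On $\mathcal A_n$ we have $N_{K,m}\ge c_BL_n=c_B\log^{1+\alpha_b}(n)$, so
\[
t^\star=\{M_F\log n/(rN_{K,m})\}^{1/2}/\sigma_X\le\{M_F/(rc_B\sigma_X^2)\}^{1/2}\log^{-\alpha_b/2}(n)\to0 .
\]
Hence $t^\star\le b_X^{-1}$ for all sufficiently large $n$, uniformly over blocks and over $s_{1:n}\in\mathcal A_n$. This explains the qualifier ``for all sufficiently large $n$''. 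The case $\sigma_X^2=0$ is degenerate: letting $t$ range over $(0,b_X^{-1}]$ shows $\langle v,\epsilon_i\rangle$ has zero moment generating exponent, hence $\epsilon_i=0$ almost surely and $\mathcal F_n$ holds trivially. Equivalently, one may replace $\sigma_X^2$ by any positive value, since the assumption is monotone in $\sigma_X^2$.

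It remains to pass to unconditional probability. Write
\[
P_0(\mathcal F_n^c)\le P_0(\mathcal A_n^c)+E_0[P_0(\mathcal F_n^c\mid s_{1:n})\mathbf 1_{\mathcal A_n}].
\]
The first term vanishes by \eqref{eq-app-occupancy-probability}. For the second, $K^2\lesssim n$ by Assumption~\ref{assump-spatial}(3), so the bound is $O(n^{1-M_F/(2r\sigma_X^2)})$, which tends to zero whenever $M_F>2r\sigma_X^2$. This gives $P_0(\mathcal F_n)\to1$.
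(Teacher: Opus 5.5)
Your proposal is correct and follows the paper's proof essentially step for step: reduce to coordinates along an orthonormal basis, apply the Chernoff bound at the unconstrained optimizer (admissible because of the occupancy lower bound $N_{K,m}\ge c_BL_n$ on $\mathcal A_n$), take a union bound over $2rK^2$ events, and handle $\sigma_X^2=0$ through equality in Jensen's inequality. One trivial slip: the optimizer is $t^\star=\{M_F\log n/(rN_{K,m})\}^{1/2}/\sigma_X^2$ rather than $/\sigma_X$, which changes only a constant and does not affect the conclusion $t^\star\to0$ uniformly on $\mathcal A_n$.
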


\noindent\textit{Proof.}
Assume first that $\sigma_X^2>0$ and condition on
$s_{1:n}\in\mathcal A_n$. Let
$e_1,\ldots,e_r$ be an orthonormal basis of $\mathcal V$ and define
\[
Z_{m,a}
=
\langle e_a,R_{K,m}\rangle.
\]
Conditional independence and
Assumption~\ref{assump-bregman-response} imply, for
$|t|\leq b_X^{-1}$,
\[
E_0\{\exp(tZ_{m,a})\mid s_{1:n}\}
\leq
\exp\left(
\frac{\sigma_X^2N_{K,m}t^2}{2}
\right).
\]
Set
$x_{m,n}
=
\left(
\frac{M_FN_{K,m}\log n}{r}
\right)^{1/2}.$
The optimizing Chernoff value
$t_{m,n}
=
\frac{x_{m,n}}{\sigma_X^2N_{K,m}}$
satisfies $t_{m,n}\to0$ uniformly in $m$, because on
$\mathcal A_n$,
$N_{K,m}\geq c_BL_n
=
c_B\log^{1+\alpha_b}(n).$
Thus, for all sufficiently large $n$,
$|t_{m,n}|\leq b_X^{-1}$. Chernoff's inequality applied to
$Z_{m,a}$ and $-Z_{m,a}$ yields
\[
P_0\{|Z_{m,a}|>x_{m,n}\mid s_{1:n}\}
\leq
2n^{-M_F/(2r\sigma_X^2)}.
\]
If
$\|R_{K,m}\|^2>M_FN_{K,m}\log n$, at least one of its $r$
coordinates has squared magnitude exceeding
$M_FN_{K,m}\log n/r$. Hence
\[
P_0\left\{
\|R_{K,m}\|^2>M_FN_{K,m}\log n
\,\middle|\,s_{1:n}
\right\}
\leq
2r n^{-M_F/(2r\sigma_X^2)}.
\]
A union bound over the $K^2$ blocks gives the displayed bound for
$\mathcal F_n^c$.

Since $K^2\asymp n/L_n$,
$M_F>2r\sigma_X^2$ implies
$
K^2n^{-M_F/(2r\sigma_X^2)}
\longrightarrow0.
$
Together with
$P_0(\mathcal A_n)\to1$, this proves
$P_0(\mathcal F_n)\to1$.

If $\sigma_X^2=0$, the exponential-moment bound and the conditional
mean-zero property imply, by equality in Jensen's inequality, that
$\langle v,\epsilon_i\rangle=0$ almost surely for every
$v\in\mathcal V$. Hence $\epsilon_i=0$ almost surely and
$\mathcal F_n$ holds with conditional probability one.
\hfill$\square$

Fix henceforth such an $M_F$ and define
\begin{equation}
\mathcal G_n
=
\mathcal A_n\cap\mathcal F_n.
\label{eq-app-Gn}
\end{equation}
By \eqref{eq-app-occupancy-probability} and
Lemma~\ref{lem-app-block-residual},
\begin{equation}
P_0(\mathcal G_n)\longrightarrow1.
\label{eq-app-Gn-probability}
\end{equation}

\appsubsection{Uniform centroid control}
\label{app-centroid-control}

For a nonempty block collection $\mathcal S\subseteq\mathcal B_K$,
define
\[
\bar X_{\mathcal S}
=
\frac{1}{N(\mathcal S)}
\sum_{i\in I(\mathcal S)}X_i,
\qquad
\mu_{\mathcal S}^{\circ}
=
\frac{1}{N(\mathcal S)}
\sum_{i\in I(\mathcal S)}
\mu_{\ell_0(s_i),0}.
\]
Also define
\begin{equation}
a_n
=
\left(\frac{M_F}{c_B}\right)^{1/2}
\log^{-\alpha_b/2}(n).
\label{eq-app-an}
\end{equation}

\begin{lemma}[Uniform block-union centroid control]
\label{lem-app-union-centroid}
On $\mathcal G_n$, every nonempty
$\mathcal S\subseteq\mathcal B_K$ satisfies
\[
\|\bar X_{\mathcal S}-\mu_{\mathcal S}^{\circ}\|
\leq
a_n.
\]
Consequently, for all sufficiently large $n$, on $\mathcal G_n$,
\[
\bar X_{\mathcal S}\in\mathcal K
\]
simultaneously for every nonempty block collection
$\mathcal S\subseteq\mathcal B_K$.
\end{lemma}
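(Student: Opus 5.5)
The difference $\bar X_{\mathcal S}-\mu_{\mathcal S}^{\circ}$ is an average of residuals, so I will write it as $N(\mathcal S)^{-1}\sum_{B_{K,m}\in\mathcal S}R_{K,m}$ and control each block residual on $\mathcal F_n$. The aim is a bound that is uniform over all nonempty block collections, including very large ones, without a union bound over the exponentially many collections. Working block by block and then aggregating by Cauchy--Schwarz achieves this.

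\textbf{Step 1: bound the residual sum.} Since $\mu_{\mathcal S}^{\circ}$ averages the true means over the same observations, we have $\sum_{i\in I(\mathcal S)}(X_i-\mu_{\ell_0(s_i),0})=\sum_{B_{K,m}\in\mathcal S}R_{K,m}$. On $\mathcal F_n$, each block satisfies $\|R_{K,m}\|\le (M_FN_{K,m}\log n)^{1/2}$. The triangle inequality and Cauchy--Schwarz over the blocks of $\mathcal S$ then give
\[
\Bigl\|\sum_{B_{K,m}\in\mathcal S}R_{K,m}\Bigr\|\le (M_F\log n)^{1/2}\sum_{B_{K,m}\in\mathcal S}N_{K,m}^{1/2}\le (M_F\log n)^{1/2}\,|\mathcal S|^{1/2}N(\mathcal S)^{1/2}.
\]

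\textbf{Step 2: divide by $N(\mathcal S)$.} This gives $\|\bar X_{\mathcal S}-\mu^{\circ}_{\mathcal S}\|\le (M_F\log n)^{1/2}\{|\mathcal S|/N(\mathcal S)\}^{1/2}$. On $\mathcal A_n$, equation \eqref{eq-app-block-union-occupancy} gives $N(\mathcal S)\ge c_B|\mathcal S|L_n$. Hence the bound is at most
\[
\Bigl(\frac{M_F\log n}{c_BL_n}\Bigr)^{1/2}=\Bigl(\frac{M_F}{c_B}\Bigr)^{1/2}\log^{-\alpha_b/2}(n)=a_n,
\]
using $L_n=\log^{1+\alpha_b}(n)$. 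This holds simultaneously for every $\mathcal S$, because only the single event $\mathcal G_n$ was used.

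\textbf{Step 3: membership in $\mathcal K$.} The quantity $\mu^{\circ}_{\mathcal S}$ is a convex combination of $\mu_{1,0},\ldots,\mu_{k_0,0}$, so it lies in $\mathcal H_0$. Therefore $\operatorname{dist}(\bar X_{\mathcal S},\mathcal H_0)\le a_n$. Since $a_n\to0$ deterministically, $a_n\le\eta$ for all large $n$, which gives $\bar X_{\mathcal S}\in\mathcal K$ for all such $\mathcal S$ at once.

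\textbf{Main obstacle.} The main difficulty is the uniformity over exponentially many collections $\mathcal S$. A direct concentration bound for each $\mathcal S$ would require a union bound of size $2^{K^2}$, which is far too large. The blockwise Cauchy--Schwarz aggregation in Step 1 avoids this. The price is a loss of a factor $|\mathcal S|^{1/2}$, and this factor is exactly absorbed by the lower occupancy bound $N(\mathcal S)\ge c_B|\mathcal S|L_n$ in Step 2. The $\log^{1+\alpha_b}$ growth of $L_n$, compared with the $\log n$ in $\mathcal F_n$, is what makes $a_n$ vanish.
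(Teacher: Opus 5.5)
Your proof is correct and is essentially the paper's argument. Both proofs use the blockwise bound $\|R_{K,m}\|\le (M_F N_{K,m}\log n)^{1/2}$ on $\mathcal F_n$ together with the lower occupancy bound on $\mathcal A_n$, and then conclude from $\mu^{\circ}_{\mathcal S}\in\mathcal H_0$ and $a_n\le\eta$. The only difference is cosmetic: the paper writes $\bar X_{\mathcal S}-\mu^{\circ}_{\mathcal S}$ as a convex combination of the block averages $R_{K,m}/N_{K,m}$ and bounds it by their maximum, whereas you use the triangle inequality and Cauchy--Schwarz with $N(\mathcal S)\ge c_B|\mathcal S|L_n$; both routes give exactly $a_n$.
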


\noindent\textit{Proof.}
On $\mathcal G_n$,
\[
\bar X_{\mathcal S}-\mu_{\mathcal S}^{\circ}
=
\frac{1}{N(\mathcal S)}
\sum_{B_{K,m}\in\mathcal S}R_{K,m}
=
\sum_{B_{K,m}\in\mathcal S}
\frac{N_{K,m}}{N(\mathcal S)}
\frac{R_{K,m}}{N_{K,m}}.
\]
The coefficients in the final expression are nonnegative and sum to
one. Therefore
\[
\|\bar X_{\mathcal S}-\mu_{\mathcal S}^{\circ}\|
\leq
\max_{B_{K,m}\in\mathcal S}
\frac{\|R_{K,m}\|}{N_{K,m}}.
\]
On $\mathcal F_n$ and $\mathcal A_n$,
\[
\frac{\|R_{K,m}\|}{N_{K,m}}
\leq
\left(
\frac{M_F\log n}{N_{K,m}}
\right)^{1/2}
\leq
\left(
\frac{M_F\log n}{c_BL_n}
\right)^{1/2}
=
a_n.
\]
This proves the first claim.

The vector $\mu_{\mathcal S}^{\circ}$ is a convex combination of the
true regional centers, so
$\mu_{\mathcal S}^{\circ}\in\mathcal H_0$. Hence
\[
\operatorname{dist}
(\bar X_{\mathcal S},\mathcal H_0)
\leq a_n.
\]
Since $a_n\to0$, we have $a_n\leq\eta$ for all sufficiently large
$n$, and therefore
$\bar X_{\mathcal S}\in\mathcal K$.
\hfill$\square$

In particular, on $\mathcal G_n$, for all sufficiently large $n$,
the empirical centroid of every candidate region, every comparator
region, and every nonempty intersection of such regions belongs to
$\mathcal K$. Hence Proposition~\ref{prop-uniform-integrated-center}
and Lemma~\ref{lem-app-refinement} apply uniformly over the supported
partition space.

\appsubsection{Approximate purity of comparator-overlap cells}
\label{app-overlap-purity}

Fix
\[
\pi^\star
=
\{D_1^\star,\ldots,D_k^\star\}
\in\mathfrak P_K,
\]
and let $\mathcal V_j$ be the block set of $D_j^\star$.
For $b_{j\ell}>0$, define
\[
\mathcal V_{j\ell}
=
\mathcal V_j\cap
\mathcal V_{\ell,0,K}^\star,
\qquad
C_{j\ell}
=
I(\mathcal V_{j\ell}),
\qquad
N_{j\ell}
=
|C_{j\ell}|,
\]
and
\[
\bar X_{j\ell}
=
\frac{1}{N_{j\ell}}
\sum_{i\in C_{j\ell}}X_i,
\qquad
\mu_{j\ell}^{\circ}
=
\frac{1}{N_{j\ell}}
\sum_{i\in C_{j\ell}}
\mu_{\ell_0(s_i),0}.
\]
Let
\[
N_{j\ell}^{\mathrm{bad}}
=
N(\mathcal V_{j\ell}\cap W_{0,K}),
\qquad
\Delta_\mu
=
\max_{1\leq\ell,h\leq k_0}
\|\mu_{\ell,0}-\mu_{h,0}\|.
\]

\begin{lemma}[Approximate purity]
\label{lem-app-overlap-purity}
On $\mathcal G_n$, uniformly over
$\pi^\star\in\mathfrak P_K$ and all pairs $(j,\ell)$ satisfying
$b_{j\ell}>0$,
\[
\|\bar X_{j\ell}-\mu_{\ell,0}\|
\leq
a_n+C_{\mathrm{pur}}\frac{K}{b_{j\ell}},
\]
where
\[
C_{\mathrm{pur}}
=
\Delta_\mu
\frac{C_BC_{\mathrm{app}}}{c_B}.
\]
\end{lemma}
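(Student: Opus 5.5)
The plan is to use the triangle inequality through the signal centroid $\mu_{j\ell}^{\circ}$:
\[
\|\bar X_{j\ell}-\mu_{\ell,0}\|\leq\|\bar X_{j\ell}-\mu_{j\ell}^{\circ}\|+\|\mu_{j\ell}^{\circ}-\mu_{\ell,0}\|,
\]
and bound the two terms separately.

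\textbf{Noise term.} The block collection $\mathcal V_{j\ell}$ is nonempty because $b_{j\ell}>0$. Applying Lemma~\ref{lem-app-union-centroid} with $\mathcal S=\mathcal V_{j\ell}$ gives, on $\mathcal G_n$, the bound $\|\bar X_{j\ell}-\mu_{j\ell}^{\circ}\|\leq a_n$. This holds simultaneously for every block collection, so it is automatically uniform over $\pi^\star$ and over $(j,\ell)$.

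\textbf{Signal term.} This is the main step. Write
\[
\mu_{j\ell}^{\circ}-\mu_{\ell,0}=N_{j\ell}^{-1}\sum_{i\in C_{j\ell}}\{\mu_{\ell_0(s_i),0}-\mu_{\ell,0}\}.
\]
A summand can be nonzero only if $\ell_0(s_i)\neq\ell$.

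\emph{Good blocks contribute nothing.} Consider a block $B\in\mathcal V_{j\ell}$ that is not in $W_{0,K}$. Then $B\in\mathcal V^\star_{\ell,0,K}$ and $B\subseteq D_{\ell,0}$, so every $s_i\in B$ has $\ell_0(s_i)=\ell$. The one caveat is points on $\mathcal B_0$, where the deterministic boundary convention applies. I would handle these either by noting that $s_i\in\mathcal B_0$ has probability zero, since $\mathcal B_0$ is Lebesgue-null by the covering condition and the design has a density, or by including such points in the definition of the bad blocks. Either way, the nonzero summands come only from blocks in $\mathcal V_{j\ell}\cap W_{0,K}$.

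\emph{Bounding the bad blocks.} Each nonzero summand has norm at most $\Delta_\mu$. Hence
\[
\|\mu_{j\ell}^{\circ}-\mu_{\ell,0}\|\leq\Delta_\mu N_{j\ell}^{\mathrm{bad}}/N_{j\ell}.
\]
On $\mathcal A_n$, the occupancy bound \eqref{eq-app-block-union-occupancy} together with \eqref{eq-app-comparator-error} gives
\[
N_{j\ell}^{\mathrm{bad}}\leq C_BL_n|W_{0,K}|\leq C_BC_{\mathrm{app}}KL_n,
\]
and the same occupancy bound gives $N_{j\ell}\geq c_Bb_{j\ell}L_n$. The factors $L_n$ cancel, leaving $\Delta_\mu C_BC_{\mathrm{app}}K/(c_Bb_{j\ell})=C_{\mathrm{pur}}K/b_{j\ell}$.

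Adding the two bounds gives the claimed inequality. None of the constants depend on $\pi^\star$, $j$ or $\ell$, so the bound is uniform over all supported partitions and all pairs with $b_{j\ell}>0$. The only delicate point is the boundary-convention issue in the signal term; everything else is direct bookkeeping.
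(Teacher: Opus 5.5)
Your proposal is correct and follows the paper's proof essentially step for step: the triangle inequality through $\mu_{j\ell}^{\circ}$, Lemma~\ref{lem-app-union-centroid} for the noise term, and the contamination bound $\Delta_\mu N_{j\ell}^{\mathrm{bad}}/N_{j\ell}$, using $N_{j\ell}^{\mathrm{bad}}\leq N(W_{0,K})\leq C_BC_{\mathrm{app}}KL_n$ and $N_{j\ell}\geq c_Bb_{j\ell}L_n$ on $\mathcal A_n$. Your remark about points on $\mathcal B_0$ is a legitimate detail that the paper leaves implicit, and your first resolution suffices: the covering condition makes $\mathcal B_0$ Lebesgue-null, so almost surely no $s_i$ lies on it.
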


\noindent\textit{Proof.}
By Lemma~\ref{lem-app-union-centroid},
\[
\|\bar X_{j\ell}-\mu_{j\ell}^{\circ}\|
\leq
a_n.
\]
Any block in
$\mathcal V_{j\ell}\setminus W_{0,K}$ is wholly contained in
$D_{\ell,0}$. Hence observations outside
$\mathcal V_{j\ell}\cap W_{0,K}$ have true center exactly
$\mu_{\ell,0}$. Consequently,
$\|\mu_{j\ell}^{\circ}-\mu_{\ell,0}\|
\leq
\Delta_\mu
\frac{N_{j\ell}^{\mathrm{bad}}}{N_{j\ell}}.$
On $\mathcal A_n$,
$N_{j\ell}
\geq
c_Bb_{j\ell}L_n,$
while \eqref{eq-app-comparator-error-count} gives
\[
N_{j\ell}^{\mathrm{bad}}
\leq
N(W_{0,K})
\leq
C_BC_{\mathrm{app}}KL_n.
\]
Therefore
\[
\|\mu_{j\ell}^{\circ}-\mu_{\ell,0}\|
\leq
C_{\mathrm{pur}}\frac{K}{b_{j\ell}}.
\]
The result follows by the triangle inequality.
\hfill$\square$

\begin{corollary}[Separation of large distinct-label overlap cells]
\label{cor-app-overlap-separation}
Let $d_n>0$ satisfy
\[
\frac{K}{d_n}\longrightarrow0.
\]
On $\mathcal G_n$, suppose a candidate region $j$ has overlaps with
two distinct comparator labels $\ell\neq h$ satisfying
\[
b_{j\ell}\geq d_n,
\qquad
b_{jh}\geq d_n.
\]
Then, for all sufficiently large $n$,
\[
\|\bar X_{j\ell}-\bar X_{jh}\|
\geq
\frac{\delta_0}{2}.
\]
\end{corollary}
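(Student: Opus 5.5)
The plan is to derive the separation directly from Lemma~\ref{lem-app-overlap-purity} and the triangle inequality. We work on $\mathcal G_n$. Take a candidate region $j$ and two distinct comparator labels $\ell\neq h$ with $b_{j\ell}\geq d_n$ and $b_{jh}\geq d_n$. Both overlaps are then nonempty, so $\bar X_{j\ell}$ and $\bar X_{jh}$ are well defined and the lemma applies to each pair. It gives
\[
\|\bar X_{j\ell}-\mu_{\ell,0}\|\leq a_n+C_{\mathrm{pur}}\frac{K}{b_{j\ell}}\leq a_n+C_{\mathrm{pur}}\frac{K}{d_n},
\]
and the same bound holds with $h$ in place of $\ell$.

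Next we apply the reverse triangle inequality together with the separation condition of Assumption~\ref{assump-bregman-response}, namely $\|\mu_{\ell,0}-\mu_{h,0}\|\geq\delta_0$. This yields
\[
\|\bar X_{j\ell}-\bar X_{jh}\|\geq\|\mu_{\ell,0}-\mu_{h,0}\|-\|\bar X_{j\ell}-\mu_{\ell,0}\|-\|\bar X_{jh}-\mu_{h,0}\|\geq\delta_0-2a_n-2C_{\mathrm{pur}}\frac{K}{d_n}.
\]

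It remains to show that the error terms are eventually at most $\delta_0/4$. By \eqref{eq-app-an}, $a_n=(M_F/c_B)^{1/2}\log^{-\alpha_b/2}(n)\to0$. By hypothesis $K/d_n\to0$, and $C_{\mathrm{pur}}$ is a fixed constant. Both $a_n$ and $K/d_n$ are deterministic sequences that do not depend on $\pi^\star$, $j$, $\ell$ or $h$. Hence there is an $n_0$ such that for all $n\geq n_0$ we have $2a_n+2C_{\mathrm{pur}}K/d_n\leq\delta_0/2$. For such $n$ the displayed lower bound is at least $\delta_0/2$, uniformly over supported $\pi^\star$ and over qualifying $(j,\ell,h)$ on $\mathcal G_n$.

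There is no substantive obstacle here. The one point needing care is this uniformity: the threshold $n_0$ must not depend on the partition or the labels. That holds because Lemma~\ref{lem-app-overlap-purity} is itself uniform on $\mathcal G_n$ and its bound involves only $b_{j\ell}$, which is at least $d_n$.
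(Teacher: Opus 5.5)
Your proposal is correct and matches the paper's proof: you apply Lemma~\ref{lem-app-overlap-purity} to both overlap cells with $b_{j\ell},b_{jh}\geq d_n$, combine the triangle inequality with the separation $\|\mu_{\ell,0}-\mu_{h,0}\|\geq\delta_0$, and use $a_n\to0$ and $K/d_n\to0$. Your remark that the threshold $n_0$ is uniform over partitions and labels is a small clarification the paper leaves implicit.
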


\noindent\textit{Proof.}
By Lemma~\ref{lem-app-overlap-purity},
\[
\|\bar X_{j\ell}-\mu_{\ell,0}\|
\leq
a_n+C_{\mathrm{pur}}\frac{K}{d_n},
\qquad
\|\bar X_{jh}-\mu_{h,0}\|
\leq
a_n+C_{\mathrm{pur}}\frac{K}{d_n}.
\]
Since $\ell\neq h$,
Assumption~\ref{assump-bregman-response} gives
\[
\|\mu_{\ell,0}-\mu_{h,0}\|
\geq\delta_0.
\]
Hence
\[
\|\bar X_{j\ell}-\bar X_{jh}\|
\geq
\delta_0
-
2a_n
-
2C_{\mathrm{pur}}\frac{K}{d_n}.
\]
Because $a_n\to0$ and $K/d_n\to0$, the right-hand side is at least
$\delta_0/2$ for all sufficiently large $n$.
\hfill$\square$

\appsubsection{Common refinement with the comparator}
\label{app-common-refinement}

For a supported candidate $\pi^\star\in\mathfrak P_K$, define the
sampled comparator partition
\[
\mathcal P_{0,K}^\star
=
\{C_{1,0,K}^\star,\ldots,C_{k_0,0,K}^\star\}
\]
and its common refinement with $\mathcal P(\pi^\star)$ by
\begin{equation}
\mathcal P_e(\pi^\star)
=
\left\{
C_j(\pi^\star)\cap C_{\ell,0,K}^\star:
C_j(\pi^\star)\cap C_{\ell,0,K}^\star\neq\varnothing
\right\}.
\label{eq-app-common-refinement}
\end{equation}
By construction,
\[
\mathcal P_e(\pi^\star)
\preceq
\mathcal P(\pi^\star),
\qquad
\mathcal P_e(\pi^\star)
\preceq
\mathcal P_{0,K}^\star.
\]
On $\mathcal G_n$, for all sufficiently large $n$, all empirical
centroids appearing in these three partitions belong to $\mathcal K$.
Therefore the refinement identities and bounds in
Lemma~\ref{lem-app-refinement} may be applied uniformly to every
$\pi^\star\in\mathfrak P_K$.

For later use, define the empirical fit gained by refining the
comparator according to the candidate as
\begin{equation}
S_0(\pi^\star)
=
W(\mathcal P_{0,K}^\star)
-
W\{\mathcal P_e(\pi^\star)\}
\geq0.
\label{eq-app-S0}
\end{equation}

\appsection{Score separation for the bad partition classes}
\label{app-bad-partition-scores}

\appsubsection{Bad partition classes}
\label{app-bad-partition-classes}

For use in the proof, decompose the complement of
$\Pi_{5,K}^\star$ into four classes. Define
\[
\Pi_{1,K}^\star
=
\left\{
\pi^\star\in\mathfrak P_K:
|\pi^\star|<k_0
\right\},
\]
\[
\Pi_{2,K}^\star
=
\left\{
\pi^\star\in\mathfrak P_K:
|\pi^\star|>k_0
\right\},
\]
and
\[
\Pi_{3,K}^\star
=
\left\{
\pi^\star\in\mathfrak P_K:
|\pi^\star|=k_0,\,
\{m_{\pi^\star}(j):1\leq j\leq k_0\}
\neq
\{1,\ldots,k_0\}
\right\}.
\]
Finally, let
\[
\Pi_{4,K}^\star
=
\left\{
\begin{array}{l}
\pi^\star\in\mathfrak P_K:
|\pi^\star|=k_0,\,
\{m_{\pi^\star}(j):1\leq j\leq k_0\}
=
\{1,\ldots,k_0\},\\[2mm]
q_K(\pi^\star)\geq q_n
\end{array}
\right\}.
\]
Thus
\[
\mathfrak P_K
=
\Pi_{1,K}^\star
\,\dot\cup\,
\Pi_{2,K}^\star
\,\dot\cup\,
\Pi_{3,K}^\star
\,\dot\cup\,
\Pi_{4,K}^\star
\,\dot\cup\,
\Pi_{5,K}^\star.
\]

For
$\pi^\star\in\Pi_{4,K}^\star\cup\Pi_{5,K}^\star$,
the best-match map is a bijection. After relabelling the candidate
regions we may therefore take
\[
m_{\pi^\star}(j)=j,
\qquad j=1,\ldots,k_0,
\]
in which case
\begin{equation}
q_K(\pi^\star)
=
\sum_{j=1}^{k_0}
\sum_{\ell\neq j}b_{j\ell}.
\label{eq-app-q-diagonal}
\end{equation}

\appsubsection{Comparator-refinement gain}
\label{app-comparator-refinement-gain}

We first control the quantity $S_0(\pi^\star)$ defined in
\eqref{eq-app-S0}.

\begin{suppproposition}[Global comparator-refinement gain]
\label{prop-app-global-refinement-gain}
On $\mathcal G_n$, for all sufficiently large $n$, uniformly over
$\pi^\star\in\mathfrak P_K$,
\[
0\leq S_0(\pi^\star)
\leq
C_{\mathrm{ref}}
\left\{
n\log^{-\alpha_b}(n)+KL_n
\right\}
\]
for some constant $C_{\mathrm{ref}}<\infty$. Consequently,
\[
\sup_{\pi^\star\in\mathfrak P_K}
\frac{S_0(\pi^\star)}{n}
\longrightarrow0.
\]
\end{suppproposition}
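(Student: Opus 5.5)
The plan is to apply the exact refinement identity of Lemma~\ref{lem-app-refinement} with $\mathcal Q=\mathcal P_e(\pi^\star)$ and $\mathcal P=\mathcal P_{0,K}^\star$. This is legitimate on $\mathcal G_n$ for large $n$, since all the relevant centroids lie in $\mathcal K$ by Lemma~\ref{lem-app-union-centroid}. Nonnegativity of $S_0$ is then immediate. For the upper bound I would use the cell-to-parent form rather than the pairwise form, because it avoids tracking pairs:
\[
S_0(\pi^\star)=\sum_{\ell=1}^{k_0}\sum_{j:\,b_{j\ell}>0}N_{j\ell}\,D_\Phi(\bar X_{j\ell},\bar X_{C^\star_{\ell,0,K}})
\le\frac{\overline h_\Phi}{2}\sum_{\ell}\sum_{j}N_{j\ell}\,\|\bar X_{j\ell}-\bar X_{C^\star_{\ell,0,K}}\|^2 ,
\]
where the inequality uses Lemma~\ref{lem-app-bregman-K}. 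Each overlap cell $C_{j\ell}$ and each comparator cell is of the form $I(\mathcal S)$ for a block collection $\mathcal S$, so the centroid bounds of Sections~\ref{app-centroid-control} and~\ref{app-overlap-purity} apply.

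Next I would bound each squared distance by the triangle inequality through the centers $\mu^\circ$:
\[
\|\bar X_{j\ell}-\bar X_{C^\star_{\ell}}\|^2\le 3\|\bar X_{j\ell}-\mu^\circ_{j\ell}\|^2+3\|\mu^\circ_{j\ell}-\mu^\circ_{C^\star_\ell}\|^2+3\|\mu^\circ_{C^\star_\ell}-\bar X_{C^\star_\ell}\|^2 .
\]
The first and third terms are at most $a_n^2$ by Lemma~\ref{lem-app-union-centroid}. Since $\sum_{j,\ell}N_{j\ell}=n$ and $a_n^2\asymp\log^{-\alpha_b}(n)$, these terms together contribute at most $C\,n\log^{-\alpha_b}(n)$.

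The middle term carries the spatial mismatch, and it needs care, which is why I do not simply use the purity bound $K/b_{j\ell}$ squared; that bound degrades for small overlaps. Instead I write $\mu^\circ_{j\ell}-\mu_{\ell,0}=N_{j\ell}^{-1}\sum_{i\in C_{j\ell}}(\mu_{\ell_0(s_i),0}-\mu_{\ell,0})$. Only observations in blocks of $W_{0,K}$ contribute, so $\|\mu^\circ_{j\ell}-\mu_{\ell,0}\|\le\Delta_\mu N^{\rm bad}_{j\ell}/N_{j\ell}$. Because this quantity is also at most $\Delta_\mu$, we get
\[
N_{j\ell}\,\|\mu^\circ_{j\ell}-\mu_{\ell,0}\|^2\le\Delta_\mu^2\,N^{\rm bad}_{j\ell}.
\]
The same argument applied to the comparator cell gives $N\|\mu^\circ_{C^\star_\ell}-\mu_{\ell,0}\|^2\le\Delta_\mu^2 N(W_{0,K})$; equivalently, I weight the comparator term by $\sum_j N_{j\ell}=|C^\star_\ell|$. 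Splitting the middle term as $2\|\mu^\circ_{j\ell}-\mu_{\ell,0}\|^2+2\|\mu_{\ell,0}-\mu^\circ_{C^\star_\ell}\|^2$ and summing over $j,\ell$, the total is bounded by a constant times $\sum_{j,\ell}N^{\rm bad}_{j\ell}+k_0N(W_{0,K})\le(1+k_0)N(W_{0,K})$. By \eqref{eq-app-comparator-error-count} this is $O(KL_n)$.

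Combining the three terms gives $S_0\le C_{\rm ref}\{n\log^{-\alpha_b}(n)+KL_n\}$. None of the constants depend on $\pi^\star$, because every bound used holds simultaneously over all block collections on $\mathcal G_n$, so the bound is uniform. Finally, $KL_n/n\asymp L_n^{1/2}n^{-1/2}\to0$ and $\log^{-\alpha_b}(n)\to0$, so $\sup_{\pi^\star}S_0(\pi^\star)/n\to0$. The main obstacle is the mismatch term in the previous paragraph: the proof only works if the boundary contribution is charged linearly to $N^{\rm bad}$, so that it is not inflated for overlap cells with few blocks.
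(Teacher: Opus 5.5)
Your proposal is correct and follows essentially the same route as the paper: you use the upper refinement bound with $\overline h_\Phi$, split each overlap-cell deviation into a noise part bounded by $a_n$ and a contamination part charged linearly to $N_{j\ell}^{\mathrm{bad}}$, and sum to obtain $C\{na_n^2+N(W_{0,K})\}$. The only difference is that the paper avoids your third triangle-inequality term by noting that $\bar X_{0\ell}$ is the $N_{j\ell}$-weighted mean of the $\bar X_{j\ell}$, so $\sum_j N_{j\ell}\|\bar X_{j\ell}-\bar X_{0\ell}\|^2\le\sum_j N_{j\ell}\|\bar X_{j\ell}-\mu_{\ell,0}\|^2$, which removes the need to control the comparator centroid separately.
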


\noindent\textit{Proof.}
The nonnegativity follows because
$\mathcal P_e(\pi^\star)$ refines
$\mathcal P_{0,K}^\star$.

For a comparator label $\ell$, let
$\bar X_{0\ell}
=
\frac{1}{|C_{\ell,0,K}^\star|}
\sum_{i\in C_{\ell,0,K}^\star}X_i.$
By the upper refinement bound in
Lemma~\ref{lem-app-refinement},
$S_0(\pi^\star)
\leq
\frac{\overline h_\Phi}{2}
\sum_{\ell=1}^{k_0}
\sum_{j:b_{j\ell}>0}
N_{j\ell}
\|\bar X_{j\ell}-\bar X_{0\ell}\|^2.$
Since $\bar X_{0\ell}$ is the weighted Euclidean mean of the
$\bar X_{j\ell}$ within comparator region $\ell$,
\[
\sum_{j:b_{j\ell}>0}
N_{j\ell}
\|\bar X_{j\ell}-\bar X_{0\ell}\|^2
\leq
\sum_{j:b_{j\ell}>0}
N_{j\ell}
\|\bar X_{j\ell}-\mu_{\ell,0}\|^2.
\]
Using
$\bar X_{j\ell}-\mu_{\ell,0}
=
(\bar X_{j\ell}-\mu_{j\ell}^{\circ})
+
(\mu_{j\ell}^{\circ}-\mu_{\ell,0}),$
Lemma~\ref{lem-app-union-centroid}, and the contamination argument
from Lemma~\ref{lem-app-overlap-purity},
\[
N_{j\ell}
\|\bar X_{j\ell}-\mu_{\ell,0}\|^2
\leq
2N_{j\ell}a_n^2
+
2\Delta_\mu^2N_{j\ell}^{\mathrm{bad}}.
\]
Summing over all overlap cells gives
\[
S_0(\pi^\star)
\leq
C
\left\{
na_n^2+N(W_{0,K})
\right\}.
\]
By \eqref{eq-app-an} and
\eqref{eq-app-comparator-error-count},
\[
na_n^2
=
\frac{M_F}{c_B}
n\log^{-\alpha_b}(n),
\qquad
N(W_{0,K})
\leq
C_BC_{\mathrm{app}}KL_n.
\]
This proves the first claim. Finally,
$K^2L_n\asymp n$ implies
$KL_n/n\asymp K^{-1}\to0$, while
$\log^{-\alpha_b}(n)\to0$, proving the second claim.
\hfill$\square$

For $\pi^\star\in\Pi_{4,K}^\star$, use the diagonal relabelling and
write
\[
q=q_K(\pi^\star),
\qquad
q_\ell=\sum_{j\neq\ell}b_{j\ell},
\qquad
w_\ell
=
|\mathcal V_{\ell,0,K}^\star\cap W_{0,K}|.
\]
Then
\begin{equation}
\sum_{\ell=1}^{k_0}q_\ell=q,
\qquad
\sum_{\ell=1}^{k_0}w_\ell
\leq
C_{\mathrm{app}}K.
\label{eq-app-q-w-sums}
\end{equation}

\begin{suppproposition}[Mismatch-adaptive refinement gain]
\label{prop-app-adaptive-refinement-gain}
Uniformly over $\pi^\star\in\Pi_{4,K}^\star$, on $\mathcal G_n$,
\[
S_0(\pi^\star)
=
o\{q_K(\pi^\star)L_n\}.
\]
\end{suppproposition}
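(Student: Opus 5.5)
The plan is to sharpen the bound used in the proof of Proposition~\ref{prop-app-global-refinement-gain}. There, each overlap centroid $\bar X_{j\ell}$ was compared with $\mu_{\ell,0}$, which produced the term $na_n^2$; that term is far too large relative to $qL_n$. Instead, within each comparator column $\ell$ I will compare with the centroid of a \emph{single large overlap cell}. Then the diagonal-type mass contributes nothing, and only the mismatched mass, of order $qL_n$, enters.

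\textbf{Step 1 (reduction).} Start from the upper refinement bound of Lemma~\ref{lem-app-refinement}, as in Proposition~\ref{prop-app-global-refinement-gain}:
\[
S_0(\pi^\star)\le\frac{\overline h_\Phi}{2}\sum_{\ell=1}^{k_0}\sum_{j:b_{j\ell}>0}N_{j\ell}\|\bar X_{j\ell}-\bar X_{0\ell}\|^2 .
\]
Since $\bar X_{0\ell}$ is the $N_{j\ell}$-weighted mean of the $\bar X_{j\ell}$, the inner sum is bounded by $\sum_j N_{j\ell}\|\bar X_{j\ell}-c_\ell\|^2$ for any $c_\ell\in\mathcal V$. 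Take $c_\ell=\bar X_{j^\ast(\ell)\ell}$, where $j^\ast(\ell)\in\arg\max_j b_{j\ell}$. By Lemma~\ref{lem-app-macroscopic-comparator}, $b_{j^\ast(\ell)\ell}\ge v_0K^2/k_0$. The term with $j=j^\ast(\ell)$ then vanishes.

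\textbf{Step 2 (counting the off-center mass).} I claim that for $\pi^\star\in\Pi_{4,K}^\star$ with the diagonal relabelling, $\sum_\ell\sum_{j\ne j^\ast(\ell)}b_{j\ell}\le Cq$. There are two cases. If $j^\ast(\ell)=\ell$, the off-center cells in column $\ell$ are exactly those counted in $q_\ell$. If $j^\ast(\ell)\ne\ell$, then $b_{j^\ast(\ell)\ell}$ is itself counted in $q$ by \eqref{eq-app-q-diagonal}, so $q\ge v_0K^2/k_0$. The whole column then has at most $K^2\le (k_0/v_0)q$ blocks. Summing over the $k_0$ columns and using \eqref{eq-app-block-union-occupancy} gives $\sum_\ell\sum_{j\ne j^\ast(\ell)}N_{j\ell}\le C'qL_n$ on $\mathcal A_n$.

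\textbf{Step 3 (bounding each off-center term).} For $j\ne j^\ast(\ell)$, write
\[
\|\bar X_{j\ell}-c_\ell\|^2\le 2\|\bar X_{j\ell}-\mu_{\ell,0}\|^2+2\|c_\ell-\mu_{\ell,0}\|^2 .
\]
\begin{itemize}
\item For the first part, the contamination argument of Lemma~\ref{lem-app-overlap-purity} (as already used in the proof of Proposition~\ref{prop-app-global-refinement-gain}) gives $N_{j\ell}\|\bar X_{j\ell}-\mu_{\ell,0}\|^2\le 2N_{j\ell}a_n^2+2\Delta_\mu^2N^{\rm bad}_{j\ell}$.
\item For the second part, Lemma~\ref{lem-app-overlap-purity} together with $b_{j^\ast(\ell)\ell}\gtrsim K^2$ gives $\|c_\ell-\mu_{\ell,0}\|\le a_n+C/K\to0$.
\end{itemize}
Summing over the off-center cells and using Step 2 and \eqref{eq-app-comparator-error-count}, I expect
\[
S_0(\pi^\star)\le C\{qL_n(a_n^2+K^{-2})+KL_n\}.
\]
Uniformly on $\Pi_{4,K}^\star$ we have $q\ge q_n=\lfloor K\log^{\alpha_0}n\rfloor$, so $KL_n/(qL_n)\le C\log^{-\alpha_0}n\to0$. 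Also $a_n^2+K^{-2}\to0$. Together these give $S_0(\pi^\star)=o(qL_n)$ uniformly.

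\textbf{Main obstacle.} The delicate point is the choice of center. Centering at the diagonal cell $\bar X_{\ell\ell}$ fails when the candidate region $\ell$ overlaps comparator $\ell$ only in a few blocks, because then purity of $\bar X_{\ell\ell}$ is not available. The case split in Step 2 handles this: either the largest overlap is diagonal, or $q$ is already of order $K^2$, so that all of the column's mass may be charged to $q$.
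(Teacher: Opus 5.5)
Your argument is correct and follows essentially the paper's route. Within each comparator column you re-center the upper refinement bound of Lemma~\ref{lem-app-refinement} at the centroid of a macroscopic, nearly pure cell, so that only the mismatched mass of order $qL_n$ pays the fluctuation cost $a_n^2$ (with $L_na_n^2=O(\log n)$), and contamination contributes only $O(KL_n)=o(qL_n)$ because $q\geq q_n$. The differences from the paper are bookkeeping only: the paper centers at the pure sub-collection $\mathcal V_{\ell\ell}\setminus W_{0,K}$ and splits globally on $q\leq v_0K^2/2$, falling back on Proposition~\ref{prop-app-global-refinement-gain} for larger $q$, whereas you center at the column-maximal overlap cell (so its term vanishes and the purity lemma controls the center) and split per column on whether that maximum is diagonal, which gives the single bound $S_0(\pi^\star)\leq C\{qL_n(a_n^2+K^{-2})+KL_n\}$ for all $q$.
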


\noindent\textit{Proof.}
Write $q=q_K(\pi^\star)$. First suppose
\[
q\leq\frac{v_0}{2}K^2.
\]
For each comparator label $\ell$,
\[
b_{\ell\ell}
=
|\mathcal V_{\ell,0,K}^\star|-q_\ell
\geq
\frac{v_0}{2}K^2.
\]
Define the pure diagonal block collection
$\mathcal A_\ell
=
\mathcal V_{\ell\ell}\setminus W_{0,K}.$
Since $|W_{0,K}|=O(K)$, $\mathcal A_\ell$ is nonempty for all
sufficiently large $n$. Let $A_\ell=I(\mathcal A_\ell)$ and let
$\bar X_{A_\ell}$ be its empirical centroid. Every observation in
$A_\ell$ has true center $\mu_{\ell,0}$, and therefore
Lemma~\ref{lem-app-union-centroid} gives
$\|\bar X_{A_\ell}-\mu_{\ell,0}\|
\leq a_n.$

Using $\bar X_{A_\ell}$ as the comparison point in the weighted
variance bound underlying Lemma~\ref{lem-app-refinement}, the
refinement gain within comparator region $\ell$ is bounded by
\[
S_{0,\ell}
\leq
C
\sum_{j:b_{j\ell}>0}
N_{j\ell}
\|\bar X_{j\ell}-\bar X_{A_\ell}\|^2.
\]
For the diagonal overlap cell, only observations lying in the
$w_\ell$ comparator-error blocks can move its centroid away from
$\bar X_{A_\ell}$. Since all relevant empirical centroids lie in
the compact set $\mathcal K$ on $\mathcal G_n$,
\[
N_{\ell\ell}
\|\bar X_{\ell\ell}-\bar X_{A_\ell}\|^2
\leq
Cw_\ell L_n.
\]
For $j\neq\ell$, using
Lemma~\ref{lem-app-union-centroid} and the same contamination bound
as in Lemma~\ref{lem-app-overlap-purity},
\[
N_{j\ell}
\|\bar X_{j\ell}-\bar X_{A_\ell}\|^2
\leq
C N_{j\ell}a_n^2
+
C N_{j\ell}^{\mathrm{bad}}.
\]
Therefore
$S_{0,\ell}
\leq
Cq_\ell L_na_n^2
+
Cw_\ell L_n.$
Since $L_na_n^2=O(\log n)$, summing over $\ell$ and using
\eqref{eq-app-q-w-sums} gives
\begin{equation}
S_0(\pi^\star)
\leq
C\{q\log n+KL_n\}.
\label{eq-app-S0-moderate}
\end{equation}

Because $\pi^\star\in\Pi_{4,K}^\star$,
$q\geq q_n$, and for all sufficiently large $n$,
$q\geq\frac12K\log^{\alpha_0}(n).$
Hence, in the present case,
\[
\frac{S_0(\pi^\star)}{qL_n}
\leq
C\frac{\log n}{L_n}
+
C\frac{K}{q}
\leq
C\log^{-\alpha_b}(n)
+
C\log^{-\alpha_0}(n)
\longrightarrow0.
\]

Now suppose
$q>\frac{v_0}{2}K^2.$
Proposition~\ref{prop-app-global-refinement-gain} gives
$S_0(\pi^\star)
\leq
C\{n\log^{-\alpha_b}(n)+KL_n\}.$
Since $n\asymp K^2L_n$,
$n\log^{-\alpha_b}(n)
\asymp
K^2\log n.$
Therefore
\[
\frac{S_0(\pi^\star)}{qL_n}
\leq
C\frac{K^2}{q}\frac{\log n}{L_n}
+
C\frac{K}{q}
\leq
C\log^{-\alpha_b}(n)
+
\frac{C}{K}
\longrightarrow0.
\]
The two cases prove the result.
\hfill$\square$

\appsubsection{Underfitted and missing-label partitions}
\label{app-underfit-missing-label}

\begin{lemma}[Underfitting forces macroscopic pooling]
\label{lem-app-underfit-pooling}
For every $\pi^\star\in\Pi_{1,K}^\star$, there exist a candidate
index $j^\star$ and distinct comparator labels
$\ell_1\neq\ell_2$ such that
\[
b_{j^\star\ell_1}
\geq
c_\star K^2,
\qquad
b_{j^\star\ell_2}
\geq
c_\star K^2,
\qquad
c_\star=\frac{v_0}{k_{\max}}>0.
\]
\end{lemma}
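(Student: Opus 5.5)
The argument is a pigeonhole count over the block matrix $(b_{j\ell})$. It uses only the first claim of Lemma~\ref{lem-app-macroscopic-comparator}: $|\mathcal V_{\ell,0,K}^\star|\geq v_0K^2$ for every $\ell$ and all sufficiently large $n$. This is deterministic, so no response event is needed.

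Fix $\pi^\star\in\Pi_{1,K}^\star$ and write $k=|\pi^\star|$, so $k<k_0\leq k_{\max}$. Since the candidate block sets $\mathcal V_1,\ldots,\mathcal V_k$ partition $\mathcal B_K$, each comparator label satisfies
\[
\sum_{j=1}^k b_{j\ell}=|\mathcal V_{\ell,0,K}^\star|\geq v_0K^2 .
\]
Hence for each $\ell$ there is an index $j(\ell)$ with
\[
b_{j(\ell)\ell}\geq v_0K^2/k\geq v_0K^2/k_{\max}=c_\star K^2 .
\]
Choose $j(\ell)$ to be, say, the smallest maximizer of $b_{j\ell}$ over $j$.

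This gives a map $\ell\mapsto j(\ell)$ from $\{1,\ldots,k_0\}$ into $\{1,\ldots,k\}$. Because $k<k_0$, the map cannot be injective. So there exist distinct labels $\ell_1\neq\ell_2$ with $j(\ell_1)=j(\ell_2)=:j^\star$. By construction, $b_{j^\star\ell_1}\geq c_\star K^2$ and $b_{j^\star\ell_2}\geq c_\star K^2$, which is the claim.

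The only point that needs care is the scope of the statement. The lower bound $v_0K^2$ holds only for all sufficiently large $n$, so the lemma should be read as holding for such $n$, uniformly over $\Pi_{1,K}^\star$. This uniformity is exactly what the constant $c_\star$, which does not depend on $\pi^\star$, makes explicit. Beyond this there is no genuine obstacle; the pigeonhole step is the whole content.
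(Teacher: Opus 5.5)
Your proposal is correct and follows the paper's proof: the column sums $\sum_{j} b_{j\ell}=|\mathcal V_{\ell,0,K}^\star|\geq v_0K^2$ give each comparator label a candidate index with $b_{j(\ell)\ell}\geq v_0K^2/k_{\max}$, and the pigeonhole principle with $k<k_0$ forces two labels onto the same index. Your remark that the bound holds only for all sufficiently large $n$ is a fair clarification. The paper leaves implicit that this restriction is inherited from Lemma~\ref{lem-app-macroscopic-comparator}.
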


\noindent\textit{Proof.}
Let $k=|\pi^\star|<k_0$. For every comparator label $\ell$,
$\sum_{j=1}^k b_{j\ell}
=
|\mathcal V_{\ell,0,K}^\star|
\geq
v_0K^2$
by Lemma~\ref{lem-app-macroscopic-comparator}. Hence there exists
a candidate index $J(\ell)$ satisfying
\[
b_{J(\ell),\ell}
\geq
\frac{v_0}{k}K^2
\geq
\frac{v_0}{k_{\max}}K^2.
\]
Since there are $k_0$ comparator labels but only $k<k_0$
candidate indices, two distinct comparator labels have the same
value of $J(\ell)$. This proves the result.
\hfill$\square$

\begin{suppproposition}[Loss separation under underfitting]
\label{prop-app-underfit-loss}
There exists $c_{1,W}>0$ such that, on $\mathcal G_n$, for all
sufficiently large $n$, uniformly over
$\pi^\star\in\Pi_{1,K}^\star$,
\[
W\{\mathcal P(\pi^\star)\}
-
W(\mathcal P_{0,K}^\star)
\geq
c_{1,W}n.
\]
\end{suppproposition}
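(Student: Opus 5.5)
The plan is to route the comparison through the common refinement $\mathcal P_e(\pi^\star)$ of \eqref{eq-app-common-refinement}. This gives the decomposition
\[
W\{\mathcal P(\pi^\star)\}-W(\mathcal P_{0,K}^\star)
=\bigl[W\{\mathcal P(\pi^\star)\}-W\{\mathcal P_e(\pi^\star)\}\bigr]-S_0(\pi^\star).
\]
We work on $\mathcal G_n$ with $n$ large enough that every empirical centroid involved lies in $\mathcal K$ (Lemma~\ref{lem-app-union-centroid}), so Lemma~\ref{lem-app-refinement} applies. Proposition~\ref{prop-app-global-refinement-gain} then gives $\sup_{\pi^\star}S_0(\pi^\star)=o(n)$ uniformly. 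It therefore suffices to show that the refinement gain $W\{\mathcal P(\pi^\star)\}-W\{\mathcal P_e(\pi^\star)\}$ is at least $2c_{1,W}n$, uniformly over $\Pi_{1,K}^\star$.

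For the lower bound, apply Lemma~\ref{lem-app-underfit-pooling} to obtain a candidate region $j^\star$ and labels $\ell_1\neq\ell_2$ with $b_{j^\star\ell_1},b_{j^\star\ell_2}\geq c_\star K^2$. Lemma~\ref{lem-app-refinement}, inequality \eqref{eq-app-refinement-lower}, holds with $\mathcal Q=\mathcal P_e(\pi^\star)$ and $\mathcal P=\mathcal P(\pi^\star)$. All summands are nonnegative, so we keep only the single pair $A=C_{j^\star\ell_1}$, $B=C_{j^\star\ell_2}$ inside $C=C_{j^\star}(\pi^\star)$. This gives
\[
W\{\mathcal P(\pi^\star)\}-W\{\mathcal P_e(\pi^\star)\}\geq\frac{h_\Phi}{2}\frac{N_{j^\star\ell_1}N_{j^\star\ell_2}}{|C_{j^\star}|}\|\bar X_{j^\star\ell_1}-\bar X_{j^\star\ell_2}\|^2 .
\]

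Each factor is then controlled as follows.
\begin{itemize}
\item On $\mathcal A_n$, \eqref{eq-app-block-union-occupancy} gives $N_{j^\star\ell_i}\geq c_Bc_\star K^2L_n$ for $i=1,2$.
\item The cluster size satisfies $|C_{j^\star}|\leq n$.
\item Corollary~\ref{cor-app-overlap-separation} applies with $d_n=c_\star K^2$, since $K/d_n\to0$, and yields $\|\bar X_{j^\star\ell_1}-\bar X_{j^\star\ell_2}\|\geq\delta_0/2$ for all large $n$.
\end{itemize}
Combining these with $K^2L_n\asymp n$, the right-hand side is at least $\frac{h_\Phi}{2}(c_Bc_\star)^2\frac{(K^2L_n)^2}{n}\frac{\delta_0^2}{4}\geq c'n$. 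The constant $c'>0$ depends only on fixed quantities, so the bound is uniform in $\pi^\star$.

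We then conclude that $W\{\mathcal P(\pi^\star)\}-W(\mathcal P_{0,K}^\star)\geq c'n-o(n)\geq (c'/2)n$ for large $n$, and take $c_{1,W}=c'/2$. The only delicate point is uniformity: the corollary and the purity lemma must hold simultaneously for all candidate partitions on the single event $\mathcal G_n$. They do, since both are stated uniformly over $\mathfrak P_K$. I expect the main (mild) obstacle to be checking that the relevant overlap sets and their centroids all lie in $\mathcal K$, which is what Lemma~\ref{lem-app-refinement} requires. This is supplied by Lemma~\ref{lem-app-union-centroid}, because every such cell is a union of mesh blocks.
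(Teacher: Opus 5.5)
Your proposal is correct and follows the paper's proof essentially step for step. You decompose through the common refinement $\mathcal P_e(\pi^\star)$, use Lemma~\ref{lem-app-underfit-pooling} and Corollary~\ref{cor-app-overlap-separation} with $d_n=c_\star K^2$ to extract an order-$n$ pooling loss from \eqref{eq-app-refinement-lower}, and absorb $S_0(\pi^\star)=o(n)$ via Proposition~\ref{prop-app-global-refinement-gain}; your explicit bound $|C_{j^\star}(\pi^\star)|\leq n$ just spells out a step the paper leaves implicit.
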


\noindent\textit{Proof.}
By Lemma~\ref{lem-app-underfit-pooling}, one candidate cluster
contains two overlap cells from distinct comparator labels, each
containing at least $c_\star K^2$ blocks. Taking
$d_n=c_\star K^2$ in
Corollary~\ref{cor-app-overlap-separation} gives
$\|\bar X_{j^\star\ell_1}
-
\bar X_{j^\star\ell_2}\|
\geq
\frac{\delta_0}{2}$
for all sufficiently large $n$.

On $\mathcal A_n$, both overlap cells contain at least
$c_Bc_\star K^2L_n\asymp n$ observations. Hence the lower
refinement bound
\eqref{eq-app-refinement-lower}, retaining only these two cells
inside candidate cluster $j^\star$, gives
\[
W\{\mathcal P(\pi^\star)\}
-
W\{\mathcal P_e(\pi^\star)\}
\geq
c_{\mathrm{pool},1}n
\]
for some $c_{\mathrm{pool},1}>0$ independent of
$\pi^\star$.

Since
\[
W\{\mathcal P(\pi^\star)\}
-
W(\mathcal P_{0,K}^\star)
=
\left[
W\{\mathcal P(\pi^\star)\}
-
W\{\mathcal P_e(\pi^\star)\}
\right]
-
S_0(\pi^\star),
\]
and Proposition~\ref{prop-app-global-refinement-gain} gives
$S_0(\pi^\star)=o(n)$ uniformly, the conclusion follows.
\hfill$\square$

\begin{lemma}[A missing best-match label forces macroscopic pooling]
\label{lem-app-missing-label-pooling}
For every $\pi^\star\in\Pi_{3,K}^\star$, there exist a candidate
index $j^\star$ and distinct comparator labels
$\ell_0\neq\ell_1$ such that
\[
m_{\pi^\star}(j^\star)=\ell_1,
\qquad
b_{j^\star\ell_0}
\geq
\frac{v_0}{k_0}K^2,
\qquad
b_{j^\star\ell_1}
\geq
\frac{v_0}{k_0}K^2.
\]
\end{lemma}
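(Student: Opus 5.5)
The plan is a pigeonhole argument on the best-match map, parallel to Lemma~\ref{lem-app-underfit-pooling}. Let $\pi^\star\in\Pi_{3,K}^\star$, so $|\pi^\star|=k_0$ and the image $\{m_{\pi^\star}(j):1\leq j\leq k_0\}$ is a proper subset of $\{1,\ldots,k_0\}$. Fix a label $\ell_0$ outside this image; $\ell_0$ is never the best match of any candidate region.

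First, apply the first claim of Lemma~\ref{lem-app-macroscopic-comparator} to $\ell_0$. Since the candidate regions partition all $K^2$ blocks, $\sum_{j=1}^{k_0}b_{j\ell_0}=|\mathcal V_{\ell_0,0,K}^\star|\geq v_0K^2$. Hence there is an index $j^\star$ with
\[
b_{j^\star\ell_0}\geq\frac{v_0}{k_0}K^2 .
\]

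Second, set $\ell_1=m_{\pi^\star}(j^\star)$. Then $\ell_1\neq\ell_0$, because $\ell_0$ is not in the image of the best-match map. By the definition of $m_{\pi^\star}(j^\star)$ as a maximizer of $b_{j^\star\ell}$ over $\ell$, we get
\[
b_{j^\star\ell_1}\geq b_{j^\star\ell_0}\geq\frac{v_0}{k_0}K^2 .
\]
This gives both required bounds together with $m_{\pi^\star}(j^\star)=\ell_1$. Any deterministic tie-breaking rule in the $\arg\max$ leaves the argument unchanged, since only the maximality of $b_{j^\star\ell_1}$ is used.

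I do not expect a genuine obstacle. The only points to check are that Lemma~\ref{lem-app-macroscopic-comparator} applies, which holds for all sufficiently large $n$ and is a purely deterministic statement about blocks not requiring $\mathcal A_n$, and that the constant $v_0/k_0$ does not depend on $\pi^\star$. The bound therefore holds uniformly over $\Pi_{3,K}^\star$, as the later use through Corollary~\ref{cor-app-overlap-separation} with $d_n=(v_0/k_0)K^2$ requires.
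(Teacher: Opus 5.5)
Your proof is correct and matches the paper's argument step for step: choose an unselected label $\ell_0$, use $\sum_{j}b_{j\ell_0}=|\mathcal V_{\ell_0,0,K}^\star|\geq v_0K^2$ to pigeonhole a $j^\star$ with $b_{j^\star\ell_0}\geq (v_0/k_0)K^2$, and then set $\ell_1=m_{\pi^\star}(j^\star)$, using maximality of the best match. You are also right to note that the bound depends on the ``sufficiently large $n$'' clause of Lemma~\ref{lem-app-macroscopic-comparator}, which the statement leaves implicit.
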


\noindent\textit{Proof.}
Because $\pi^\star\in\Pi_{3,K}^\star$, some comparator label
$\ell_0$ is not selected by the best-match map. Since
$\sum_{j=1}^{k_0}b_{j\ell_0}
=
|\mathcal V_{\ell_0,0,K}^\star|
\geq
v_0K^2,$
there exists $j^\star$ such that
$b_{j^\star\ell_0}
\geq
\frac{v_0}{k_0}K^2.$
Set $\ell_1=m_{\pi^\star}(j^\star)$. Then
$\ell_1\neq\ell_0$, and by the definition of the best match,
\[
b_{j^\star\ell_1}
\geq
b_{j^\star\ell_0}.
\]
This proves the claim.
\hfill$\square$

\begin{suppproposition}[Loss separation for a missing label]
\label{prop-app-missing-label-loss}
There exists $c_{3,W}>0$ such that, on $\mathcal G_n$, for all
sufficiently large $n$, uniformly over
$\pi^\star\in\Pi_{3,K}^\star$,
\[
W\{\mathcal P(\pi^\star)\}
-
W(\mathcal P_{0,K}^\star)
\geq
c_{3,W}n.
\]
\end{suppproposition}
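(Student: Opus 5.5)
The argument will mirror the proof of Proposition~\ref{prop-app-underfit-loss}, with Lemma~\ref{lem-app-missing-label-pooling} in place of Lemma~\ref{lem-app-underfit-pooling}. Fix $\pi^\star\in\Pi_{3,K}^\star$ and work on $\mathcal G_n$ for $n$ large. By Lemma~\ref{lem-app-union-centroid}, all empirical centroids of $\mathcal P(\pi^\star)$, $\mathcal P_e(\pi^\star)$ and $\mathcal P_{0,K}^\star$ lie in $\mathcal K$, so Lemma~\ref{lem-app-refinement} applies uniformly. Since $\mathcal P_e(\pi^\star)$ refines both $\mathcal P(\pi^\star)$ and the comparator, I will use the decomposition
\[
W\{\mathcal P(\pi^\star)\}-W(\mathcal P_{0,K}^\star)
=
\bigl[W\{\mathcal P(\pi^\star)\}-W\{\mathcal P_e(\pi^\star)\}\bigr]-S_0(\pi^\star).
\]
The task then reduces to a macroscopic lower bound on the first bracket and the uniform bound $S_0(\pi^\star)=o(n)$.

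For the first bracket, Lemma~\ref{lem-app-missing-label-pooling} supplies a candidate index $j^\star$ and labels $\ell_0\neq\ell_1$ with $b_{j^\star\ell_0},b_{j^\star\ell_1}\geq (v_0/k_0)K^2$. Take $d_n=(v_0/k_0)K^2$; then $K/d_n\to0$. Corollary~\ref{cor-app-overlap-separation} gives $\|\bar X_{j^\star\ell_0}-\bar X_{j^\star\ell_1}\|\geq\delta_0/2$. On $\mathcal A_n$, \eqref{eq-app-block-union-occupancy} gives $N_{j^\star\ell_a}\geq c_B(v_0/k_0)K^2L_n\geq c'n$ for $a=0,1$, because $K^2L_n\asymp n$. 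The candidate cluster $C_{j^\star}$ has size at most $n$.

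I will apply the lower refinement bound \eqref{eq-app-refinement-lower} to $\mathcal P_e(\pi^\star)\preceq\mathcal P(\pi^\star)$, keeping only the single pair of cells $(C_{j^\star\ell_0},C_{j^\star\ell_1})$ inside $C_{j^\star}$. All other terms are nonnegative and can be dropped. This yields
\[
W\{\mathcal P(\pi^\star)\}-W\{\mathcal P_e(\pi^\star)\}\geq \frac{h_\Phi}{2}\cdot\frac{(c'n)^2}{n}\cdot\frac{\delta_0^2}{4}=c_{\mathrm{pool},3}\,n,
\]
with a constant that does not depend on $\pi^\star$.

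Finally, Proposition~\ref{prop-app-global-refinement-gain} gives $\sup_{\pi^\star}S_0(\pi^\star)\leq C_{\mathrm{ref}}\{n\log^{-\alpha_b}(n)+KL_n\}=o(n)$. For large $n$ this is at most $c_{\mathrm{pool},3}n/2$, so the statement holds with $c_{3,W}=c_{\mathrm{pool},3}/2$.

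There is no real obstacle, because every ingredient is already uniform over $\mathfrak P_K$. The only point to check is that the constants in the occupancy lower bound and in the corollary do not depend on $\pi^\star$, which is the case since $d_n$ is deterministic.
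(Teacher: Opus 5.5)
Your proposal is correct and follows the paper's proof essentially step for step. It uses the same decomposition through the common refinement $\mathcal P_e(\pi^\star)$, and obtains the order-$n$ pooling cost from Lemma~\ref{lem-app-missing-label-pooling}, Corollary~\ref{cor-app-overlap-separation}, the occupancy bound and the lower refinement bound \eqref{eq-app-refinement-lower}, before absorbing $S_0(\pi^\star)=o(n)$ uniformly via Proposition~\ref{prop-app-global-refinement-gain}.
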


\noindent\textit{Proof.}
Lemma~\ref{lem-app-missing-label-pooling} gives two distinct-label
overlap cells in the same candidate cluster, each containing at
least $(v_0/k_0)K^2$ blocks. Hence
Corollary~\ref{cor-app-overlap-separation} separates their empirical
centroids by at least $\delta_0/2$ for all sufficiently large $n$.

The occupancy event implies that both cells contain order $n$
observations. The lower refinement bound
\eqref{eq-app-refinement-lower} therefore gives
\[
W\{\mathcal P(\pi^\star)\}
-
W\{\mathcal P_e(\pi^\star)\}
\geq
c_{\mathrm{pool},3}n
\]
for some $c_{\mathrm{pool},3}>0$. Subtracting
$S_0(\pi^\star)=o(n)$ uniformly by
Proposition~\ref{prop-app-global-refinement-gain} proves the result.
\hfill$\square$

\appsubsection{Wrong-boundary partitions}
\label{app-wrong-boundary}

\begin{lemma}[A wrong-boundary partition contains two large
distinct-label overlap cells]
\label{lem-app-wrong-boundary-pooling}
Suppose $k_0\geq2$. For every
$\pi^\star\in\Pi_{4,K}^\star$, after the diagonal relabelling,
there exist $j^\star$ and $\ell^\star\neq j^\star$ such that,
writing $q=q_K(\pi^\star)$,
\[
b_{j^\star\ell^\star}
\geq
\frac{q}{k_0(k_0-1)},
\qquad
b_{j^\star j^\star}
\geq
b_{j^\star\ell^\star}.
\]
\end{lemma}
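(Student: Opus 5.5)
This is a pure pigeonhole argument on the overlap matrix $(b_{j\ell})$; no response-side quantities enter. After the diagonal relabelling available on $\Pi_{4,K}^\star$, we have $m_{\pi^\star}(j)=j$ for every $j$. Identity \eqref{eq-app-q-diagonal} then gives
\[
q=q_K(\pi^\star)=\sum_{j=1}^{k_0}\sum_{\ell\neq j}b_{j\ell}.
\]
This is a sum of exactly $k_0(k_0-1)$ nonnegative off-diagonal entries. Since $k_0\geq2$, this count is at least one, so the average is well defined.

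\textbf{Step 1: a large off-diagonal entry.} Choose $(j^\star,\ell^\star)$ with $\ell^\star\neq j^\star$ maximizing $b_{j\ell}$ over all off-diagonal pairs. The maximum of finitely many nonnegative numbers is at least their average, so
\[
b_{j^\star\ell^\star}\geq \frac{q}{k_0(k_0-1)}.
\]

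\textbf{Step 2: the diagonal dominates that row.} The relabelling sets $j^\star=m_{\pi^\star}(j^\star)$. By definition of the best-match map, $m_{\pi^\star}(j^\star)\in\arg\max_{\ell\leq k_0}b_{j^\star\ell}$. Hence $b_{j^\star j^\star}\geq b_{j^\star\ell}$ for every $\ell$, and in particular $b_{j^\star j^\star}\geq b_{j^\star\ell^\star}$.

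Both claims are immediate. The only point needing care is that the relabelling is legitimate, meaning it preserves both $q_K$ and the argmax property. This holds because, on $\Pi_{4,K}^\star$, the map $m_{\pi^\star}$ is a bijection onto $\{1,\ldots,k_0\}$. Relabelling candidate regions by $m_{\pi^\star}$ only permutes row indices and leaves each row's entries unchanged, so both properties are preserved.

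Ties in the argmax do not matter. Whichever maximizer was fixed as $m_{\pi^\star}(j)$ becomes the diagonal entry, and any maximizer satisfies the inequality in Step 2.

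The lemma will presumably be used alongside Corollary~\ref{cor-app-overlap-separation} with $d_n\asymp q$. This requires $K/q\to0$, which holds because $q\geq q_n\asymp K\log^{\alpha_0}(n)$. That condition is not needed for the combinatorial statement itself.
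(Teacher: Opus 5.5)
Your proposal is correct and follows essentially the same route as the paper. You use the off-diagonal expression for $q$ after the diagonal relabelling and pick an off-diagonal entry at least the average over the $k_0(k_0-1)$ terms. The best-match property of the diagonal then gives $b_{j^\star j^\star}\geq b_{j^\star\ell^\star}$. Your added remarks on why the relabelling is legitimate and why ties are harmless are accurate but not needed beyond what the paper states.
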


\noindent\textit{Proof.}
After relabelling,
$q
=
\sum_{j=1}^{k_0}\sum_{\ell\neq j}b_{j\ell}.$
There are $k_0(k_0-1)$ off-diagonal terms, so at least one is at
least their average. This gives
\[
b_{j^\star\ell^\star}
\geq
\frac{q}{k_0(k_0-1)}.
\]
Since the diagonal entry in row $j^\star$ is its best match,
$b_{j^\star j^\star}
\geq
b_{j^\star\ell^\star}.$
\hfill$\square$

\begin{suppproposition}[Mismatch-scaled loss separation]
\label{prop-app-wrong-boundary-loss}
There exists $c_{4,W}>0$ such that, on $\mathcal G_n$, for all
sufficiently large $n$, uniformly over
$\pi^\star\in\Pi_{4,K}^\star$,
\[
W\{\mathcal P(\pi^\star)\}
-
W(\mathcal P_{0,K}^\star)
\geq
c_{4,W}q_K(\pi^\star)L_n.
\]
\end{suppproposition}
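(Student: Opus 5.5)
The plan follows the template of Propositions~\ref{prop-app-underfit-loss} and \ref{prop-app-missing-label-loss}. We split the loss difference through the common refinement $\mathcal P_e(\pi^\star)$:
\[
W\{\mathcal P(\pi^\star)\}-W(\mathcal P_{0,K}^\star)
=
\bigl[W\{\mathcal P(\pi^\star)\}-W\{\mathcal P_e(\pi^\star)\}\bigr]-S_0(\pi^\star).
\]
Proposition~\ref{prop-app-adaptive-refinement-gain} already gives $S_0(\pi^\star)=o(qL_n)$ uniformly over $\Pi_{4,K}^\star$, where $q=q_K(\pi^\star)$. It therefore suffices to show that the pooling gain $W\{\mathcal P(\pi^\star)\}-W\{\mathcal P_e(\pi^\star)\}$ is at least $c\,qL_n$ on $\mathcal G_n$, uniformly over the class. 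If $k_0=1$, then $q=0$ for every candidate, so $\Pi_{4,K}^\star$ is empty. We may therefore assume $k_0\geq2$.

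For the pooling gain, apply Lemma~\ref{lem-app-wrong-boundary-pooling}. It produces a candidate $j^\star$ and a label $\ell^\star\neq j^\star$ with
\[
b_{j^\star\ell^\star}\geq q/\{k_0(k_0-1)\},
\qquad
b_{j^\star j^\star}\geq b_{j^\star\ell^\star}.
\]
Since $q\geq q_n=\lfloor K\log^{\alpha_0}(n)\rfloor$, both overlap cells have at least $d_n=q_n/\{k_0(k_0-1)\}$ blocks. We have $K/d_n\asymp\log^{-\alpha_0}(n)\to0$, so Corollary~\ref{cor-app-overlap-separation} applies and gives
\[
\|\bar X_{j^\star\ell^\star}-\bar X_{j^\star j^\star}\|\geq\delta_0/2
\]
for all large $n$, uniformly. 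Now keep only this pair in the lower refinement bound \eqref{eq-app-refinement-lower} for the cell $C=C_{j^\star}(\pi^\star)$. This yields a lower bound of
\[
\frac{h_\Phi}{2}\,\frac{N_{j^\star\ell^\star}N_{j^\star j^\star}}{|C|}\,\frac{\delta_0^2}{4}.
\]

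The only delicate point is bounding the weight $N_{j^\star\ell^\star}N_{j^\star j^\star}/|C|$ below by a constant multiple of $N_{j^\star\ell^\star}$. The ratio $N_{j^\star j^\star}/|C|$ is bounded below as follows. On $\mathcal A_n$,
\[
|C|\leq C_B|\mathcal V_{j^\star}|L_n=C_B\textstyle\sum_\ell b_{j^\star\ell}L_n\leq C_Bk_0\,b_{j^\star j^\star}L_n,
\]
because the diagonal entry is the row maximum. Also $N_{j^\star j^\star}\geq c_Bb_{j^\star j^\star}L_n$. Hence $N_{j^\star j^\star}/|C|\geq c_B/(C_Bk_0)$. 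Combining this with $N_{j^\star\ell^\star}\geq c_Bb_{j^\star\ell^\star}L_n\geq c_BqL_n/\{k_0(k_0-1)\}$, the pooling gain is at least $c'qL_n$. Here $c'$ depends only on $h_\Phi$, $\delta_0$, $c_B$, $C_B$ and $k_0$.

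Finally, subtract $S_0(\pi^\star)=o(qL_n)$. For large $n$ this gives the bound with $c_{4,W}=c'/2$.

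The main obstacle I anticipate is uniformity. Two points need checking: the centroid-in-$\mathcal K$ requirement of Lemma~\ref{lem-app-refinement}, and the $o(\cdot)$ in Proposition~\ref{prop-app-adaptive-refinement-gain}. Both must hold simultaneously over the whole class on the single event $\mathcal G_n$. Lemma~\ref{lem-app-union-centroid} supplies the first, since every cell is a block union. The second is explicit in the cited proof, whose bounds depend only on $n$.
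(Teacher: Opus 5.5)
Your proposal is correct and follows the paper's proof essentially step for step. The shared steps are the split through $\mathcal P_e(\pi^\star)$, Lemma~\ref{lem-app-wrong-boundary-pooling} with Corollary~\ref{cor-app-overlap-separation} for the $\delta_0/2$ separation, the row-maximum bound $|C_{j^\star}(\pi^\star)|\leq C_Bk_0b_{j^\star j^\star}L_n$ giving a pooling weight of order $qL_n$, and subtraction of $S_0(\pi^\star)=o(qL_n)$ from Proposition~\ref{prop-app-adaptive-refinement-gain}; your use of the deterministic threshold $d_n=q_n/\{k_0(k_0-1)\}$ in place of the paper's partition-dependent $q/\{k_0(k_0-1)\}$ is, if anything, a slightly cleaner way to invoke the corollary uniformly.
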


\noindent\textit{Proof.}
If $k_0=1$, then $\Pi_{4,K}^\star$ is empty, so assume
$k_0\geq2$. Fix
$\pi^\star\in\Pi_{4,K}^\star$ and write
$q=q_K(\pi^\star)$.

By Lemma~\ref{lem-app-wrong-boundary-pooling}, the same candidate
cluster contains two distinct-label overlap cells with at least
\[
d_n
=
\frac{q}{k_0(k_0-1)}
\]
blocks each. Since $q\geq q_n$,
\[
\frac{K}{d_n}
\leq
C\frac{K}{q_n}
\leq
C\log^{-\alpha_0}(n)
\longrightarrow0.
\]
Corollary~\ref{cor-app-overlap-separation} therefore gives
\[
\|\bar X_{j^\star j^\star}
-
\bar X_{j^\star\ell^\star}\|
\geq
\frac{\delta_0}{2}.
\]

On $\mathcal A_n$,
\[
N_{j^\star j^\star}
\geq
c_Bb_{j^\star j^\star}L_n,
\qquad
N_{j^\star\ell^\star}
\geq
c_Bb_{j^\star\ell^\star}L_n.
\]
Moreover, because $b_{j^\star j^\star}$ is the largest overlap
count in row $j^\star$,
\[
|C_{j^\star}(\pi^\star)|
\leq
C_Bk_0b_{j^\star j^\star}L_n.
\]
Consequently,
\[
\frac{
N_{j^\star j^\star}
N_{j^\star\ell^\star}
}{
|C_{j^\star}(\pi^\star)|
}
\geq
c\,b_{j^\star\ell^\star}L_n
\geq
c'qL_n
\]
for fixed positive constants $c,c'$.

Applying the lower refinement bound
\eqref{eq-app-refinement-lower} to these two overlap cells yields
\[
W\{\mathcal P(\pi^\star)\}
-
W\{\mathcal P_e(\pi^\star)\}
\geq
c_{\mathrm{pool},4}qL_n
\]
for some $c_{\mathrm{pool},4}>0$. Finally,
Proposition~\ref{prop-app-adaptive-refinement-gain} gives
\[
S_0(\pi^\star)=o(qL_n)
\]
uniformly over $\Pi_{4,K}^\star$. Hence
\[
W\{\mathcal P(\pi^\star)\}
-
W(\mathcal P_{0,K}^\star)
\geq
c_{4,W}qL_n
\]
for some $c_{4,W}>0$ and all sufficiently large $n$.
\hfill$\square$

\appsubsection{Collapsed-score bounds}
\label{app-collapsed-bad-class-bounds}

Define the auxiliary response-and-cluster-number factor
\begin{equation}
L_{\tau,n}(\pi^\star)
=
\frac{
M_\tau\{\mathcal P(\pi^\star)\}
\lambda_n^{|\pi^\star|}
}{
M_\tau(\mathcal P_{0,K}^\star)
\lambda_n^{k_0}
}.
\label{eq-app-Ltau}
\end{equation}
The remaining spanning-tree, edge-cut, and bounded factorial terms
in the full prior ratio will be handled in
Section~\ref{app-proof-theorem2}.

On $\mathcal G_n$, Proposition~\ref{prop-uniform-integrated-center}
gives, uniformly over $\pi^\star\in\mathfrak P_K$,
\begin{equation}
\log
\frac{
M_\tau\{\mathcal P(\pi^\star)\}
}{
M_\tau(\mathcal P_{0,K}^\star)
}
=
-\tau
\left[
W\{\mathcal P(\pi^\star)\}
-
W(\mathcal P_{0,K}^\star)
\right]
+
R_n(\pi^\star),
\qquad
|R_n(\pi^\star)|
\leq
C_R\log n
\label{eq-app-score-ratio-remainder}
\end{equation}
for some $C_R<\infty$.

\begin{suppproposition}[Uniform score bounds for the four bad classes]
\label{prop-app-classwise-score-bounds}
There exists $c_\star>0$ such that, on $\mathcal G_n$, for all
sufficiently large $n$,
\[
L_{\tau,n}(\pi^\star)
\leq
\begin{cases}
\exp(-c_\star n),
&
\pi^\star\in\Pi_{1,K}^\star,
\\[1mm]
\exp\{-c_\star n\log^{-\alpha_p}(n)\},
&
\pi^\star\in\Pi_{2,K}^\star,
\\[1mm]
\exp(-c_\star n),
&
\pi^\star\in\Pi_{3,K}^\star,
\\[1mm]
\exp\{-c_\star q_K(\pi^\star)L_n\},
&
\pi^\star\in\Pi_{4,K}^\star.
\end{cases}
\]
All bounds are uniform over their indicated classes.
\end{suppproposition}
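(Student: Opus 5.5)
The argument starts from the decomposition \eqref{eq-app-score-ratio-remainder}, valid on $\mathcal G_n$ uniformly over $\mathfrak P_K$. Taking logarithms in \eqref{eq-app-Ltau} gives
\[
\log L_{\tau,n}(\pi^\star)
=
-\tau\left[W\{\mathcal P(\pi^\star)\}-W(\mathcal P_{0,K}^\star)\right]
+R_n(\pi^\star)
+(|\pi^\star|-k_0)\log\lambda_n,
\qquad |R_n(\pi^\star)|\leq C_R\log n .
\]
Each class is then handled by combining the matching loss-separation result with this identity. The remainder $C_R\log n$ and the $\lambda_n$ term are compared against the dominant rate.

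For $\Pi_{1,K}^\star$, Proposition~\ref{prop-app-underfit-loss} gives a loss gap of at least $c_{1,W}n$. The cluster-number term is $(|\pi^\star|-k_0)\log\lambda_n=(k_0-|\pi^\star|)\log(\lambda_n^{-1})$, which is positive but bounded by $k_0\log(\lambda_n^{-1})\asymp n\log^{-\alpha_p}(n)=o(n)$. Since $C_R\log n=o(n)$ as well, $\log L_{\tau,n}\leq-\tau c_{1,W}n+o(n)\leq-c_\star n$.

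For $\Pi_{3,K}^\star$ the number of clusters equals $k_0$, so the $\lambda_n$ term vanishes. Proposition~\ref{prop-app-missing-label-loss} then gives $\log L_{\tau,n}\leq-\tau c_{3,W}n+C_R\log n\leq-c_\star n$.

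For $\Pi_{4,K}^\star$ the $\lambda_n$ term also vanishes. Proposition~\ref{prop-app-wrong-boundary-loss} gives a gap of at least $c_{4,W}q_K(\pi^\star)L_n$. Because $q_K(\pi^\star)\geq q_n\to\infty$ and $L_n\gg\log n$, the remainder satisfies $C_R\log n=o(q_K(\pi^\star)L_n)$ uniformly over the class. This yields the bound $\exp\{-c_\star q_K(\pi^\star)L_n\}$.

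The overfitted class $\Pi_{2,K}^\star$ is the main obstacle. There the response term can favour $\pi^\star$, since more cells means a smaller $W$, and the separation must come from the prior penalty $(|\pi^\star|-k_0)\log\lambda_n\leq-\log(\lambda_n^{-1})\asymp-n\log^{-\alpha_p}(n)$. So I need to show that
\[
W(\mathcal P_{0,K}^\star)-W\{\mathcal P(\pi^\star)\}=o\{n\log^{-\alpha_p}(n)\}
\]
uniformly. The approach is to pass through the common refinement. Since $\mathcal P_e(\pi^\star)\preceq\mathcal P(\pi^\star)$, we have $W\{\mathcal P(\pi^\star)\}\geq W\{\mathcal P_e(\pi^\star)\}$, and therefore
\[
W(\mathcal P_{0,K}^\star)-W\{\mathcal P(\pi^\star)\}\leq S_0(\pi^\star).
\]
Proposition~\ref{prop-app-global-refinement-gain} bounds $S_0(\pi^\star)$ by $C\{n\log^{-\alpha_b}(n)+KL_n\}$. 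Because $\alpha_p<\alpha_b$ and $KL_n/n\to0$ faster than any power of $\log n$, this is $o\{n\log^{-\alpha_p}(n)\}$. Hence $\log L_{\tau,n}\leq\tau S_0-\log(\lambda_n^{-1})+C_R\log n\leq-c_\star n\log^{-\alpha_p}(n)$.

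The delicate points are two. First, the empirical centroids of every cell of $\mathcal P_e(\pi^\star)$ must lie in $\mathcal K$ so that the refinement bounds apply; Lemma~\ref{lem-app-union-centroid} guarantees this on $\mathcal G_n$. Second, the assumption $\alpha_p<\alpha_b$ is exactly what is needed to beat the $n\log^{-\alpha_b}(n)$ term coming from $na_n^2$. Finally, $c_\star$ is taken as the minimum of the four class constants.
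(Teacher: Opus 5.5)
Your proposal is correct and follows the paper's proof essentially step for step. It uses the same logarithmic decomposition from \eqref{eq-app-score-ratio-remainder} and the class-specific loss-separation results for $\Pi_{1,K}^\star$, $\Pi_{3,K}^\star$, $\Pi_{4,K}^\star$ with the $C_R\log n$ and $\lambda_n$ terms absorbed; for $\Pi_{2,K}^\star$ it uses the common-refinement bound $W(\mathcal P_{0,K}^\star)-W\{\mathcal P(\pi^\star)\}\le S_0(\pi^\star)$ together with $\alpha_p<\alpha_b$, so that the prior penalty $\log(\lambda_n^{-1})$ dominates.
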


\noindent\textit{Proof.}

\medskip
\noindent\emph{Class $\Pi_{1,K}^\star$.}
Let $k=|\pi^\star|<k_0$. By
Proposition~\ref{prop-app-underfit-loss},
$W\{\mathcal P(\pi^\star)\}
-
W(\mathcal P_{0,K}^\star)
\geq
c_{1,W}n.$
Using \eqref{eq-app-score-ratio-remainder},
\[
\log L_{\tau,n}(\pi^\star)
\leq
-\tau c_{1,W}n
+
C_R\log n
+
(k-k_0)\log\lambda_n.
\]
Since $\lambda_n\to0$,
\[
(k-k_0)\log\lambda_n
=
(k_0-k)\log(\lambda_n^{-1})
=
O\{n\log^{-\alpha_p}(n)\}
=
o(n),
\]
because $k_0$ is fixed and $\alpha_p>0$. Hence
\[
\log L_{\tau,n}(\pi^\star)
\leq
-c_1n
\]
uniformly over $\Pi_{1,K}^\star$ for some $c_1>0$.

\medskip
\noindent\emph{Class $\Pi_{3,K}^\star$.}
Here $|\pi^\star|=k_0$, so the $\lambda_n$ term cancels.
Proposition~\ref{prop-app-missing-label-loss} and
\eqref{eq-app-score-ratio-remainder} give
\[
\log L_{\tau,n}(\pi^\star)
\leq
-\tau c_{3,W}n+C_R\log n
\leq
-c_3n
\]
for some $c_3>0$ and all sufficiently large $n$.

\medskip
\noindent\emph{Class $\Pi_{4,K}^\star$.}
Again $|\pi^\star|=k_0$. Writing
$q=q_K(\pi^\star)$,
Proposition~\ref{prop-app-wrong-boundary-loss} gives
\[
\log L_{\tau,n}(\pi^\star)
\leq
-\tau c_{4,W}qL_n+C_R\log n.
\]
Since $q\geq q_n$ and, for all sufficiently large $n$,
\[
q_n
\geq
\frac12K\log^{\alpha_0}(n),
\]
we have
\[
\frac{\log n}{qL_n}
\leq
\frac{C}{
K\log^{\alpha_0+\alpha_b}(n)
}
\longrightarrow0.
\]
Therefore
\[
\log L_{\tau,n}(\pi^\star)
\leq
-c_4q_K(\pi^\star)L_n
\]
uniformly over $\Pi_{4,K}^\star$ for some $c_4>0$.

\medskip
\noindent\emph{Class $\Pi_{2,K}^\star$.}
Since $\mathcal P_e(\pi^\star)$ refines
$\mathcal P(\pi^\star)$,
Lemma~\ref{lem-app-refinement} gives
$W\{\mathcal P(\pi^\star)\}
\geq
W\{\mathcal P_e(\pi^\star)\}.$
Hence
$W(\mathcal P_{0,K}^\star)
-
W\{\mathcal P(\pi^\star)\}
\leq
S_0(\pi^\star).$
By Proposition~\ref{prop-app-global-refinement-gain},
$W(\mathcal P_{0,K}^\star)
-
W\{\mathcal P(\pi^\star)\}
\leq
C
\left\{
n\log^{-\alpha_b}(n)+KL_n
\right\}.$
Since $K^2L_n\asymp n$,
$KL_n
=
o\{n\log^{-\alpha_b}(n)\},$
and therefore
\begin{equation}
W(\mathcal P_{0,K}^\star)
-
W\{\mathcal P(\pi^\star)\}
\leq
C_{2,W}n\log^{-\alpha_b}(n)
\label{eq-app-overfit-improvement}
\end{equation}
uniformly over $\Pi_{2,K}^\star$.

Let $k=|\pi^\star|>k_0$. From
\eqref{eq-app-score-ratio-remainder} and
\eqref{eq-app-overfit-improvement},
\[
\log L_{\tau,n}(\pi^\star)
\leq
\tau C_{2,W}n\log^{-\alpha_b}(n)
+
C_R\log n
+
(k-k_0)\log\lambda_n.
\]
Since $k-k_0\geq1$ and $\lambda_n<1$ for all sufficiently large
$n$,
$(k-k_0)\log\lambda_n
\leq
-\log(\lambda_n^{-1}).$
By Assumption~\ref{assump-spatial},
there exists $c_\lambda>0$ such that
$\log(\lambda_n^{-1})
\geq
c_\lambda n\log^{-\alpha_p}(n)$
for all sufficiently large $n$. Because
$0<\alpha_p<\alpha_b$,
\[
n\log^{-\alpha_b}(n)
=
o\{n\log^{-\alpha_p}(n)\},
\qquad
\log n
=
o\{n\log^{-\alpha_p}(n)\}.
\]
Thus
\[
\log L_{\tau,n}(\pi^\star)
\leq
-c_2n\log^{-\alpha_p}(n)
\]
uniformly over $\Pi_{2,K}^\star$ for some $c_2>0$.

Taking
\[
c_\star
=
\min\{c_1,c_2,c_3,c_4\}>0
\]
completes the proof.
\hfill$\square$


\appsection{Proof of Theorem~\ref{thm-partition-contraction}}
\label{app-proof-theorem2}

\appsubsection{Spatial prior and counting bounds}
\label{app-spatial-counting-bounds}

We first record the response-free spatial bounds from
\citet{huang2025consistent} that are used below. These depend only
on the spanning-tree partition construction and not on the response
model.

For a supported partition $\pi^\star\in\mathfrak P_K$, write
$k=|\pi^\star|$ and define
\[
A_K(k)
=
\frac{k_0!}{k!}
\frac{
\binom{K^2-1}{k_0-1}
}{
\binom{K^2-1}{k-1}
}
\]
and
\[
S_K(\pi^\star)
=
\frac{
\displaystyle
\sum_{T\in\mathfrak T_K(\pi^\star)}p_K(T)
}{
\displaystyle
\sum_{T\in\mathfrak T_K(\pi_{0,K}^\star)}p_K(T)
}.
\]
From the partition prior in Section~\ref{sec-methodology} and the
definition \eqref{eq-app-Ltau}, the exact posterior-odds identity is
\begin{equation}
\frac{
\Pi_{\tau,n}(\pi^\star\mid X_{1:n},s_{1:n})
}{
\Pi_{\tau,n}(\pi_{0,K}^\star\mid X_{1:n},s_{1:n})
}
=
L_{\tau,n}(\pi^\star)
A_K(k)
S_K(\pi^\star).
\label{eq-app-posterior-odds}
\end{equation}

The compatible-tree bound of \citet{huang2025consistent} implies
that, under Assumption~\ref{assump-spatial}, there exists
$C_{\mathrm{ST}}<\infty$ such that
\begin{equation}
S_K(\pi^\star)
\leq
\exp\{C_{\mathrm{ST}}K\log K\}
\label{eq-app-tree-ratio}
\end{equation}
uniformly over $\pi^\star\in\mathfrak P_K$ for all sufficiently
large $n$. Their counting argument also gives constants
$C_{\mathrm{all}},C_q<\infty$ such that
\begin{equation}
|\mathfrak P_K|
\leq
\exp(C_{\mathrm{all}}K^2),
\label{eq-app-total-count}
\end{equation}
and, when $k_0\geq2$, if
\[
\Pi_{4,K}^\star(q)
=
\left\{
\pi^\star\in\Pi_{4,K}^\star:
q_K(\pi^\star)=q
\right\},
\]
then
\begin{equation}
|\Pi_{4,K}^\star(q)|
\leq
C\exp(C_q q\log K),
\qquad
1\leq q\leq K^2.
\label{eq-app-q-count}
\end{equation}
The bound \eqref{eq-app-tree-ratio} follows from the compatible-tree
multiplicity argument of \citet{huang2025consistent}; the uniform
tree prior in Assumption~\ref{assump-spatial} satisfies their
tree-prior ratio condition.

Since $k,k_0\leq k_{\max}$ and $k_{\max}$ is fixed, the elementary
binomial-factor ratio satisfies, for some $C_A<\infty$,
\begin{equation}
A_K(k)
\leq
\exp(C_A\log K),
\qquad
1\leq k\leq k_{\max}.
\label{eq-app-AK-bound}
\end{equation}

\appsubsection{Posterior concentration on the good partition class}
\label{app-good-class-concentration}

\begin{suppproposition}[Posterior concentration on
$\Pi_{5,K}^\star$]
\label{prop-app-good-class-concentration}
On $\mathcal G_n$, there exist constants $C,c>0$ such that, for all
sufficiently large $n$,
\[
\Pi_{\tau,n}
\left(
\Pi_{5,K}^\star
\,\middle|\,
X_{1:n},s_{1:n}
\right)
\geq
1-Ce^{-c\zeta_n}.
\]
\end{suppproposition}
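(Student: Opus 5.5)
The plan is to bound the posterior mass of the complement $\mathfrak P_K\setminus\Pi_{5,K}^\star=\Pi_{1,K}^\star\cup\cdots\cup\Pi_{4,K}^\star$ relative to the mass of the comparator $\pi_{0,K}^\star\in\Pi_{5,K}^\star$ (its $q_K$ is zero). The starting point is
\[
\Pi_{\tau,n}(\mathfrak P_K\setminus\Pi_{5,K}^\star\mid X_{1:n},s_{1:n})\le\sum_{c=1}^4\sum_{\pi^\star\in\Pi_{c,K}^\star}\frac{\Pi_{\tau,n}(\pi^\star\mid X_{1:n},s_{1:n})}{\Pi_{\tau,n}(\pi_{0,K}^\star\mid X_{1:n},s_{1:n})}.
\]
Each term is then bounded through the posterior-odds identity \eqref{eq-app-posterior-odds}, $L_{\tau,n}(\pi^\star)A_K(k)S_K(\pi^\star)$. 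The response factor is handled by Proposition~\ref{prop-app-classwise-score-bounds}, and the prior factors by \eqref{eq-app-tree-ratio} and \eqref{eq-app-AK-bound}, which together give $A_KS_K\le\exp\{C K\log K\}$ uniformly. Everything is carried out on $\mathcal G_n$, where all these bounds hold simultaneously for large $n$.

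For the classes $\Pi_{1,K}^\star$ and $\Pi_{3,K}^\star$, multiply the per-partition bound $e^{-c_\star n}e^{CK\log K}$ by the count $|\mathfrak P_K|\le e^{C_{\mathrm{all}}K^2}$. Since $K^2\asymp n/L_n=o(n)$ and $K\log K=o(n)$, the total is at most $e^{-cn}$. For $\Pi_{2,K}^\star$, the exponent is $-c_\star n\log^{-\alpha_p}n+C_{\mathrm{all}}K^2+CK\log K$. Here $K^2\asymp n\log^{-1-\alpha_b}n=o(n\log^{-\alpha_p}n)$ because $\alpha_p<1+\alpha_b$, so the sum is at most $e^{-cn\log^{-\alpha_p}n}$. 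All three exponents dominate $\zeta_n\asymp n\rho_n\asymp n^{1/2}\,\mathrm{polylog}(n)$.

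The main step is the class $\Pi_{4,K}^\star$, where both the counts and the odds depend on the mismatch level $q$. Using \eqref{eq-app-q-count},
\[
\sum_{\pi^\star\in\Pi_{4,K}^\star}(\text{odds})\le\sum_{q\ge q_n}C\exp\{C_qq\log K-c_\star qL_n+C K\log K\}.
\]
Because $L_n=\log^{1+\alpha_b}n\gg\log K$, we have $C_qq\log K\le c_\star qL_n/4$ for large $n$. Also $K\log K\le q\log K\cdot\log^{-\alpha_0}(n)\cdot 2$ when $q\ge q_n$, so that term is absorbed as well. The exponent is therefore at most $-c_\star qL_n/2$, and the geometric sum over $q\ge q_n$ is bounded by $C'e^{-c_\star q_nL_n/2}=C'e^{-c\zeta_n}$, since $\zeta_n=q_nL_n$. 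The delicate point is to absorb the tree-multiplicity factor $e^{CK\log K}$ in this class. It works only because $q_nL_n/(K\log K)\asymp\log^{\alpha_0+\alpha_b}n\to\infty$, so I would check constants carefully there. If $k_0=1$, this class is empty.

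Summing the four contributions gives $\Pi_{\tau,n}(\Pi_{5,K}^\star\mid X_{1:n},s_{1:n})\ge 1-(\text{sum})/(1+\text{sum})\ge1-Ce^{-c\zeta_n}$. Every class bound was shown to be dominated by $e^{-c\zeta_n}$, with constants uniform on $\mathcal G_n$.
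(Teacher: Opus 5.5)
Your proposal is correct and follows essentially the same route as the paper's proof. Both use the comparator-odds union bound over $\Pi_{1,K}^\star,\dots,\Pi_{4,K}^\star$, the classwise score bounds combined with the tree-ratio, binomial-factor and counting bounds, and a geometric sum over $q\ge q_n$ for $\Pi_{4,K}^\star$. The only cosmetic difference is that you absorb the $e^{CK\log K}$ tree factor term by term inside the $q$-sum, using $q\ge q_n$, whereas the paper does so after summing, via $K\log K/(q_nL_n)\to0$.
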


\noindent\textit{Proof.}
Because the posterior denominator is at least the contribution of
$\pi_{0,K}^\star$, for any
$\mathcal B\subseteq\mathfrak P_K$,
\begin{equation}
\Pi_{\tau,n}
(\mathcal B\mid X_{1:n},s_{1:n})
\leq
\sum_{\pi^\star\in\mathcal B}
\frac{
\Pi_{\tau,n}(\pi^\star\mid X_{1:n},s_{1:n})
}{
\Pi_{\tau,n}(\pi_{0,K}^\star\mid X_{1:n},s_{1:n})
}.
\label{eq-app-posterior-set-bound}
\end{equation}
We apply this inequality to the four bad classes.

For $\Pi_{1,K}^\star$, Proposition~
\ref{prop-app-classwise-score-bounds},
\eqref{eq-app-tree-ratio},
\eqref{eq-app-total-count}, and
\eqref{eq-app-AK-bound} give
\[
\begin{aligned}
\Pi_{\tau,n}
(\Pi_{1,K}^\star\mid X_{1:n},s_{1:n})
&\leq
\exp\left\{
-c_1n
+
C_{\mathrm{all}}K^2
+
C_{\mathrm{ST}}K\log K
+
C_A\log K
\right\}
\\
&\leq
\exp(-c_1'n)
\end{aligned}
\]
for some $c_1'>0$ and all sufficiently large $n$, because
\[
K^2\asymp\frac{n}{L_n}=o(n),
\qquad
K\log K=o(n).
\]
The same argument gives
\begin{equation}
\Pi_{\tau,n}
(\Pi_{3,K}^\star\mid X_{1:n},s_{1:n})
\leq
\exp(-c_3'n)
\label{eq-app-class3-posterior}
\end{equation}
for some $c_3'>0$.

For the overfitted class,
\[
\begin{aligned}
\Pi_{\tau,n}
(\Pi_{2,K}^\star\mid X_{1:n},s_{1:n})
&\leq
\exp\left\{
-c_2n\log^{-\alpha_p}(n)
+
C_{\mathrm{all}}K^2
+
C_{\mathrm{ST}}K\log K
+
C_A\log K
\right\}.
\end{aligned}
\]
Now
\[
\frac{K^2}
{n\log^{-\alpha_p}(n)}
\asymp
\log^{\alpha_p-1-\alpha_b}(n)
\longrightarrow0,
\]
and $K\log K=o(K^2)$. Hence, for some $c_2'>0$,
\begin{equation}
\Pi_{\tau,n}
(\Pi_{2,K}^\star\mid X_{1:n},s_{1:n})
\leq
\exp\{-c_2'n\log^{-\alpha_p}(n)\}.
\label{eq-app-class2-posterior}
\end{equation}

It remains to treat $\Pi_{4,K}^\star$. If $k_0=1$, this class is
empty. Suppose $k_0\geq2$. Since every partition in
$\Pi_{4,K}^\star$ has $k_0$ regions,
\[
A_K(k_0)=1.
\]
Using \eqref{eq-app-posterior-set-bound},
\eqref{eq-app-tree-ratio},
\eqref{eq-app-q-count}, and
Proposition~\ref{prop-app-classwise-score-bounds},
\[
\begin{aligned}
\Pi_{\tau,n}
(\Pi_{4,K}^\star\mid X_{1:n},s_{1:n})
&\leq
C\exp(C_{\mathrm{ST}}K\log K)
\\
&\quad\times
\sum_{q=q_n}^{K^2}
\exp\left\{
C_q q\log K
-
c_4qL_n
\right\}.
\end{aligned}
\]
Since
\[
\frac{\log K}{L_n}
=
O\{\log^{-\alpha_b}(n)\}
\longrightarrow0,
\]
for all sufficiently large $n$,
\[
C_q\log K
\leq
\frac{c_4}{2}L_n.
\]
Therefore
\[
\Pi_{\tau,n}
(\Pi_{4,K}^\star\mid X_{1:n},s_{1:n})
\leq
C\exp(C_{\mathrm{ST}}K\log K)
\sum_{q=q_n}^{K^2}
\exp\left\{
-\frac{c_4}{2}qL_n
\right\}.
\]
The sum is bounded by a constant multiple of its first term, so
\[
\Pi_{\tau,n}
(\Pi_{4,K}^\star\mid X_{1:n},s_{1:n})
\leq
C
\exp\left\{
C_{\mathrm{ST}}K\log K
-
c_4' q_nL_n
\right\}.
\]
Moreover,
$\frac{K\log K}{q_nL_n}
=
O\left\{
\log^{-(\alpha_0+\alpha_b)}(n)
\right\}
\longrightarrow0.$
Since $\zeta_n=q_nL_n$, it follows that
\begin{equation}
\Pi_{\tau,n}
(\Pi_{4,K}^\star\mid X_{1:n},s_{1:n})
\leq
C\exp(-c_4''\zeta_n)
\label{eq-app-class4-posterior}
\end{equation}
for some $c_4''>0$.

Finally, for every fixed $\alpha_0>0$,
$\zeta_n
\asymp
n^{1/2}
\log^{\alpha_0+(1+\alpha_b)/2}(n),$
and therefore
$\zeta_n
=
o\{n\log^{-\alpha_p}(n)\}.$
Thus the bounds for
$\Pi_{1,K}^\star$, $\Pi_{2,K}^\star$, and
$\Pi_{3,K}^\star$ are also bounded by
$Ce^{-c\zeta_n}$ after adjusting constants. Since
\[
\mathfrak P_K\setminus\Pi_{5,K}^\star
=
\Pi_{1,K}^\star
\cup
\Pi_{2,K}^\star
\cup
\Pi_{3,K}^\star
\cup
\Pi_{4,K}^\star,
\]
the union bound gives
\[
\Pi_{\tau,n}
\left(
\mathfrak P_K\setminus\Pi_{5,K}^\star
\,\middle|\,
X_{1:n},s_{1:n}
\right)
\leq
Ce^{-c\zeta_n}.
\]
This proves the result.
\hfill$\square$

\appsubsection{Geometric consequences of the good partition class}
\label{app-good-class-geometry}

\begin{lemma}[Good-class partition discrepancy]
\label{lem-app-good-class-geometry}
There exists $C_{\mathrm{geo}}<\infty$ such that, for all
sufficiently large $n$,
\[
\sup_{\pi^\star\in\Pi_{5,K}^\star}
\epsilon(\pi^\star,\pi_0)
\leq
C_{\mathrm{geo}}\rho_n.
\]
Moreover, on $\mathcal A_n$,
\[
\sup_{\pi^\star\in\Pi_{5,K}^\star}
\epsilon_n\{
\mathcal P(\pi^\star),\mathcal P_0
\}
\leq
C_{\mathrm{geo}}\rho_n.
\]
\end{lemma}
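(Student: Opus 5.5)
We argue directly from the block-count structure of the good class. For $\pi^\star\in\Pi_{5,K}^\star$, after the diagonal relabelling $m_{\pi^\star}(j)=j$, the misassigned blocks relative to the comparator number $q_K(\pi^\star)=\sum_j\sum_{\ell\neq j}b_{j\ell}<q_n$. Each block has area $K^{-2}$, so the symmetric difference $\bigcup_j (D_j^\star\,\triangle\, D_{j,0,K}^\star)$ has Lebesgue measure at most $2q_n/K^2\le 2\rho_n$; indeed a misassigned block lies in $D_j^\star\setminus D_{j,0,K}^\star$ for one $j$ and in $D_{\ell,0,K}^\star\setminus D_{\ell}^\star$ for one $\ell$. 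Next we compare the comparator with the truth. By \eqref{eq-app-comparator-error}, $\sum_\ell|D_{\ell,0,K}^\star\triangle D_{\ell,0}|\le 2C_{\mathrm{app}}/K$, and $1/K=o(\rho_n)$ since $\rho_n=\log^{\alpha_0}(n)/K$. The triangle inequality then gives $\sum_j|D_j^\star\triangle D_{j,0}|\le C\rho_n$ under the identity matching.

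To pass to $\epsilon$, we use that $\epsilon_1(\pi,\widetilde\pi)$ is a best-match deficiency. Because $\sum_j|D_j^\star|=1$, we have
\[
\epsilon_1(\pi^\star,\pi_0)\le 1-\sum_j|D_j^\star\cap D_{j,0}|=\sum_j|D_j^\star\setminus D_{j,0}|\le\sum_j|D_j^\star\triangle D_{j,0}|.
\]
The same bound holds for $\epsilon_1(\pi_0,\pi^\star)$ with the roles swapped, using the bijective identity matching. Thus $\epsilon(\pi^\star,\pi_0)\le 2C\rho_n$, uniformly in $\pi^\star$, since all constants are free of $\pi^\star$. This settles the first claim.

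For the sampled version, we proceed in the same way with counts instead of areas. Again $\epsilon_{n,1}(\mathcal P(\pi^\star),\mathcal P_0)\le n^{-1}\sum_j|C_j(\pi^\star)\setminus C_{j,0}|$, and the symmetric analogue holds. An observation in $C_j(\pi^\star)\setminus C_{j,0}$ lies either in a block counted by $q_K(\pi^\star)$ or in a comparator-error block of $W_{0,K}$. On $\mathcal A_n$, \eqref{eq-app-block-union-occupancy} and \eqref{eq-app-comparator-error-count} bound this number by $C_B(q_n+C_{\mathrm{app}}K)L_n$. Dividing by $n\asymp K^2L_n$ yields $O(q_n/K^2+1/K)=O(\rho_n)$.

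The main subtlety is the boundary convention. A comparator block $B\subseteq D_{\ell,0,K}^\star$ not in $W_{0,K}$ is contained in $D_{\ell,0}$, but points on $\mathcal B_0$ carry the deterministic label $\ell_0(s)$, which may differ. These points have Lebesgue measure zero, by the covering bound in Assumption~\ref{assump-spatial}, and hence almost surely contain no sampled locations. We therefore only need to check that the inclusion ``$B\subseteq D_{\ell,0}$ implies $\ell_0(s_i)=\ell$'' holds off a null set. The remaining steps are routine bookkeeping with constants.
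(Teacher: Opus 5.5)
Your argument is correct. It differs from the paper's proof in how it passes through the comparator.

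\textbf{The paper's route.} The paper bounds $\epsilon(\pi^\star,\pi_{0,K}^\star)\le 2q/K^2$ and $\epsilon(\pi_{0,K}^\star,\pi_0)\le C/K$ separately, both in the label-invariant metric. It then invokes the triangle inequality for $\epsilon$, which it imports from \citet{huang2025consistent}. The sampled version goes the same way, through $\mathcal P(\pi_{0,K}^\star)$ and the triangle inequality for $\epsilon_n$.

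\textbf{Your route.} You use the bijectivity of $m_{\pi^\star}$ on $\Pi_{5,K}^\star$ from the start. A best-match deficiency is dominated by the mismatch under any fixed labelling, so $\epsilon(\pi^\star,\pi_0)\le 2\sum_j|D_j^\star\triangle D_{j,0}|$. The triangle step is then only the elementary one for the Lebesgue measure of symmetric differences of labelled sets. For $\epsilon_n$ you count directly the observations misassigned under the identity matching. Each such observation lies in one of the $q_K(\pi^\star)$ off-diagonal overlap blocks or in a block of $W_{0,K}$, which gives the $C_B(q_n+C_{\mathrm{app}}K)L_n$ count.

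\textbf{Comparison.} Your approach gives a self-contained proof with explicit constants and no appeal to the metric property of $\epsilon$ or $\epsilon_n$. The paper's route is more modular: each comparator step is a one-line bound in the partition metric. Both routes rely on the good-class bijection. The paper uses it implicitly, through $\max_j b_{j\ell}\ge b_{\ell\ell}$, when bounding the reverse discrepancy.

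\textbf{Boundary convention.} You treat this point explicitly: the covering bound forces $|\mathcal B_0|=0$, so almost surely no $s_i$ lies on $\mathcal B_0$. This fills in a step the paper passes over when it asserts that comparator and true labels agree outside $W_{0,K}$.
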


\noindent\textit{Proof.}
Fix $\pi^\star\in\Pi_{5,K}^\star$ and write
$q=q_K(\pi^\star)<q_n$. Since the best-match map is a bijection,
relabel the candidate regions so that
$m_{\pi^\star}(j)=j$.

First compare $\pi^\star$ with the supported comparator
$\pi_{0,K}^\star$. By the definition of $q_K$,
\[
\epsilon_1(\pi^\star,\pi_{0,K}^\star)
=
\frac{q}{K^2}.
\]
For the reverse directed discrepancy,
\[
\max_j b_{j\ell}\geq b_{\ell\ell}
\]
for every comparator label $\ell$, and hence
\[
\epsilon_1(\pi_{0,K}^\star,\pi^\star)
\leq
1-\frac{1}{K^2}
\sum_{\ell=1}^{k_0}b_{\ell\ell}
=
\frac{q}{K^2}.
\]
Therefore
\begin{equation}
\epsilon(\pi^\star,\pi_{0,K}^\star)
\leq
\frac{2q}{K^2}.
\label{eq-app-candidate-comparator-domain}
\end{equation}

The comparator construction satisfies
$|W_{0,K}|\leq C_{\mathrm{app}}K$. As shown in the proof of
Lemma~\ref{lem-app-macroscopic-comparator}, the total area by which
the labelled comparator regions differ from their corresponding true
regions is $O(K^{-1})$. Hence
\begin{equation}
\epsilon(\pi_{0,K}^\star,\pi_0)
\leq
\frac{C}{K}.
\label{eq-app-comparator-true-domain}
\end{equation}
Using the triangle inequality for the label-invariant partition
metric $\epsilon$, established in \citet{huang2025consistent},
\[
\epsilon(\pi^\star,\pi_0)
\leq
C\left(
\frac{q}{K^2}+\frac{1}{K}
\right).
\]
Since $q<q_n$,
\[
\frac{q}{K^2}
\leq
\frac{q_n}{K^2}
\leq
\frac{\log^{\alpha_0}(n)}{K}
=
\rho_n,
\]
and, for all sufficiently large $n$,
$K^{-1}\leq\rho_n$. This proves the domain-partition bound.

For the sampled-location partitions, on $\mathcal A_n$ the
observations lying in the $q$ off-diagonal overlap blocks number at
most $C_BqL_n$. Consequently,
\[
\epsilon_n\{
\mathcal P(\pi^\star),
\mathcal P(\pi_{0,K}^\star)
\}
\leq
C\frac{qL_n}{n}.
\]
Outside the comparator-error blocks $W_{0,K}$, the comparator label
agrees with the true label. Therefore, by
\eqref{eq-app-comparator-error-count},
\[
\epsilon_n\{
\mathcal P(\pi_{0,K}^\star),
\mathcal P_0
\}
\leq
C\frac{KL_n}{n}.
\]
The triangle inequality for $\epsilon_n$ thus gives
\[
\epsilon_n\{
\mathcal P(\pi^\star),
\mathcal P_0
\}
\leq
C
\left(
\frac{qL_n}{n}
+
\frac{KL_n}{n}
\right).
\]
Since $K^2L_n\asymp n$,
\[
\frac{qL_n}{n}
\leq
C\frac{q}{K^2},
\qquad
\frac{KL_n}{n}
\leq
\frac{C}{K}.
\]
The preceding bounds therefore imply
\[
\epsilon_n\{
\mathcal P(\pi^\star),
\mathcal P_0
\}
\leq
C_{\mathrm{geo}}\rho_n.
\]
The constants are independent of
$\pi^\star\in\Pi_{5,K}^\star$, proving the result.
\hfill$\square$

\appsubsection{Completion of the proof}
\label{app-theorem2-completion}

\noindent\textit{Proof of Theorem~\ref{thm-partition-contraction}.}
By \eqref{eq-app-Gn-probability},
$P_0(\mathcal G_n)\longrightarrow1.$
On $\mathcal G_n$,
Proposition~\ref{prop-app-good-class-concentration} gives constants
$C,c>0$ such that
\[
\Pi_{\tau,n}
\left(
\Pi_{5,K}^\star
\,\middle|\,
X_{1:n},s_{1:n}
\right)
\geq
1-Ce^{-c\zeta_n}.
\]
This proves the first assertion of the theorem.

Every partition in $\Pi_{5,K}^\star$ has exactly $k_0$ regions.
Therefore
$\Pi_{5,K}^\star
\subseteq
\{\pi^\star:|\pi^\star|=k_0\},$
and hence
\[
\Pi_{\tau,n}
\left(
|\pi^\star|=k_0
\,\middle|\,
X_{1:n},s_{1:n}
\right)
\geq
1-Ce^{-c\zeta_n}.
\]

By Lemma~\ref{lem-app-good-class-geometry}, for some
$M<\infty$ and all sufficiently large $n$,
\[
\Pi_{5,K}^\star
\subseteq
\left\{
\pi^\star:
\epsilon(\pi^\star,\pi_0)
\leq
M\rho_n
\right\}.
\]
Thus
\[
\begin{aligned}
\Pi_{\tau,n}
\left\{
\epsilon(\pi^\star,\pi_0)
\leq
M\rho_n
\,\middle|\,
X_{1:n},s_{1:n}
\right\}
&\geq
\Pi_{\tau,n}
\left(
\Pi_{5,K}^\star
\,\middle|\,
X_{1:n},s_{1:n}
\right)
\\
&\geq
1-Ce^{-c\zeta_n}.
\end{aligned}
\]

Finally, since
$\mathcal G_n\subseteq\mathcal A_n$,
Lemma~\ref{lem-app-good-class-geometry} also gives
\[
\Pi_{5,K}^\star
\subseteq
\left\{
\pi^\star:
\epsilon_n\{
\mathcal P(\pi^\star),\mathcal P_0
\}
\leq
M\rho_n
\right\}
\]
on $\mathcal G_n$. Therefore
\[
\begin{aligned}
\Pi_{\tau,n}
\left\{
\epsilon_n\{
\mathcal P(\pi^\star),\mathcal P_0
\}
\leq
M\rho_n
\,\middle|\,
X_{1:n},s_{1:n}
\right\}
&\geq
\Pi_{\tau,n}
\left(
\Pi_{5,K}^\star
\,\middle|\,
X_{1:n},s_{1:n}
\right)
\\
&\geq
1-Ce^{-c\zeta_n}.
\end{aligned}
\]
All four inequalities hold on events $\mathcal G_n$ whose
$P_0$-probability tends to one. This completes the proof.
\hfill$\square$

\appsection{Proof of Theorem~\ref{thm-representative-field-contraction}}
\label{app-proof-theorem3}

\appsubsection{An enlarged high-probability event}
\label{app-theorem3-high-probability-event}

The event $\mathcal G_n$ constructed in
Section~\ref{app-uniform-response-event} controls block occupancies
and blockwise residual sums. For representative contraction we also
require residual control over the macroscopic comparator regions.

Recall
\[
\epsilon_i
=
X_i-\mu_{\ell_0(s_i),0}.
\]
For $\ell=1,\ldots,k_0$, define
\[
R_{\ell,0,K}^\star
=
\sum_{i\in C_{\ell,0,K}^\star}\epsilon_i.
\]
Fix a finite constant $C_H>0$ and define
\[
\mathcal H_n
=
\bigcap_{\ell=1}^{k_0}
\left\{
\|R_{\ell,0,K}^\star\|
\leq
C_Hn\rho_n
\right\},
\]
and
\begin{equation}
\mathcal G_n^{(2)}
=
\mathcal G_n\cap\mathcal H_n.
\label{eq-app-Gn2}
\end{equation}

\begin{lemma}[Comparator-region residual concentration]
\label{lem-app-comparator-residual}
Under the standing assumptions,
\[
P_0(\mathcal H_n)\longrightarrow1,
\qquad
P_0(\mathcal G_n^{(2)})\longrightarrow1.
\]
\end{lemma}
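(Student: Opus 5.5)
The plan is to condition on the locations and control each comparator-region residual sum with a Chernoff bound, using the local exponential-moment condition in Assumption~\ref{assump-bregman-response}. The second claim then follows from the first by a union bound, since $P_0(\mathcal G_n)\to1$ by \eqref{eq-app-Gn-probability}.

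Work conditionally on $s_{1:n}$. The comparator partition $\pi_{0,K}^\star$ is deterministic given $K$, so each $C_{\ell,0,K}^\star$ is a deterministic function of $s_{1:n}$. Conditionally on $s_{1:n}$, the $\epsilon_i$ are independent and mean zero, since $E_0(X_i\mid s_{1:n})=\mu_{\ell_0(s_i),0}$. Fix an orthonormal basis $e_1,\ldots,e_r$ of $\mathcal V$ and set $Z_{\ell,a}=\langle e_a,R_{\ell,0,K}^\star\rangle$. Independence and the moment bound give, for $|t|\le b_X^{-1}$,
\[
E_0\{\exp(tZ_{\ell,a})\mid s_{1:n}\}\le\exp(\sigma_X^2N_\ell t^2/2),
\]
where $N_\ell=|C_{\ell,0,K}^\star|\le n$.

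Take the threshold $x_n=C_Hn\rho_n/\sqrt r$ for each coordinate. The unconstrained Chernoff optimizer $t=x_n/(\sigma_X^2N_\ell)$ need not satisfy $|t|\le b_X^{-1}$, because $x_n/N_\ell$ may not tend to zero; for example $N_\ell\asymp n$ while $x_n\asymp n\rho_n$ gives $x_n/N_\ell\asymp\rho_n\to0$. This is the one point needing care. I would avoid the issue by using the truncated Bernstein-type form of the Chernoff bound, which holds for every $N_\ell\ge1$:
\[
P_0(|Z_{\ell,a}|>x\mid s_{1:n})\le 2\exp\Big[-\min\Big\{\frac{x^2}{2\sigma_X^2N_\ell},\frac{x}{2b_X}\Big\}\Big].
\]
Here one takes $t=\min\{x/(\sigma_X^2N_\ell),b_X^{-1}\}$. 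If $t=b_X^{-1}$, then $x/(\sigma_X^2N_\ell)\ge b_X^{-1}$, so $\sigma_X^2N_\ell t^2/2\le tx/2$ and the exponent is at most $-x/(2b_X)$.

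With $N_\ell\le n$ and $x=x_n$, the first term of the minimum is at least $c\,n\rho_n^2\asymp\log^{2\alpha_0+1+\alpha_b}(n)\to\infty$. The second term is at least $c\,n\rho_n\to\infty$. Hence each coordinate tail probability tends to zero uniformly in $s_{1:n}$. If $\|R_{\ell,0,K}^\star\|>C_Hn\rho_n$, some coordinate exceeds $x_n$ in magnitude. A union bound over the $rk_0$ pairs $(\ell,a)$, with $r$ and $k_0$ fixed, gives $P_0(\mathcal H_n^c\mid s_{1:n})\to0$ uniformly, and integrating over $s_{1:n}$ yields $P_0(\mathcal H_n)\to1$. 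The degenerate case $\sigma_X^2=0$ forces $\epsilon_i=0$ almost surely, exactly as in Lemma~\ref{lem-app-block-residual}, so $\mathcal H_n$ holds trivially.

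Any fixed $C_H>0$ therefore works, and no occupancy event is needed for this step. Finally,
\[
P_0(\mathcal G_n^{(2)})\ge1-P_0(\mathcal G_n^c)-P_0(\mathcal H_n^c)\to1.
\]
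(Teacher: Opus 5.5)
Your proof is correct, but it keeps the Chernoff parameter inside the admissible window $|t|\le b_X^{-1}$ in a different way from the paper. The paper conditions on $s_{1:n}\in\mathcal A_n$ and invokes Lemma~\ref{lem-app-macroscopic-comparator} to obtain $c_{\mathrm{obs}}n\le|C^\star_{\ell,0,K}|\le n$. The unconstrained optimizer $x_n/(\sigma_X^2 m_\ell)=O(\rho_n)$ is then eventually admissible, which gives the pure sub-Gaussian tail $2\exp(-c_Hn\rho_n^2)$. The paper then removes the conditioning using $P_0(\mathcal A_n)\to1$.

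You instead use the truncated, Bernstein-type Chernoff bound, which holds for every cell size and needs only $N_\ell\le n$. Since $\rho_n\to0$, the minimum of your two exponents is eventually at least $c\,n\rho_n^2$, so you lose nothing in rate. In exchange, your bound is uniform over all location configurations, and it dispenses with both the occupancy event and the macroscopic-comparator lemma. The paper's route is shorter in context, because $\mathcal A_n$ is needed anyway for $\mathcal G_n$. Yours is more self-contained and would extend to cells whose sizes are not bounded below.

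One small expository slip: your example with $N_\ell\asymp n$ is exactly the case in which $x_n/N_\ell\to0$, so it does not illustrate the difficulty. The genuine concern is small $N_\ell$, which your truncation already handles.
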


\noindent\textit{Proof.}
If $\sigma_X^2=0$, the argument in
Lemma~\ref{lem-app-block-residual} shows that
$\epsilon_i=0$ almost surely, conditionally on $s_{1:n}$, for every
$i$. Hence $\mathcal H_n$ holds with probability one.

Suppose now that $\sigma_X^2>0$. Conditional on
$s_{1:n}\in\mathcal A_n$, write
$m_\ell
=
|C_{\ell,0,K}^\star|.$
By Lemma~\ref{lem-app-macroscopic-comparator}, for all sufficiently
large $n$,
\[
c_{\mathrm{obs}}n
\leq
m_\ell
\leq
n,
\qquad
\ell=1,\ldots,k_0.
\]
Let $e_1,\ldots,e_r$ be an orthonormal basis of $\mathcal V$ and
put
\[
Z_{\ell,a}
=
\langle e_a,R_{\ell,0,K}^\star\rangle.
\]
Conditional independence and
Assumption~\ref{assump-bregman-response} imply
\[
E_0\{\exp(tZ_{\ell,a})\mid s_{1:n}\}
\leq
\exp\left(
\frac{\sigma_X^2m_\ell t^2}{2}
\right),
\qquad
|t|\leq b_X^{-1}.
\]

Set
$x_n
=
\frac{C_Hn\rho_n}{\sqrt r}.$
The corresponding Chernoff value satisfies
$\frac{x_n}{\sigma_X^2m_\ell}
\leq
\frac{C_H}{\sqrt r\,c_{\mathrm{obs}}\sigma_X^2}
\rho_n
\longrightarrow0.$
Hence the local exponential-moment bound is applicable for all
sufficiently large $n$. Chernoff's inequality gives
\[
P_0\{|Z_{\ell,a}|>x_n\mid s_{1:n}\}
\leq
2\exp\left\{
-\frac{x_n^2}{2\sigma_X^2m_\ell}
\right\}
\leq
2\exp(-c_Hn\rho_n^2)
\]
for some constant $c_H>0$.

If
$\|R_{\ell,0,K}^\star\|>C_Hn\rho_n$, then at least one coordinate
has magnitude greater than $x_n$. Therefore
\[
P_0
\left(
\|R_{\ell,0,K}^\star\|>C_Hn\rho_n
\mid s_{1:n}
\right)
\leq
2r\exp(-c_Hn\rho_n^2).
\]
Since $k_0$ is fixed and
$n\rho_n^2
\asymp
\log^{2\alpha_0+1+\alpha_b}(n)
\longrightarrow\infty,$
a union bound gives
\[
P_0(\mathcal H_n^c\mid s_{1:n})
\longrightarrow0
\]
uniformly over $s_{1:n}\in\mathcal A_n$. Since
$P_0(\mathcal A_n)\to1$, it follows that
$P_0(\mathcal H_n)\to1$.

Finally,
\[
\mathcal G_n^{(2)}
=
\mathcal G_n\cap\mathcal H_n,
\]
and \eqref{eq-app-Gn-probability} gives
$P_0(\mathcal G_n)\to1$. Hence
\[
P_0(\mathcal G_n^{(2)})\longrightarrow1.
\]
\hfill$\square$

\appsubsection{Good-class spatial consequences}
\label{app-theorem3-good-class-spatial}

\begin{lemma}[Good-class clusters have macroscopic sample size]
\label{lem-app-good-class-size}
On $\mathcal A_n$, there exists $c_\star>0$ such that, for all
sufficiently large $n$,
\[
\inf_{\pi^\star\in\Pi_{5,K}^\star}
\min_{1\leq j\leq k_0}
|C_j(\pi^\star)|
\geq
c_\star n.
\]
\end{lemma}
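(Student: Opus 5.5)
The argument is purely combinatorial and spatial; the response variables play no role. I will combine the macroscopic size of the comparator regions (Lemma~\ref{lem-app-macroscopic-comparator}) with the smallness of the mismatch $q_K(\pi^\star)<q_n$, and then convert block counts into observation counts on $\mathcal A_n$.

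First, fix $\pi^\star\in\Pi_{5,K}^\star$. Because the best-match map is a bijection, I relabel the candidate regions so that $m_{\pi^\star}(j)=j$; then \eqref{eq-app-q-diagonal} holds. For each $j$, the block set of candidate region $j$ contains the diagonal overlap $\mathcal V_{jj}=\mathcal V_j\cap\mathcal V_{j,0,K}^\star$. The comparator blocks with label $j$ that do not lie in candidate region $j$ are exactly $\sum_{h\neq j}b_{hj}$, and this is at most the total off-diagonal sum $q_K(\pi^\star)$. Hence
\[
|\mathcal V_j|\geq b_{jj}=|\mathcal V_{j,0,K}^\star|-\sum_{h\neq j}b_{hj}\geq v_0K^2-q_K(\pi^\star)\geq v_0K^2-q_n .
\]
The first lower bound $v_0K^2$ comes from Lemma~\ref{lem-app-macroscopic-comparator}.

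Next, note that $q_n=\lfloor K\log^{\alpha_0}(n)\rfloor$ and $K\to\infty$, so $q_n/K^2\le\log^{\alpha_0}(n)/K\to0$. Therefore $q_n\leq (v_0/2)K^2$ for all sufficiently large $n$, and this threshold depends neither on $\pi^\star$ nor on $j$. It follows that $|\mathcal V_j|\geq (v_0/2)K^2$ uniformly over the good class.

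Finally, on $\mathcal A_n$ I apply the block-union occupancy bound \eqref{eq-app-block-union-occupancy} with $\mathcal S=\mathcal V_j$. Since $C_j(\pi^\star)=I(\mathcal V_j)$, this gives $|C_j(\pi^\star)|\geq c_B(v_0/2)K^2L_n$. Using $K^2L_n\asymp n$ from Assumption~\ref{assump-spatial}, this is at least $c_\star n$ for a constant $c_\star>0$ independent of $\pi^\star$ and $j$.

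The only point needing care is the identity $C_j(\pi^\star)=I(\mathcal V_j)$. The candidate region $D_j^\star$ is a union of mesh blocks, so every location in it lies in a block of $\mathcal V_j$. Locations on shared block boundaries have probability zero under the density $p_D$; alternatively, one can assign them by any fixed convention, and this does not affect the occupancy counts. Everything else is a direct combination of results already established.
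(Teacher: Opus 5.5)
Your proposal is correct and follows essentially the same route as the paper's proof. You relabel so that $m_{\pi^\star}(j)=j$, bound $b_{jj}\ge |\mathcal V_{j,0,K}^\star|-\sum_{h\neq j}b_{hj}\ge v_0K^2-q_n\ge (v_0/2)K^2$ using the macroscopic-comparator lemma and $q_n=o(K^2)$, and then convert block counts to observation counts via the occupancy event $\mathcal A_n$ and $K^2L_n\asymp n$.
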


\noindent\textit{Proof.}
Fix $\pi^\star\in\Pi_{5,K}^\star$ and temporarily relabel its
regions so that
\[
m_{\pi^\star}(j)=j,
\qquad
j=1,\ldots,k_0.
\]
Then
$q_K(\pi^\star)
=
\sum_{j=1}^{k_0}\sum_{\ell\neq j}b_{j\ell}
<
q_n.$
For each $j$,
$|\mathcal V_{j,0,K}^\star|
=
b_{jj}
+
\sum_{h\neq j}b_{hj}.$
By Lemma~\ref{lem-app-macroscopic-comparator},
$|\mathcal V_{j,0,K}^\star|
\geq
v_0K^2,$
while
$\sum_{h\neq j}b_{hj}
\leq
q_K(\pi^\star)
<
q_n.$
Hence
$b_{jj}
\geq
v_0K^2-q_n.$
Since
$q_n
=
O\{K\log^{\alpha_0}(n)\}
=
o(K^2),$
for all sufficiently large $n$,
$b_{jj}
\geq
\frac{v_0}{2}K^2.$
Thus
\[
|\mathcal V_j|
\geq
b_{jj}
\geq
\frac{v_0}{2}K^2.
\]
On $\mathcal A_n$,
\[
|C_j(\pi^\star)|
\geq
c_B|\mathcal V_j|L_n
\geq
\frac{c_Bv_0}{2}K^2L_n.
\]
Since $K^2L_n\asymp n$, the claim follows uniformly over the good
class. Undoing the temporary relabelling gives the result.
\hfill$\square$

\begin{lemma}[Good-class block symmetric difference]
\label{lem-app-good-class-symmetric-difference}
For every $\pi^\star\in\Pi_{5,K}^\star$ and every
$j=1,\ldots,k_0$,
\[
\left|
\mathcal V_j
\triangle
\mathcal V_{m_{\pi^\star}(j),0,K}^\star
\right|
\leq
2q_K(\pi^\star)
<
2q_n.
\]
\end{lemma}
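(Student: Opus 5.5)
The argument is pure block bookkeeping on the overlap matrix $(b_{j\ell})$. Fix $\pi^\star\in\Pi_{5,K}^\star$. Because the best-match map is a bijection on $\{1,\ldots,k_0\}$, we relabel the candidate regions so that $m_{\pi^\star}(j)=j$. Then \eqref{eq-app-q-diagonal} gives
\[
q_K(\pi^\star)=\sum_{j=1}^{k_0}\sum_{\ell\neq j}b_{j\ell}.
\]
The plan is to write the symmetric difference as the union of two disjoint pieces and bound each piece by the off-diagonal mass.

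For fixed $j$, decompose
\[
\mathcal V_j\triangle\mathcal V_{j,0,K}^\star
=(\mathcal V_j\setminus\mathcal V_{j,0,K}^\star)\,\dot\cup\,(\mathcal V_{j,0,K}^\star\setminus\mathcal V_j).
\]
The comparator block sets $\mathcal V_{\ell,0,K}^\star$, $\ell=1,\ldots,k_0$, partition $\mathcal B_K$, and so do the candidate block sets $\mathcal V_h$, $h=1,\ldots,k_0$. This gives the two row and column identities
\[
|\mathcal V_j\setminus\mathcal V_{j,0,K}^\star|=\sum_{\ell\neq j}b_{j\ell},
\qquad
|\mathcal V_{j,0,K}^\star\setminus\mathcal V_j|=\sum_{h\neq j}b_{hj}.
\]
Each of these is a sum of distinct off-diagonal entries of $(b_{h\ell})$, so each is at most $q_K(\pi^\star)$. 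Adding them bounds the symmetric difference by $2q_K(\pi^\star)$. Membership in $\Pi_{5,K}^\star$ then gives $q_K(\pi^\star)<q_n$, which yields the strict bound $2q_n$. Finally we undo the relabelling, which simply replaces $j$ by $m_{\pi^\star}(j)$ in the comparator index.

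There is no real obstacle here. The only point to check is that the candidate and comparator block sets each genuinely partition $\mathcal B_K$, so that the set differences are exactly these off-diagonal row and column sums. This holds because both are partitions of $\mathcal D$ into unions of mesh blocks.
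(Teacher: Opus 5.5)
Your proposal is correct and follows the paper's proof essentially verbatim: relabel so that $m_{\pi^\star}(j)=j$, identify the two set differences with the off-diagonal row sum $\sum_{\ell\neq j}b_{j\ell}$ and column sum $\sum_{h\neq j}b_{hj}$, bound each by $q_K(\pi^\star)$, and use $q_K(\pi^\star)<q_n$. Your added remark that both the candidate and comparator block sets partition $\mathcal B_K$ makes explicit the one fact the paper uses without stating.
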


\noindent\textit{Proof.}
Again relabel temporarily so that
$m_{\pi^\star}(j)=j$. Then
\[
|\mathcal V_j\setminus
\mathcal V_{j,0,K}^\star|
=
\sum_{\ell\neq j}b_{j\ell},
\]
and
\[
|\mathcal V_{j,0,K}^\star\setminus
\mathcal V_j|
=
\sum_{h\neq j}b_{hj}.
\]
Both sums are bounded by $q_K(\pi^\star)$. Therefore
\[
|\mathcal V_j\triangle
\mathcal V_{j,0,K}^\star|
\leq
2q_K(\pi^\star)
<
2q_n.
\]
Undoing the relabelling proves the claim.
\hfill$\square$

\appsubsection{Uniform empirical-centroid contraction on the good class}
\label{app-theorem3-centroid}

\begin{lemma}[Contamination bias on the good class]
\label{lem-app-good-class-bias}
On $\mathcal A_n$, there exists
$C_{\mathrm{bias}}<\infty$ such that, for all sufficiently large
$n$,
\[
\sup_{\pi^\star\in\Pi_{5,K}^\star}
\max_{1\leq j\leq k_0}
\left\|
\frac{1}{|C_j(\pi^\star)|}
\sum_{i\in C_j(\pi^\star)}
\left\{
\mu_{\ell_0(s_i),0}
-
\mu_{m_{\pi^\star}(j),0}
\right\}
\right\|
\leq
C_{\mathrm{bias}}\rho_n.
\]
\end{lemma}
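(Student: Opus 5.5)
The plan is to reduce the statement to a counting bound on contaminated observations in each good-class cluster, reusing the machinery behind Lemma~\ref{lem-app-overlap-purity}. Fix $\pi^\star\in\Pi_{5,K}^\star$ and relabel so that $m_{\pi^\star}(j)=j$. For each $i\in C_j(\pi^\star)$, the summand $\mu_{\ell_0(s_i),0}-\mu_{j,0}$ vanishes unless $\ell_0(s_i)\neq j$. The norm of the averaged bias is therefore at most
\[
\Delta_\mu\,\frac{\#\{i\in C_j(\pi^\star):\ell_0(s_i)\neq j\}}{|C_j(\pi^\star)|}.
\]
The denominator is handled by Lemma~\ref{lem-app-good-class-size}, which gives $|C_j(\pi^\star)|\geq c_\star n$ uniformly over the good class on $\mathcal A_n$. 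It remains to show that the numerator is $O(n\rho_n)$ uniformly.

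To bound the numerator, split the blocks of $\mathcal V_j$ into two groups. The first group is the off-diagonal blocks $\mathcal V_j\setminus\mathcal V_{j,0,K}^\star$; there are $\sum_{\ell\neq j}b_{j\ell}\leq q_K(\pi^\star)<q_n$ of them. The second group is the diagonal blocks $\mathcal V_{jj}=\mathcal V_j\cap\mathcal V_{j,0,K}^\star$.

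A diagonal block that does not lie in $W_{0,K}$ is wholly contained in $D_{j,0}$, so its observations have true label $j$, up to the deterministic boundary convention, which affects only a null set. Hence the only diagonal blocks that can contribute are those in $W_{0,K}$, and there are at most $C_{\mathrm{app}}K$ of these by \eqref{eq-app-comparator-error}.

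Combining the two groups with the occupancy bound \eqref{eq-app-block-union-occupancy} on $\mathcal A_n$, the numerator is at most $C_B(q_n+C_{\mathrm{app}}K)L_n$.

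Dividing by $c_\star n$ and using $K^2L_n\asymp n$ gives a bound of order
\[
\frac{(q_n+K)L_n}{n}\asymp\frac{q_n}{K^2}+\frac1K\leq C\rho_n .
\]
The last step uses $q_n/K^2\leq\rho_n$ and $K^{-1}\leq\rho_n$ for large $n$, exactly as in the proof of Lemma~\ref{lem-app-good-class-geometry}. All constants depend only on $\Delta_\mu$, $C_B$, $c_B$, $C_{\mathrm{app}}$, $v_0$ and $k_0$, so the bound is uniform over $\pi^\star$ and $j$. Undoing the relabelling gives the stated result with some $C_{\mathrm{bias}}<\infty$.

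The only delicate step is the observation-level boundary convention. One must check that blocks of $\mathcal V_{jj}\setminus W_{0,K}$, contained in $D_{j,0}$, cannot carry observations with $\ell_0(s_i)\neq j$. Such observations can only sit on $\mathcal B_0$. Since $s_i$ has a density, this happens with probability zero, so I would either intersect $\mathcal A_n$ with this almost-sure event or note that the comparator construction assigns those blocks consistently. Everything else is direct counting.
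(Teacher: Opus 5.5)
Your proposal is correct and follows essentially the same argument as the paper. After relabelling, the contaminated blocks of $\mathcal V_j$ are the off-diagonal ones (at most $q_K(\pi^\star)<q_n$) plus diagonal blocks lying in $W_{0,K}$ (at most $C_{\mathrm{app}}K$); occupancy on $\mathcal A_n$ turns this into $O(q_nL_n)$ contaminated observations, and Lemma~\ref{lem-app-good-class-size} supplies the $c_\star n$ denominator, giving $O(q_nL_n/n)=O(\rho_n)$. Your extra care about observations lying on $\mathcal B_0$ handles a point the paper passes over silently, either as a null event or by absorbing it into the $O(K)$ blocks that meet $\mathcal B_0$, and it does not change the argument.
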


\noindent\textit{Proof.}
Fix $\pi^\star\in\Pi_{5,K}^\star$ and temporarily relabel so that
$m_{\pi^\star}(j)=j$. Define
\[
\mathcal B_j^{\mathrm{bad}}
=
\{B\in\mathcal V_j:B\not\subseteq D_{j,0}\}.
\]
If
\[
B\in
\mathcal V_j\cap
\mathcal V_{j,0,K}^\star
\]
and $B\notin W_{0,K}$, then
$B\subseteq D_{j,0}$ by the definition of $W_{0,K}$.
Consequently,
\[
|\mathcal B_j^{\mathrm{bad}}|
\leq
|\mathcal V_j\setminus
\mathcal V_{j,0,K}^\star|
+
|W_{0,K}|
\leq
q_n+C_{\mathrm{app}}K.
\]
Since $\alpha_0>0$,
$K=O(q_n)$, and therefore
$|\mathcal B_j^{\mathrm{bad}}|
\leq
Cq_n.$

On $\mathcal A_n$, at most $Cq_nL_n$ observations lie in these
blocks. Let
$\Delta_\mu
=
\max_{\ell,h}
\|\mu_{\ell,0}-\mu_{h,0}\|.$
For observations outside the bad blocks,
$\mu_{\ell_0(s_i),0}-\mu_{j,0}=0,$
whereas for observations inside them the norm of this difference is
at most $\Delta_\mu$. Hence
\[
\left\|
\sum_{i\in C_j(\pi^\star)}
\{\mu_{\ell_0(s_i),0}-\mu_{j,0}\}
\right\|
\leq
C\Delta_\mu q_nL_n.
\]
By Lemma~\ref{lem-app-good-class-size},
$|C_j(\pi^\star)|
\geq
c_\star n.$
Therefore
\[
\left\|
\frac{1}{|C_j(\pi^\star)|}
\sum_{i\in C_j(\pi^\star)}
\{\mu_{\ell_0(s_i),0}-\mu_{j,0}\}
\right\|
\leq
C\frac{q_nL_n}{n}.
\]
Since
\[
\frac{q_nL_n}{n}
\asymp
\frac{K\log^{\alpha_0}(n)L_n}
     {K^2L_n}
=
\frac{\log^{\alpha_0}(n)}{K}
=
\rho_n,
\]
the result follows uniformly in $j$ and $\pi^\star$. Undoing the
temporary relabelling proves the stated bound.
\hfill$\square$

\begin{lemma}[Residual average on the good class]
\label{lem-app-good-class-residual}
On $\mathcal G_n^{(2)}$, there exists
$C_{\mathrm{res}}<\infty$ such that, for all sufficiently large
$n$,
\[
\sup_{\pi^\star\in\Pi_{5,K}^\star}
\max_{1\leq j\leq k_0}
\left\|
\frac{1}{|C_j(\pi^\star)|}
\sum_{i\in C_j(\pi^\star)}
\epsilon_i
\right\|
\leq
C_{\mathrm{res}}\rho_n.
\]
\end{lemma}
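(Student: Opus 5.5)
Fix $\pi^\star\in\Pi_{5,K}^\star$ and $j$, and relabel temporarily so that $m_{\pi^\star}(j)=j$. I would split the residual sum over $C_j(\pi^\star)$ into two parts: the sum over the comparator region $C_{j,0,K}^\star$, and a correction over the symmetric difference. Writing $I(\cdot)$ for observation index sets of block collections,
\[
\sum_{i\in C_j(\pi^\star)}\epsilon_i
=
R_{j,0,K}^\star
+\sum_{i\in I(\mathcal V_j\setminus\mathcal V_{j,0,K}^\star)}\epsilon_i
-\sum_{i\in I(\mathcal V_{j,0,K}^\star\setminus\mathcal V_j)}\epsilon_i .
\]
On $\mathcal H_n$ the first term has norm at most $C_Hn\rho_n$, uniformly in $\pi^\star$ and $j$. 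This bound is available because the comparator regions are fixed and there are only $k_0$ of them. For the other two terms I would not try a union bound over the exponentially many candidate partitions. Instead I would use the blockwise event $\mathcal F_n$, which holds simultaneously for every block. Each of the two differences is a union of blocks, and by Lemma~\ref{lem-app-good-class-symmetric-difference} there are at most $2q_n$ such blocks in total. The triangle inequality then gives
\[
\Bigl\|\sum_{i\in I(\mathcal S)}\epsilon_i\Bigr\|
\leq\sum_{B_{K,m}\in\mathcal S}\|R_{K,m}\|
\leq|\mathcal S|\,(M_FC_BL_n\log n)^{1/2},
\]
using $N_{K,m}\leq C_BL_n$ on $\mathcal A_n$.

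With $|\mathcal S|\leq 2q_n$, the symmetric-difference contribution is $O\{q_n(L_n\log n)^{1/2}\}$. Since $q_nL_n\asymp n\rho_n$, this equals $O\{n\rho_n(\log n/L_n)^{1/2}\}=o(n\rho_n)$, because $L_n=\log^{1+\alpha_b}(n)$ grows faster than $\log n$. This crude triangle-inequality step is the one place where care is needed: it works only because the block residuals are of order $(L_n\log n)^{1/2}\ll L_n$. If it failed, I would instead bound the residual sum of the union by $(M_FN(\mathcal S)\log n)^{1/2}$, but that would need a union bound over block collections, which is more delicate. The crude route suffices here.

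Combining the two bounds, $\|\sum_{i\in C_j(\pi^\star)}\epsilon_i\|\leq C n\rho_n$ uniformly over $\pi^\star\in\Pi_{5,K}^\star$ and $j$. Dividing by $|C_j(\pi^\star)|\geq c_\star n$ from Lemma~\ref{lem-app-good-class-size} (valid on $\mathcal A_n\supseteq\mathcal G_n^{(2)}$) gives the bound $C_{\mathrm{res}}\rho_n$ with $C_{\mathrm{res}}=(C_H+1)/c_\star$. Undoing the relabelling completes the argument. All constants are independent of $\pi^\star$ and $j$, so the supremum bound holds on $\mathcal G_n^{(2)}$ for all sufficiently large $n$.
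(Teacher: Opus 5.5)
Your proposal is correct and follows essentially the same route as the paper: the three-term decomposition of $\sum_{i\in C_j(\pi^\star)}\epsilon_i$ around the comparator cluster $C_{j,0,K}^\star$, the bound $C_Hn\rho_n$ on $\mathcal H_n$, a blockwise triangle inequality on $\mathcal F_n\cap\mathcal A_n$ over at most $2q_n$ symmetric-difference blocks giving $O\{q_n(L_n\log n)^{1/2}\}=O\{n\rho_n\log^{-\alpha_b/2}(n)\}$, and division by $|C_j(\pi^\star)|\geq c_\star n$. This matches the paper's proof step for step, including the avoidance of any union bound over candidate partitions.
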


\noindent\textit{Proof.}
Fix $\pi^\star\in\Pi_{5,K}^\star$ and temporarily relabel so that
$m_{\pi^\star}(j)=j$. Write
\[
C_j=C_j(\pi^\star),
\qquad
C_j^\star=C_{j,0,K}^\star,
\]
and, for $A\subseteq\{1,\ldots,n\}$, set
$R(A)=\sum_{i\in A}\epsilon_i.$
Then
$R(C_j)
=
R(C_j^\star)
+
R(C_j\setminus C_j^\star)
-
R(C_j^\star\setminus C_j),$
so
\begin{equation}
\|R(C_j)\|
\leq
\|R(C_j^\star)\|
+
\|R(C_j\setminus C_j^\star)\|
+
\|R(C_j^\star\setminus C_j)\|.
\label{eq-app-good-residual-decomp}
\end{equation}

On $\mathcal H_n$,
$\|R(C_j^\star)\|
=
\|R_{j,0,K}^\star\|
\leq
C_Hn\rho_n.$
By Lemma~\ref{lem-app-good-class-symmetric-difference}, each of the
two difference sets in
\eqref{eq-app-good-residual-decomp} is a union of blocks drawn from
a collection containing at most $2q_n$ blocks.

On $\mathcal F_n\cap\mathcal A_n$,
\[
\|R_{K,m}\|
\leq
\{M_FN_{K,m}\log n\}^{1/2}
\leq
C\sqrt{L_n\log n}
\]
for every block. Therefore the residual sum over any collection of
at most $2q_n$ blocks has norm at most
\[
Cq_n\sqrt{L_n\log n}.
\]
Consequently, on $\mathcal G_n^{(2)}$,
\[
\|R(C_j)\|
\leq
C_Hn\rho_n
+
Cq_n\sqrt{L_n\log n}.
\]
Dividing by
$|C_j|\geq c_\star n$ gives
\[
\frac{\|R(C_j)\|}{|C_j|}
\leq
C\rho_n
+
C\frac{q_n\sqrt{L_n\log n}}{n}.
\]
Now
\[
\frac{q_n\sqrt{L_n\log n}}{n}
\asymp
\frac{\log^{\alpha_0}(n)}{K}
\sqrt{\frac{\log n}{L_n}}
=
\rho_n\log^{-\alpha_b/2}(n)
\leq
\rho_n
\]
for all sufficiently large $n$. The result follows uniformly over
the good class.
\hfill$\square$

\begin{suppproposition}[Uniform empirical-centroid contraction]
\label{prop-app-good-class-centroid}
On $\mathcal G_n^{(2)}$, for all sufficiently large $n$,
\[
\sup_{\pi^\star\in\Pi_{5,K}^\star}
\max_{1\leq j\leq k_0}
\left\|
\bar X_{C_j(\pi^\star)}
-
\mu_{m_{\pi^\star}(j),0}
\right\|
\leq
C_{\mathrm{cent}}\rho_n
\]
for some $C_{\mathrm{cent}}<\infty$. Moreover,
\[
\bar X_{C_j(\pi^\star)}\in\mathcal K,
\qquad
\pi^\star\in\Pi_{5,K}^\star,
\quad
j=1,\ldots,k_0.
\]
\end{suppproposition}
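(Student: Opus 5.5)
The plan is to decompose the centroid error into a contamination-bias part and a residual part, and then use the two preceding lemmas. Fix $\pi^\star\in\Pi_{5,K}^\star$ and $j\le k_0$, and write $C_j=C_j(\pi^\star)$ and $\ell=m_{\pi^\star}(j)$. Since $X_i=\mu_{\ell_0(s_i),0}+\epsilon_i$, we have
\[
\bar X_{C_j}-\mu_{\ell,0}
=
\frac{1}{|C_j|}\sum_{i\in C_j}\left\{\mu_{\ell_0(s_i),0}-\mu_{\ell,0}\right\}
+
\frac{1}{|C_j|}\sum_{i\in C_j}\epsilon_i .
\]
The cell $C_j$ is nonempty and has size at least $c_\star n$ on $\mathcal A_n$ by Lemma~\ref{lem-app-good-class-size}, so the averages are well defined.

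First, since $\mathcal G_n^{(2)}\subseteq\mathcal A_n$, Lemma~\ref{lem-app-good-class-bias} bounds the first term by $C_{\mathrm{bias}}\rho_n$. Second, Lemma~\ref{lem-app-good-class-residual} bounds the second term by $C_{\mathrm{res}}\rho_n$ on $\mathcal G_n^{(2)}$. Both bounds hold uniformly over the good class and over $j$. The triangle inequality then gives the first claim with $C_{\mathrm{cent}}=C_{\mathrm{bias}}+C_{\mathrm{res}}$.

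For membership in $\mathcal K$, note that $\mu_{\ell,0}\in\mathcal H_0$. Hence
\[
\operatorname{dist}(\bar X_{C_j},\mathcal H_0)\le C_{\mathrm{cent}}\rho_n,
\]
and since $\rho_n\to0$ this is at most $\eta$ for all sufficiently large $n$, so $\bar X_{C_j}\in\mathcal K$. Alternatively, $C_j$ is the observation set of a nonempty block collection, so Lemma~\ref{lem-app-union-centroid} gives the same conclusion directly on $\mathcal G_n\supseteq\mathcal G_n^{(2)}$.

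There is no serious obstacle here, since all the work was done in the bias and residual lemmas. The only point to check is that the "sufficiently large $n$" thresholds in the cited lemmas are uniform over $\pi^\star$ and $j$. They are, because each depends only on $n$ and on fixed constants.
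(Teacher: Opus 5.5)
Your proposal is correct and follows the paper's proof: you split $\bar X_{C_j(\pi^\star)}-\mu_{m_{\pi^\star}(j),0}$ into the residual average and the contamination bias, bound these by Lemmas~\ref{lem-app-good-class-residual} and \ref{lem-app-good-class-bias} to get $C_{\mathrm{cent}}=C_{\mathrm{res}}+C_{\mathrm{bias}}$, and obtain membership in $\mathcal K$ from Lemma~\ref{lem-app-union-centroid} on $\mathcal G_n\supseteq\mathcal G_n^{(2)}$. Your alternative argument for membership, via $\operatorname{dist}(\bar X_{C_j},\mathcal H_0)\le C_{\mathrm{cent}}\rho_n\le\eta$, is also valid.
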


\noindent\textit{Proof.}
Because
$\mathcal G_n^{(2)}\subseteq\mathcal G_n$,
Lemma~\ref{lem-app-union-centroid} implies that every candidate
cluster centroid belongs to $\mathcal K$ for all sufficiently large
$n$.

For the contraction bound, write
\[
\begin{aligned}
\bar X_{C_j(\pi^\star)}
-
\mu_{m_{\pi^\star}(j),0}
&=
\frac{1}{|C_j(\pi^\star)|}
\sum_{i\in C_j(\pi^\star)}
\epsilon_i
+
\frac{1}{|C_j(\pi^\star)|}
\sum_{i\in C_j(\pi^\star)}
\left\{
\mu_{\ell_0(s_i),0}
-
\mu_{m_{\pi^\star}(j),0}
\right\}.
\end{aligned}
\]
Lemma~\ref{lem-app-good-class-residual} bounds the first term by
$C_{\mathrm{res}}\rho_n$ uniformly, while
Lemma~\ref{lem-app-good-class-bias} bounds the second by
$C_{\mathrm{bias}}\rho_n$. Thus the assertion holds with
\[
C_{\mathrm{cent}}
=
C_{\mathrm{res}}+C_{\mathrm{bias}}.
\]
\hfill$\square$

\appsubsection{Conditional posterior concentration of the representatives}
\label{app-theorem3-conditional-representative}

\begin{lemma}[Conditional representative-posterior tail]
\label{lem-app-representative-tail}
There exist constants
$C_1,C_2,c_1,c_2\in(0,\infty)$ such that the following holds.
Let $C$ be any nonempty cluster, write
\[
m=|C|,
\qquad
u=\bar X_C,
\]
and suppose $u\in\mathcal K$. Then, for every $0<t\leq\rho$,
\[
\Pi_\tau
\left(
\|\mu-u\|>t
\,\middle|\,
C,X_{1:n}
\right)
\leq
C_1e^{-c_1mt^2}
+
C_2e^{-c_2m}.
\]
Consequently, if $m\geq c_0n$ for some fixed $c_0>0$ and
$t_n\in(0,\rho]$ satisfies
\[
nt_n^2\longrightarrow\infty,
\]
then
\[
\Pi_\tau
\left(
\|\mu-\bar X_C\|>t_n
\,\middle|\,
C,X_{1:n}
\right)
\longrightarrow0.
\]
\end{lemma}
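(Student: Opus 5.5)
The plan is to write the conditional generalized posterior as a ratio of two integrals and bound them separately. By Theorem~\ref{thm-bregman-conjugacy-collapse}, when $u=\bar X_C\in\mathcal K\subset\mathcal U$ the posterior has density proportional to $\exp\{-\tau mD_\Phi(u,\mu)\}q_0(\mu)$. Hence
\[
\Pi_\tau(\|\mu-u\|>t\mid C,X_{1:n})
=\frac{\int_{\|\mu-u\|>t}\exp\{-\tau mD_\Phi(u,\mu)\}\Pi_{\chi_0,\kappa_0}(d\mu)}{J_m(u)}.
\]
Proposition~\ref{prop-uniform-integrated-center} gives the uniform lower bound $J_m(u)\geq c_Jm^{-r/2}$ for the denominator. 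The remaining task is to show that the numerator is at most $m^{-r/2}$ times the claimed tail, uniformly in $u\in\mathcal K$.

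Split the numerator into the shell $\{t<\|\mu-u\|<\rho\}$ and the far region $\{\|\mu-u\|\geq\rho\}$.

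On the shell, Lemma~\ref{lem-app-local-bregman} gives $D_\Phi(u,\mu)\geq (h_\Phi/2)\|\mu-u\|^2$, and Lemma~\ref{lem-app-prior-density} gives $q_0\leq\overline q$ there, since the shell lies in $\mathcal K_\rho$. So the shell contribution is at most
\[
\overline q\int_{\|\mu-u\|>t}\exp\{-\tau mh_\Phi\|\mu-u\|^2/2\}\,d\mu .
\]
After the change of variables $v=\sqrt m(\mu-u)$, this equals $m^{-r/2}$ times a Gaussian tail integral over $\{\|v\|>\sqrt m\,t\}$. Using $e^{-a\|v\|^2}\leq e^{-a\|v\|^2/2}e^{-amt^2/2}$ on that set, this is bounded by $C_1m^{-r/2}e^{-c_1mt^2}$ with $c_1=\tau h_\Phi/4$. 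Dividing by $c_Jm^{-r/2}$ yields the first term.

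On the far region, the infimum condition in Assumption~\ref{assump-bregman-response} gives $D_\Phi(u,\mu)\geq\delta_{\Phi,\rho}>0$. Because $\Pi_{\chi_0,\kappa_0}$ is a probability measure, the far contribution is at most $e^{-\tau m\delta_{\Phi,\rho}}$. Dividing by $c_Jm^{-r/2}$ gives $c_J^{-1}m^{r/2}e^{-\tau\delta_{\Phi,\rho}m}$. The polynomial factor is absorbed by halving the exponent, since $\sup_{m\geq1}m^{r/2}e^{-\tau\delta_{\Phi,\rho}m/2}<\infty$. This yields $C_2e^{-c_2m}$ with $c_2=\tau\delta_{\Phi,\rho}/2$. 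All constants depend only on $\tau,h_\Phi,\overline q,c_J,\delta_{\Phi,\rho},r$, not on $C$, $u$, or $n$.

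For the consequence, take $m\geq c_0n$. Then $mt_n^2\geq c_0nt_n^2\to\infty$ and $m\to\infty$, so both terms vanish. The main care point is the uniformity of the lower bound on $J_m(u)$ over $u\in\mathcal K$; this is exactly what Proposition~\ref{prop-uniform-integrated-center} supplies, so the remaining steps are routine Gaussian-tail estimates.
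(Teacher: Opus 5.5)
Your proof is correct and follows the paper's argument essentially step for step. Both write the posterior tail as a ratio over $J_m(u)$ and split the numerator into the shell $\{t<\|\mu-u\|<\rho\}$ and the far region. On the shell you use the local quadratic lower bound together with the prior-density upper bound on $\mathcal K_\rho$. On the far region you use the bound $e^{-\tau m\delta_{\Phi,\rho}}$. You then divide by the uniform lower bound $J_m(u)\geq c_Jm^{-r/2}$ from Proposition~\ref{prop-uniform-integrated-center}. Your explicit constants $c_1=\tau h_\Phi/4$ and $c_2=\tau\delta_{\Phi,\rho}/2$ simply make concrete the generic Gaussian-tail constant that the paper invokes.
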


\noindent\textit{Proof.}
By Theorem~\ref{thm-bregman-conjugacy-collapse},
\[
\Pi_\tau(d\mu\mid C,X_{1:n})
=
\frac{
e^{-\tau mD_\Phi(u,\mu)}
\Pi_{\chi_0,\kappa_0}(d\mu)
}{
J_m(u)
}.
\]
For $0<t\leq\rho$, split the numerator of the tail probability into
\[
N_1
=
\int_{\{t<\|\mu-u\|<\rho\}}
e^{-\tau mD_\Phi(u,\mu)}
\Pi_{\chi_0,\kappa_0}(d\mu)
\]
and
\[
N_2
=
\int_{\{\|\mu-u\|\geq\rho\}}
e^{-\tau mD_\Phi(u,\mu)}
\Pi_{\chi_0,\kappa_0}(d\mu).
\]

By Proposition~\ref{prop-uniform-integrated-center},
\[
J_m(u)\geq c_Jm^{-r/2}.
\]
On the local region
$\|\mu-u\|<\rho$, Lemma~\ref{lem-app-local-bregman} gives
\[
D_\Phi(u,\mu)
\geq
\frac{h_\Phi}{2}\|\mu-u\|^2.
\]
Also, by Lemma~\ref{lem-app-prior-density}, the prior density is
bounded above on $\mathcal K_\rho$. Hence
\[
N_1
\leq
C
\int_{\|\mu-u\|>t}
\exp\left\{
-\frac{\tau h_\Phi}{2}
m\|\mu-u\|^2
\right\}
\lambda_{\mathcal V}(d\mu).
\]
With $v=\sqrt m(\mu-u)$,
\[
N_1
\leq
Cm^{-r/2}
\int_{\|v\|>\sqrt m\,t}
e^{-a_\Phi\|v\|^2}
\lambda_{\mathcal V}(dv),
\qquad
a_\Phi=\frac{\tau h_\Phi}{2}.
\]
Since $r$ is fixed, there exist constants $C_G,c_G>0$ such that
\[
\int_{\|v\|>A}
e^{-a_\Phi\|v\|^2}
\lambda_{\mathcal V}(dv)
\leq
C_Ge^{-c_GA^2},
\qquad A\geq0.
\]
Therefore
$\frac{N_1}{J_m(u)}
\leq
C_1e^{-c_1mt^2}.$

For the nonlocal part,
Assumption~\ref{assump-bregman-response} gives
\[
D_\Phi(u,\mu)
\geq
\delta_{\Phi,\rho}
\]
whenever $\|\mu-u\|\geq\rho$. Thus
\[
N_2
\leq
e^{-\tau m\delta_{\Phi,\rho}}.
\]
Using again
$J_m(u)\geq c_Jm^{-r/2}$,
\[
\frac{N_2}{J_m(u)}
\leq
c_J^{-1}m^{r/2}
e^{-\tau m\delta_{\Phi,\rho}}
\leq
C_2e^{-c_2m}
\]
for suitable constants $C_2,c_2>0$. Combining the local and
nonlocal bounds proves the first assertion.

If $m\geq c_0n$ and $nt_n^2\to\infty$, then
\[
mt_n^2\geq c_0nt_n^2\longrightarrow\infty,
\qquad
m\longrightarrow\infty.
\]
Substitution into the preceding tail bound proves the second
assertion.
\hfill$\square$

\begin{suppproposition}[Uniform representative contraction on the good class]
\label{prop-app-good-class-representative}
Let $M_n\to\infty$ be any deterministic sequence. On
$\mathcal G_n^{(2)}$,
\[
\sup_{\pi^\star\in\Pi_{5,K}^\star}
\Pi_\tau
\left(
\max_{1\leq j\leq k_0}
\|\mu_j-\mu_{m_{\pi^\star}(j),0}\|
>
M_n\rho_n
\,\middle|\,
\pi^\star,X_{1:n},s_{1:n}
\right)
\longrightarrow0.
\]
\end{suppproposition}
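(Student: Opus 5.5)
The plan is to combine the uniform centroid bound of Proposition~\ref{prop-app-good-class-centroid} with the conditional tail bound of Lemma~\ref{lem-app-representative-tail}, using a triangle inequality and a union bound over the $k_0$ clusters.

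\emph{Setup.} Work on $\mathcal G_n^{(2)}$ with $n$ large, and fix $\pi^\star\in\Pi_{5,K}^\star$. By Theorem~\ref{thm-bregman-conjugacy-collapse}, conditional on $\pi^\star$ the representatives $\mu_1,\ldots,\mu_{k_0}$ are independent. Each $\mu_j$ has law $\Pi_{\chi_0+\tau S_{C_j},\kappa_0+\tau n_{C_j}}$, with $C_j=C_j(\pi^\star)$. This law is the normalized density $e^{-\tau mD_\Phi(\bar X_{C_j},\mu)}\Pi_{\chi_0,\kappa_0}(d\mu)/J_m(\bar X_{C_j})$ used in Lemma~\ref{lem-app-representative-tail}. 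The union bound gives
\[
\Pi_\tau\Bigl(\max_j\|\mu_j-\mu_{m_{\pi^\star}(j),0}\|>M_n\rho_n\,\Big|\,\pi^\star,X_{1:n},s_{1:n}\Bigr)\leq\sum_{j=1}^{k_0}\Pi_\tau\bigl(\|\mu_j-\mu_{m_{\pi^\star}(j),0}\|>M_n\rho_n\,\big|\,C_j,X_{1:n}\bigr).
\]
Since $k_0$ is fixed, it suffices to bound each summand uniformly over $\pi^\star$ and $j$.

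\emph{Triangle inequality.} Proposition~\ref{prop-app-good-class-centroid} gives $\|\bar X_{C_j}-\mu_{m_{\pi^\star}(j),0}\|\leq C_{\mathrm{cent}}\rho_n$ and $\bar X_{C_j}\in\mathcal K$. Once $M_n\geq 2C_{\mathrm{cent}}$, the event $\{\|\mu_j-\mu_{m_{\pi^\star}(j),0}\|>M_n\rho_n\}$ is contained in $\{\|\mu_j-\bar X_{C_j}\|>t_n\}$, where
\[
t_n=\min\{M_n\rho_n/2,\rho\}.
\]
Apply Lemma~\ref{lem-app-representative-tail} with $u=\bar X_{C_j}\in\mathcal K$ and $m=|C_j|$. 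Lemma~\ref{lem-app-good-class-size} (valid on $\mathcal A_n\supseteq\mathcal G_n^{(2)}$) gives $m\geq c_\star n$. Each summand is therefore at most
\[
C_1e^{-c_1c_\star nt_n^2}+C_2e^{-c_2c_\star n}.
\]
This bound depends on neither $\pi^\star$ nor $j$, so the supremum over the good class is controlled.

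\emph{Conclusion.} It remains to check $nt_n^2\to\infty$. If $t_n=\rho$, this is trivial. Otherwise $nt_n^2=M_n^2n\rho_n^2/4\geq n\rho_n^2/4$ once $M_n\geq1$, and $n\rho_n^2\asymp\log^{2\alpha_0+1+\alpha_b}(n)\to\infty$. Hence the bound tends to zero, and multiplying by $k_0$ finishes the proof.

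The only delicate point is making sure every constant is uniform: $C_1,c_1,C_2,c_2$ from Lemma~\ref{lem-app-representative-tail} depend only on the geometry through $\mathcal K$, $\rho$ and $\tau$, and not on $u$ or $m$. Because of this uniformity, membership $\bar X_{C_j}\in\mathcal K$ together with $|C_j|\geq c_\star n$ is exactly what is needed. Note also that $M_n\rho_n$ need not be less than $\rho$, which is why the truncation at $\rho$ in $t_n$ is needed.
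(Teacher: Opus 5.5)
Your proposal is correct and follows essentially the same route as the paper. The paper also combines the uniform centroid bound with a triangle inequality, takes a union bound over the $k_0$ clusters, and applies the conditional tail lemma using $|C_j|\geq c_\star n$ and $nt_n^2\to\infty$. Your truncation $t_n=\min\{M_n\rho_n/2,\rho\}$ (taken once $M_n\geq 2C_{\mathrm{cent}}$) replaces the paper's $t_n=\min\{(M_n-C_{\mathrm{cent}})\rho_n,\rho/2\}$, which is an inessential difference; your explicit uniform bound $C_1e^{-c_1c_\star nt_n^2}+C_2e^{-c_2c_\star n}$ just spells out the uniformity that the paper asserts.
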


\noindent\textit{Proof.}
Fix $\pi^\star\in\Pi_{5,K}^\star$. By
Proposition~\ref{prop-app-good-class-centroid},
$\max_{1\leq j\leq k_0}
\left\|
\bar X_{C_j(\pi^\star)}
-
\mu_{m_{\pi^\star}(j),0}
\right\|
\leq
C_{\mathrm{cent}}\rho_n.$
By Lemma~\ref{lem-app-good-class-size},
$|C_j(\pi^\star)|
\geq
c_\star n,
\qquad
j=1,\ldots,k_0,$
and all the empirical centroids belong to $\mathcal K$.

For all sufficiently large $n$,
$M_n>C_{\mathrm{cent}}+1$. Define
\[
t_n
=
\min\left\{
(M_n-C_{\mathrm{cent}})\rho_n,\,
\frac{\rho}{2}
\right\}.
\]
Then $0<t_n\leq\rho$, and
$nt_n^2\longrightarrow\infty.
$ Indeed, if the first term in the minimum is selected along a
subsequence, then
$nt_n^2
=
(M_n-C_{\mathrm{cent}})^2n\rho_n^2
\longrightarrow\infty,$
while if $t_n=\rho/2$, the conclusion is immediate.

If
$\|\mu_j-\mu_{m_{\pi^\star}(j),0}\|
>
M_n\rho_n,$
then the triangle inequality implies
\[
\|\mu_j-\bar X_{C_j(\pi^\star)}\|
>
(M_n-C_{\mathrm{cent}})\rho_n
\geq
t_n.
\]
Therefore, by a union bound,
\[
\begin{aligned}
&\Pi_\tau
\left(
\max_{1\leq j\leq k_0}
\|\mu_j-\mu_{m_{\pi^\star}(j),0}\|
>
M_n\rho_n
\,\middle|\,
\pi^\star,X_{1:n},s_{1:n}
\right)
\\
&\qquad\leq
\sum_{j=1}^{k_0}
\Pi_\tau
\left(
\|\mu_j-\bar X_{C_j(\pi^\star)}\|>t_n
\,\middle|\,
C_j(\pi^\star),X_{1:n}
\right).
\end{aligned}
\]
Lemma~\ref{lem-app-representative-tail} shows that every summand
converges to zero uniformly over the good class. Since $k_0$ is
fixed, the stated supremum converges to zero.
\hfill$\square$

\appsubsection{Representative-contraction conclusion}
\label{app-theorem3-representative-conclusion}

Let $M_n\to\infty$ be deterministic. By the definition of the joint
posterior,
\[
\begin{aligned}
&\Pi_{\tau,n}
\left(
\pi^\star\in\Pi_{5,K}^\star,\,
\max_{1\leq j\leq k_0}
\|\mu_j-\mu_{m_{\pi^\star}(j),0}\|
>
M_n\rho_n
\,\middle|\,
X_{1:n},s_{1:n}
\right)
\\
&\quad=
\int_{\Pi_{5,K}^\star}
\Pi_\tau
\left(
\max_{1\leq j\leq k_0}
\|\mu_j-\mu_{m_{\pi^\star}(j),0}\|
>
M_n\rho_n
\,\middle|\,
\pi^\star,X_{1:n},s_{1:n}
\right)
\Pi_{\tau,n}
(d\pi^\star\mid X_{1:n},s_{1:n}).
\end{aligned}
\]
Hence
\[
\begin{aligned}
&\Pi_{\tau,n}
\left(
\pi^\star\in\Pi_{5,K}^\star,\,
\max_{1\leq j\leq k_0}
\|\mu_j-\mu_{m_{\pi^\star}(j),0}\|
>
M_n\rho_n
\,\middle|\,
X_{1:n},s_{1:n}
\right)
\\
&\qquad\leq
\sup_{\pi^\star\in\Pi_{5,K}^\star}
\Pi_\tau
\left(
\max_{1\leq j\leq k_0}
\|\mu_j-\mu_{m_{\pi^\star}(j),0}\|
>
M_n\rho_n
\,\middle|\,
\pi^\star,X_{1:n},s_{1:n}
\right).
\end{aligned}
\]
By Proposition~\ref{prop-app-good-class-representative}, the
right-hand side converges to zero on $\mathcal G_n^{(2)}$, and
Lemma~\ref{lem-app-comparator-residual} gives
\[
P_0(\mathcal G_n^{(2)})\longrightarrow1.
\]
Therefore, with $P_0$-probability tending to one,
\[
\Pi_{\tau,n}
\left(
\pi^\star\in\Pi_{5,K}^\star,\,
\max_{1\leq j\leq k_0}
\|\mu_j-\mu_{m_{\pi^\star}(j),0}\|
>
M_n\rho_n
\,\middle|\,
X_{1:n},s_{1:n}
\right)
\longrightarrow0.
\]
This proves \eqref{eq-representative-contraction}.

\appsubsection{Good-class spatial mismatch area}
\label{app-theorem3-mismatch-area}

For $\pi^\star\in\Pi_{5,K}^\star$, define the labelled mismatch set
\begin{equation}
\mathcal M(\pi^\star)
=
\bigcup_{j=1}^{k_0}
\left(
D_j^\star
\setminus
D_{m_{\pi^\star}(j),0}
\right).
\label{eq-app-labelled-mismatch-set}
\end{equation}

\begin{lemma}[Good-class mismatch area]
\label{lem-app-good-class-mismatch-area}
There exists $C_{\mathrm{mis}}<\infty$ such that, for all sufficiently
large $n$,
\[
\sup_{\pi^\star\in\Pi_{5,K}^\star}
|\mathcal M(\pi^\star)|
\leq
C_{\mathrm{mis}}\rho_n.
\]
\end{lemma}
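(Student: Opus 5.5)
We bound the mismatch area by passing through the supported comparator $\pi_{0,K}^\star$. Fix $\pi^\star\in\Pi_{5,K}^\star$ and temporarily relabel so that $m_{\pi^\star}(j)=j$, which is possible because the best-match map is a bijection on the good class. For each $j$ we split
\[
D_j^\star\setminus D_{j,0}
\subseteq
\left(D_j^\star\setminus D_{j,0,K}^\star\right)
\cup
\left(D_{j,0,K}^\star\setminus D_{j,0}\right),
\]
up to block boundaries of Lebesgue measure zero. Taking the union over $j$ and using subadditivity of Lebesgue measure gives
\[
|\mathcal M(\pi^\star)|
\le
\sum_{j=1}^{k_0}|D_j^\star\setminus D_{j,0,K}^\star|
+
\sum_{j=1}^{k_0}|D_{j,0,K}^\star\setminus D_{j,0}|.
\]

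For the first sum, $D_j^\star\setminus D_{j,0,K}^\star$ is the union of the blocks in $\mathcal V_j\setminus\mathcal V_{j,0,K}^\star$. There are $\sum_{\ell\neq j}b_{j\ell}$ such blocks, and each has area $K^{-2}$. Summing over $j$ and applying \eqref{eq-app-q-diagonal}, the first sum equals $q_K(\pi^\star)/K^2$. On the good class $q_K(\pi^\star)<q_n\le K\log^{\alpha_0}(n)$, so this sum is at most $\rho_n$. Alternatively, Lemma~\ref{lem-app-good-class-symmetric-difference} gives the same bound up to a factor of two.

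For the second sum, each set $D_{j,0,K}^\star\setminus D_{j,0}$ is contained in the union of the comparator-error blocks of label $j$, that is, the blocks $B\in\mathcal V_{j,0,K}^\star$ with $B\not\subseteq D_{j,0}$. These blocks belong to $W_{0,K}$. Hence the second sum is at most $|W_{0,K}|K^{-2}\le C_{\mathrm{app}}/K$ by \eqref{eq-app-comparator-error}. This is the same estimate for $\sum_\ell|R^+_{\ell,K}|$ used in the proof of Lemma~\ref{lem-app-macroscopic-comparator}. Since $\rho_n=\log^{\alpha_0}(n)/K$ with $\alpha_0>0$, we have $K^{-1}\le\rho_n$ for all large $n$.

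Combining the two sums gives $|\mathcal M(\pi^\star)|\le(1+C_{\mathrm{app}})\rho_n$. The constant does not depend on $\pi^\star$, and undoing the relabelling leaves $\mathcal M(\pi^\star)$ unchanged, since it is defined through $m_{\pi^\star}$. This gives the claim with $C_{\mathrm{mis}}=1+C_{\mathrm{app}}$.

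The only delicate point is the measure-zero bookkeeping on block boundaries and on the boundary set $\mathcal B_0$. It is harmless because $\mathcal B_0$ has Lebesgue measure zero, which follows from the covering-number bound in Assumption~\ref{assump-spatial}.
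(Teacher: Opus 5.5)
Your proof is correct and is essentially the paper's argument: the paper also relabels so that $m_{\pi^\star}(j)=j$ and covers $\mathcal M(\pi^\star)$ by the $q_K(\pi^\star)$ off-diagonal candidate blocks plus the comparator-error blocks in $W_{0,K}$, each of area $K^{-2}$, which gives $(q_n+C_{\mathrm{app}}K)/K^2\le C\rho_n$. Your split through the intermediate set $D_{j,0,K}^\star$ is a set-level rephrasing of the paper's block-by-block case analysis, and it has the small advantage of producing the explicit constant $C_{\mathrm{mis}}=1+C_{\mathrm{app}}$.
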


\noindent\textit{Proof.}
Fix $\pi^\star\in\Pi_{5,K}^\star$ and temporarily relabel the
candidate regions so that
\[
m_{\pi^\star}(j)=j,
\qquad
j=1,\ldots,k_0.
\]
Then
$\mathcal M(\pi^\star)
=
\bigcup_{j=1}^{k_0}
(D_j^\star\setminus D_{j,0}).$

Consider a mesh block $B\in\mathcal V_j$. If $B$ contributes to
$D_j^\star\setminus D_{j,0}$, then either $B$ belongs to a
comparator region with label $\ell\neq j$, or
$B\in
\mathcal V_j\cap\mathcal V_{j,0,K}^\star$
and $B$ is one of the comparator-error blocks in $W_{0,K}$.
Indeed, if
\[
B\in
\mathcal V_j\cap\mathcal V_{j,0,K}^\star
\quad\text{and}\quad
B\notin W_{0,K},
\]
then the definition of $W_{0,K}$ gives
$B\subseteq D_{j,0}$.

Consequently, up to block boundaries of Lebesgue measure zero,
$\mathcal M(\pi^\star)$ is covered by at most
\[
\sum_{j=1}^{k_0}\sum_{\ell\neq j}b_{j\ell}
+
|W_{0,K}|
=
q_K(\pi^\star)+|W_{0,K}|
\]
mesh blocks. Since $\pi^\star\in\Pi_{5,K}^\star$,
$q_K(\pi^\star)<q_n,$
while \eqref{eq-app-comparator-error} gives
\[
|W_{0,K}|
\leq
C_{\mathrm{app}}K.
\]
Each mesh block has area $K^{-2}$. Hence
\[
|\mathcal M(\pi^\star)|
\leq
\frac{q_n+C_{\mathrm{app}}K}{K^2}.
\]
Because
\[
q_n
=
\lfloor K\log^{\alpha_0}(n)\rfloor
\]
and $\alpha_0>0$, we have $K=O(q_n)$ and therefore
\[
|\mathcal M(\pi^\star)|
\leq
C\frac{\log^{\alpha_0}(n)}{K}
=
C\rho_n.
\]
The constant does not depend on $\pi^\star$. Undoing the temporary
relabelling proves the result.
\hfill$\square$

\appsubsection{A high-posterior-probability representative event}
\label{app-theorem3-representative-event}

Define the deterministic sequence
\begin{equation}
b_n
=
\rho_n^{1/2}.
\label{eq-app-bn}
\end{equation}
Since $\rho_n\to0$,
\[
b_n\longrightarrow0,
\qquad
\frac{b_n}{\rho_n}
=
\rho_n^{-1/2}
\longrightarrow\infty.
\]
For a joint posterior draw with
$\pi^\star\in\Pi_{5,K}^\star$, define
\begin{equation}
\mathcal R_n
=
\left\{
\max_{1\leq j\leq k_0}
\|\mu_j-\mu_{m_{\pi^\star}(j),0}\|
\leq
b_n
\right\}.
\label{eq-app-Rn}
\end{equation}

Applying
Proposition~\ref{prop-app-good-class-representative} with
\[
M_n
=
\frac{b_n}{\rho_n}
=
\rho_n^{-1/2}
\longrightarrow\infty
\]
gives, on $\mathcal G_n^{(2)}$,
\begin{equation}
\sup_{\pi^\star\in\Pi_{5,K}^\star}
\Pi_\tau
\left(
\mathcal R_n^c
\,\middle|\,
\pi^\star,X_{1:n},s_{1:n}
\right)
\longrightarrow0.
\label{eq-app-Rn-conditional}
\end{equation}
Hence, by integrating with respect to the marginal partition
posterior,
\begin{equation}
\Pi_{\tau,n}
\left(
\pi^\star\in\Pi_{5,K}^\star,\,
\mathcal R_n^c
\,\middle|\,
X_{1:n},s_{1:n}
\right)
\longrightarrow0.
\label{eq-app-Rn-joint}
\end{equation}

For all sufficiently large $n$, $b_n\leq\rho$. On
$\mathcal R_n$, each true matched center belongs to
$\mathcal H_0\subseteq\mathcal K$, and therefore
$\operatorname{dist}(\mu_j,\mathcal K)
\leq
\|\mu_j-\mu_{m_{\pi^\star}(j),0}\|
\leq
b_n
\leq
\rho.$
Consequently,
\begin{equation}
\mu_j\in\mathcal K_\rho,
\qquad
j=1,\ldots,k_0,
\label{eq-app-representatives-Krho}
\end{equation}
on $\mathcal R_n$ for all sufficiently large $n$.

Let
$p_{\max}
=
\sup_{s\in\mathcal D}p_D(s)<\infty.$
Since $\mathcal K_\rho$ is compact, define
\begin{equation}
B_\mu
=
\sup_{u,v\in\mathcal K_\rho}
\|u-v\|^2
<
\infty.
\label{eq-app-Bmu}
\end{equation}
Moreover, because
$\Phi\in C^2(\mathcal U)$ and
$\mathcal K_\rho\subset\mathcal U$ is compact,
$D_\Phi$ is continuous on
$\mathcal K_\rho\times\mathcal K_\rho$. Hence
\begin{equation}
B_\Phi
=
\sup_{u,v\in\mathcal K_\rho}
D_\Phi(u,v)
<
\infty.
\label{eq-app-Bphi}
\end{equation}

\appsubsection{Integrated squared center-field contraction}
\label{app-theorem3-l2-field}

Let $M_n'\to\infty$ be arbitrary and deterministic. We prove
\eqref{eq-field-l2-contraction}.

Work on $\mathcal G_n^{(2)}$. By
Proposition~\ref{prop-app-good-class-concentration},
\begin{equation}
\Pi_{\tau,n}
\left(
\Pi_{5,K}^{\star\,c}
\,\middle|\,
X_{1:n},s_{1:n}
\right)
\leq
Ce^{-c\zeta_n}
\longrightarrow0.
\label{eq-app-good-class-complement-theorem3}
\end{equation}
Together with \eqref{eq-app-Rn-joint}, it is therefore enough to
bound the field loss uniformly on
$\{\pi^\star\in\Pi_{5,K}^\star\}\cap\mathcal R_n.$

Fix $\pi^\star\in\Pi_{5,K}^\star$ and work on $\mathcal R_n$.
First consider
$s\in\mathcal M(\pi^\star)^c.$
If $s\in D_j^\star$, then, apart from null boundary sets,
$s\in D_{m_{\pi^\star}(j),0}.$
Hence
\[
\mu_{\pi^\star}(s)=\mu_j,
\qquad
\mu_0(s)=\mu_{m_{\pi^\star}(j),0}.
\]
By the definition of $\mathcal R_n$,
$\|\mu_{\pi^\star}(s)-\mu_0(s)\|^2
\leq
b_n^2
=
\rho_n.$
Therefore, since $|\mathcal D|=1$,
\begin{equation}
\int_{\mathcal M(\pi^\star)^c}
\|\mu_{\pi^\star}(s)-\mu_0(s)\|^2
p_D(s)\,ds
\leq
p_{\max}\rho_n.
\label{eq-app-l2-matched}
\end{equation}

Now consider
$s\in\mathcal M(\pi^\star)$. On $\mathcal R_n$,
\eqref{eq-app-representatives-Krho} gives
$\mu_{\pi^\star}(s)\in\mathcal K_\rho.
$ Also,
\[
\mu_0(s)\in
\{\mu_{1,0},\ldots,\mu_{k_0,0}\}
\subseteq
\mathcal H_0
\subseteq
\mathcal K
\subseteq
\mathcal K_\rho.
\]
Thus, by \eqref{eq-app-Bmu},
$\|\mu_{\pi^\star}(s)-\mu_0(s)\|^2
\leq
B_\mu.$
Using Lemma~\ref{lem-app-good-class-mismatch-area},
\begin{equation}
\begin{aligned}
\int_{\mathcal M(\pi^\star)}
\|\mu_{\pi^\star}(s)-\mu_0(s)\|^2
p_D(s)\,ds
&\leq
p_{\max}B_\mu
|\mathcal M(\pi^\star)|
\\
&\leq
p_{\max}B_\mu C_{\mathrm{mis}}\rho_n.
\end{aligned}
\label{eq-app-l2-mismatch}
\end{equation}

Combining
\eqref{eq-app-l2-matched} and
\eqref{eq-app-l2-mismatch}, there exists
$C_{L^2}<\infty$ such that, uniformly over
$\pi^\star\in\Pi_{5,K}^\star$, on $\mathcal R_n$,
\begin{equation}
\int_{\mathcal D}
\|\mu_{\pi^\star}(s)-\mu_0(s)\|^2
p_D(s)\,ds
\leq
C_{L^2}\rho_n.
\label{eq-app-l2-uniform}
\end{equation}

Since $M_n'\to\infty$, for all sufficiently large $n$,
\[
M_n'>C_{L^2}.
\]
Thus the event
\[
\left\{
\int_{\mathcal D}
\|\mu_{\pi^\star}(s)-\mu_0(s)\|^2
p_D(s)\,ds
>
M_n'\rho_n
\right\}
\]
cannot occur on
$\{\pi^\star\in\Pi_{5,K}^\star\}\cap\mathcal R_n$.
Consequently,
\[
\begin{aligned}
&\Pi_{\tau,n}
\left(
\int_{\mathcal D}
\|\mu_{\pi^\star}(s)-\mu_0(s)\|^2
p_D(s)\,ds
>
M_n'\rho_n
\,\middle|\,
X_{1:n},s_{1:n}
\right)
\\
&\qquad\leq
\Pi_{\tau,n}
\left(
\Pi_{5,K}^{\star\,c}
\,\middle|\,
X_{1:n},s_{1:n}
\right)
\\
&\qquad\quad+
\Pi_{\tau,n}
\left(
\pi^\star\in\Pi_{5,K}^\star,\,
\mathcal R_n^c
\,\middle|\,
X_{1:n},s_{1:n}
\right).
\end{aligned}
\]
The first term tends to zero by
\eqref{eq-app-good-class-complement-theorem3}, and the second tends
to zero by \eqref{eq-app-Rn-joint}. Therefore, with
$P_0$-probability tending to one,
\[
\Pi_{\tau,n}
\left(
\int_{\mathcal D}
\|\mu_{\pi^\star}(s)-\mu_0(s)\|^2
p_D(s)\,ds
>
M_n'\rho_n
\,\middle|\,
X_{1:n},s_{1:n}
\right)
\longrightarrow0.
\]
This proves \eqref{eq-field-l2-contraction}.

\appsubsection{Integrated Bregman center-field contraction}
\label{app-theorem3-bregman-field}

Let $M_n''\to\infty$ be arbitrary and deterministic. We now prove
\eqref{eq-field-bregman-contraction}.

Again work on
\[
\{\pi^\star\in\Pi_{5,K}^\star\}\cap\mathcal R_n.
\]
For
$s\in\mathcal M(\pi^\star)^c$, if $s\in D_j^\star$, then
\[
\mu_0(s)
=
\mu_{m_{\pi^\star}(j),0}
\in\mathcal K,
\qquad
\mu_{\pi^\star}(s)=\mu_j.
\]
Moreover,
$\|\mu_j-\mu_{m_{\pi^\star}(j),0}\|
\leq
b_n
\leq
\rho$
for all sufficiently large $n$. Hence
Lemma~\ref{lem-app-local-bregman} gives
\[
\begin{aligned}
D_\Phi\{
\mu_0(s),\mu_{\pi^\star}(s)
\}
&=
D_\Phi\{
\mu_{m_{\pi^\star}(j),0},\mu_j
\}
\\
&\leq
\frac{\overline h_\Phi}{2}
\|\mu_j-\mu_{m_{\pi^\star}(j),0}\|^2
\\
&\leq
\frac{\overline h_\Phi}{2}b_n^2
=
\frac{\overline h_\Phi}{2}\rho_n.
\end{aligned}
\]
Therefore
\begin{equation}
\int_{\mathcal M(\pi^\star)^c}
D_\Phi\{
\mu_0(s),\mu_{\pi^\star}(s)
\}
p_D(s)\,ds
\leq
p_{\max}
\frac{\overline h_\Phi}{2}
\rho_n.
\label{eq-app-bregman-matched}
\end{equation}

On the mismatch set
$\mathcal M(\pi^\star)$, both
$\mu_0(s)$ and $\mu_{\pi^\star}(s)$ belong to
$\mathcal K_\rho$ on $\mathcal R_n$. Hence
\[
D_\Phi\{
\mu_0(s),\mu_{\pi^\star}(s)
\}
\leq
B_\Phi.
\]
Lemma~\ref{lem-app-good-class-mismatch-area} therefore gives
\begin{equation}
\begin{aligned}
\int_{\mathcal M(\pi^\star)}
D_\Phi\{
\mu_0(s),\mu_{\pi^\star}(s)
\}
p_D(s)\,ds
&\leq
p_{\max}B_\Phi
|\mathcal M(\pi^\star)|
\\
&\leq
p_{\max}B_\Phi
C_{\mathrm{mis}}\rho_n.
\end{aligned}
\label{eq-app-bregman-mismatch}
\end{equation}

Combining
\eqref{eq-app-bregman-matched} and
\eqref{eq-app-bregman-mismatch}, there exists
$C_\Phi<\infty$ such that, uniformly over
$\pi^\star\in\Pi_{5,K}^\star$, on $\mathcal R_n$,
\begin{equation}
\int_{\mathcal D}
D_\Phi\{
\mu_0(s),\mu_{\pi^\star}(s)
\}
p_D(s)\,ds
\leq
C_\Phi\rho_n.
\label{eq-app-bregman-uniform}
\end{equation}

Since $M_n''\to\infty$, for all sufficiently large $n$,
$M_n''>C_\Phi.$
Therefore
\[
\begin{aligned}
&\Pi_{\tau,n}
\left(
\int_{\mathcal D}
D_\Phi\{
\mu_0(s),\mu_{\pi^\star}(s)
\}
p_D(s)\,ds
>
M_n''\rho_n
\,\middle|\,
X_{1:n},s_{1:n}
\right)
\\
&\qquad\leq
\Pi_{\tau,n}
\left(
\Pi_{5,K}^{\star\,c}
\,\middle|\,
X_{1:n},s_{1:n}
\right)
\\
&\qquad\quad+
\Pi_{\tau,n}
\left(
\pi^\star\in\Pi_{5,K}^\star,\,
\mathcal R_n^c
\,\middle|\,
X_{1:n},s_{1:n}
\right).
\end{aligned}
\]
Both terms converge to zero on $\mathcal G_n^{(2)}$ by
\eqref{eq-app-good-class-complement-theorem3} and
\eqref{eq-app-Rn-joint}. Since
\[
P_0(\mathcal G_n^{(2)})\longrightarrow1,
\]
we conclude that, with $P_0$-probability tending to one,
\[
\Pi_{\tau,n}
\left(
\int_{\mathcal D}
D_\Phi\{
\mu_0(s),\mu_{\pi^\star}(s)
\}
p_D(s)\,ds
>
M_n''\rho_n
\,\middle|\,
X_{1:n},s_{1:n}
\right)
\longrightarrow0.
\]
This proves \eqref{eq-field-bregman-contraction} and completes the
proof of Theorem~\ref{thm-representative-field-contraction}.
\hfill$\square$

\appsection{Proof of Theorem~\ref{thm:tau-consistency}}
\label{app-proof-tau-consistency}

\noindent\textit{Proof.}
Write
\[
H_i=\nabla^2\Phi\{\mu_0(s_i)\},\qquad
\Omega_i=\Omega_0(s_i),\qquad
g_i=\operatorname{tr}(H_i\Omega_i),\qquad
a_n=\frac1n\sum_{i=1}^n g_i,
\]
and let
\[
N_n=\sum_{B:m_B\geq2}(m_B-1),\qquad
T_n=\sum_{B:m_B\geq2}\sum_{i\in I_B}
(X_i-\bar X_B)^\top
\nabla^2\Phi(\bar X_B^{\mathrm{int}})
(X_i-\bar X_B).
\]
Thus $\widehat a_n=T_n/N_n$ when $N_n>0$, with the convention $\widehat a_n=0$ when $N_n=0$. On the occupancy event $\mathcal A_n$ from Section~\ref{app-spatial-comparator}, whose $P_0$-probability tends to one, there are constants $0<c_B<C_B<\infty$ such that
$c_BL_n\leq m_B\leq C_BL_n
$ simultaneously for all $B\in\mathcal B_K$. Since $L_n\to\infty$, all blocks have $m_B\geq2$ for sufficiently large $n$ on $\mathcal A_n$, and therefore
\[
N_n=n-K^2,\qquad
\frac{N_n}{n}\longrightarrow1,
\]
because $K^2L_n\asymp n$.

Let $\mathcal I_n$ be the blocks contained in a single true region and let $\mathcal C_n=\mathcal B_K\setminus\mathcal I_n$. Assumption~\ref{assump-spatial} implies
\[
|\mathcal C_n|=O(K),\qquad
\sum_{B\in\mathcal C_n}m_B=O(KL_n)=o(n)
\]
on $\mathcal A_n$. For $B\in\mathcal I_n$, let $\ell(B)$ denote its true-region index and define
\[
H_B^0=\nabla^2\Phi(\mu_{\ell(B),0}),\qquad
Q_B^0=\sum_{i\in I_B}(X_i-\bar X_B)^\top H_B^0(X_i-\bar X_B).
\]
Conditional independence and the common conditional mean within an interior block give
$E_0(Q_B^0\mid s_{1:n})
=
\frac{m_B-1}{m_B}\sum_{i\in I_B}g_i.$
The local exponential-moment condition implies uniformly bounded conditional fourth moments, and hence
$\operatorname{Var}_0(Q_B^0\mid s_{1:n})\leq Cm_B.$

Since different blocks involve disjoint collections of conditionally independent responses,
$\frac1n\sum_{B\in\mathcal I_n}
\{Q_B^0-E_0(Q_B^0\mid s_{1:n})\}
\xrightarrow{P_0}0.$
Moreover, the Hessian bound and the conditional second-moment bound imply $0\leq g_i\leq C_g$ uniformly, so
$0\leq
\sum_{B\in\mathcal I_n}\frac1{m_B}\sum_{i\in I_B}g_i
\leq C_gK^2=o(n).$
Consequently,
$\frac1n
\left\{
\sum_{B\in\mathcal I_n}Q_B^0
-
\sum_{B\in\mathcal I_n}\sum_{i\in I_B}g_i
\right\}
\xrightarrow{P_0}0.$

For the interior blocks, the exponential-moment condition and $m_B\gtrsim L_n$, together with a union bound over at most $K^2$ blocks, give
$\max_{B\in\mathcal I_n}
\|\bar X_B-\mu_{\ell(B),0}\|
\xrightarrow{P_0}0.$
Since $\nabla^2\Phi$ is uniformly continuous on the compact set $\mathcal K$,
\[
\max_{B\in\mathcal I_n}
\left\|
\nabla^2\Phi(\bar X_B)-H_B^0
\right\|_{\mathrm{op}}
\xrightarrow{P_0}0.
\]
Also,
\[
\sum_{B\in\mathcal I_n}\sum_{i\in I_B}
\|X_i-\bar X_B\|^2
\leq
\sum_{i=1}^n\|\epsilon_i\|^2
=
O_{P_0}(n),
\qquad
\epsilon_i=X_i-\mu_0(s_i),
\]
so replacing $H_B^0$ by $\nabla^2\Phi(\bar X_B)$ changes the preceding interior sum by $o_{P_0}(n)$.

For every nonempty block let
$\bar\mu_B=\frac1{m_B}\sum_{i\in I_B}\mu_0(s_i)\in\mathcal H_0.$
On $\mathcal A_n$, the same concentration argument gives
$\max_{B\in\mathcal B_K}
\|\bar X_B-\bar\mu_B\|
\xrightarrow{P_0}0.$
Hence, with $P_0$-probability tending to one,
\[
\bar X_B\in\mathcal K\subset\mathcal U
\qquad\text{for every }B\in\mathcal B_K.
\]
On this event $\bar X_B^{\mathrm{int}}=\bar X_B$ for every block, so the regularized statistic $T_n$ coincides with the corresponding unregularized statistic. Furthermore,
$\|\nabla^2\Phi(\bar X_B)\|_{\mathrm{op}}
\leq \overline h_\Phi$
uniformly. Since the true regional centers form a finite set and
$\sum_{B\in\mathcal C_n}m_B=O(KL_n)$,
\[
\sum_{B\in\mathcal C_n}\sum_{i\in I_B}
(X_i-\bar X_B)^\top
\nabla^2\Phi(\bar X_B)
(X_i-\bar X_B)
=
O_{P_0}(KL_n)
=
o_{P_0}(n),
\]
and similarly
$\sum_{B\in\mathcal C_n}\sum_{i\in I_B}g_i
=
O(KL_n)
=
o(n).$
It follows that
\[
\frac{T_n}{n}-a_n\xrightarrow{P_0}0.
\]
Together with $N_n/n\xrightarrow{P_0}1$, this yields
$\widehat a_n-a_n\xrightarrow{P_0}0.$

Finally, define
$g_0(s)=
\operatorname{tr}
\left[
\nabla^2\Phi\{\mu_0(s)\}\Omega_0(s)
\right].$
The assumptions imply that $g_0$ is bounded and integrable. Since
$s_1,\ldots,s_n$ are i.i.d.,
\[
a_n=\frac1n\sum_{i=1}^n g_0(s_i)
\xrightarrow{P_0}
E_S\{g_0(S)\}
=
a(P_0).
\]
Therefore
\[
\widehat a_n\xrightarrow{P_0}a(P_0).
\]
Because $0<a(P_0)<\infty$, we have
$P_0(\widehat a_n>0)\to1$. On this event
$\widehat\tau_n=r/\widehat a_n$, while its prescribed value on
$\{\widehat a_n=0\}$ is asymptotically irrelevant. Hence, by the continuous mapping theorem,
\[
\widehat\tau_n
\xrightarrow{P_0}
\frac{r}{a(P_0)}
=
\tau(P_0).
\]
This completes the proof.
\hfill$\square$

\appsection{Additional Bregman response geometries}
\label{app-additional-geometries}
\medskip
\noindent\emph{Nonnegative count responses.}
Let $\mathcal M=\mathbb N_0^d$, let $\mathcal V=\mathbb R^d$, let $\mathcal U=(0,\infty)^d$, and take $z$ to be the identity map. Take $\Phi$ to be the generalized entropy generator on $\mathcal V$, with effective domain $\operatorname{dom}(\Phi)=[0,\infty)^d$ and $\Phi(x)=\sum_{h=1}^d\{x_h\log x_h-x_h\}$ for $x\in\operatorname{dom}(\Phi)$, with $0\log0=0$. For $\mu\in\mathcal U$, we have $\nabla_h\Phi(\mu)=\log\mu_h$ and $A_\Phi(\mu)=\sum_{h=1}^d\mu_h$. Hence $D_\Phi(x,\mu)=\sum_{h=1}^d\left\{x_h\log\left(\frac{x_h}{\mu_h}\right)-x_h+\mu_h\right\}$, the generalized Kullback--Leibler divergence. Under the parameterization $\chi_h=a_h-1$ and $\kappa=b$, the prior family in \eqref{eq-bregman-prior-family} is a product of independent Gamma distributions with shape $a_h$ and common rate $b$. Specifically, if $a_h>0$ for all $h$ and $b>0$, then $\Pi_{\chi,\kappa}=\prod_{h=1}^d\operatorname{Gamma}(a_h,b)$ and $Z_\Phi(\chi,\kappa)=\prod_{h=1}^d\Gamma(a_h)b^{-a_h}$. Therefore, under independent priors $\mu_h\sim\operatorname{Gamma}(a_{0h},b_0)$, the generalized posterior is $\mu_h\mid X_C\sim\operatorname{Gamma}\left(a_{0h}+\tau\sum_{i\in C}X_{ih},b_0+\tau n_C\right)$, independently for $h=1,\ldots,d$. The collapsed cluster score is $M_\tau(C)=\exp\left[-\tau\sum_{i\in C}\sum_{h=1}^d\{X_{ih}\log X_{ih}-X_{ih}\}\right]\prod_{h=1}^d\frac{\Gamma\left(a_{0h}+\tau\sum_{i\in C}X_{ih}\right)b_0^{a_{0h}}}{\Gamma(a_{0h})(b_0+\tau n_C)^{a_{0h}+\tau\sum_{i\in C}X_{ih}}}$.

\medskip
\noindent\emph{Network-valued responses.}
Consider simple undirected binary networks on a common set of $q$ labeled vertices, and let $m=\frac{q(q-1)}{2}$. Let $\mathcal M$ denote the corresponding network space, let $\mathcal V=\mathbb R^m$, let $\mathcal U=(0,1)^m$, and let $z$ map a network to the vector of its upper-triangular adjacency indicators. Take $\Phi$ to be the negative Bernoulli entropy generator on $\mathcal V$, with effective domain $\operatorname{dom}(\Phi)=[0,1]^m$ and $\Phi(x)=\sum_{h=1}^m\{x_h\log x_h+(1-x_h)\log(1-x_h)\}$ for $x\in\operatorname{dom}(\Phi)$, with $0\log0=0$. For $\mu\in\mathcal U$, we have $\nabla_h\Phi(\mu)=\log\left\{\frac{\mu_h}{1-\mu_h}\right\}$ and $A_\Phi(\mu)=-\sum_{h=1}^m\log(1-\mu_h)$. Hence $D_\Phi(x,\mu)=\sum_{h=1}^m\left[x_h\log\left\{\frac{x_h}{\mu_h}\right\}+(1-x_h)\log\left\{\frac{1-x_h}{1-\mu_h}\right\}\right]$. Under the parameterization $\chi_h=a_h-1$ and $\kappa=a_h+b_h-2$, when $a_h+b_h$ is common across $h$, the prior family in \eqref{eq-bregman-prior-family} is a product of independent Beta distributions. In particular, under independent $\operatorname{Beta}(a_0,b_0)$ priors for the edge probabilities, the generalized posterior is $\mu_h\mid X_C\sim\operatorname{Beta}\left(a_0+\tau\sum_{i\in C}X_{ih},b_0+\tau\left\{n_C-\sum_{i\in C}X_{ih}\right\}\right)$. Since $\Phi(X_i)=0$ for binary adjacency vectors, the collapsed cluster score is $M_\tau(C)=\prod_{h=1}^m\frac{B\left(a_0+\tau\sum_{i\in C}X_{ih},b_0+\tau\left\{n_C-\sum_{i\in C}X_{ih}\right\}\right)}{B(a_0,b_0)}$.
\appsection{Posterior computation details}
\label{app-posterior-computation}

Let $(T,F)$ denote the current tree--cut state, with
$k=|F|+1$, and let $r_b(k)$, $r_d(k)$, $r_c(k)$, and $r_h(k)$ denote the
probabilities of proposing a birth, death, change, and tree-hyper move,
respectively. We take $r_b(k_{\max})=0$ and $r_d(1)=r_c(1)=0$.

\medskip
\noindent\emph{Birth move.}
If $k<k_{\max}$, select uniformly one of the $K^2-k$ uncut edges in
$E(T)\setminus F$ and delete it. Thus $F'=F\cup\{e\}$ and the number of
clusters increases from $k$ to $k+1$.

\medskip
\noindent\emph{Death move.}
If $k>1$, select uniformly one of the $k-1$ deleted edges in $F$ and
restore it. Thus $F'=F\setminus\{e\}$ and the number of clusters decreases
from $k$ to $k-1$.

\medskip
\noindent\emph{Change move.}
Select uniformly one deleted edge $e_1\in F$ and restore it, giving an
intermediate state with $k-1$ clusters. Then select uniformly one of the
$K^2-k+1$ uncut edges of the intermediate tree and delete it. The resulting
state has $k$ clusters, so the change move modifies the partition without
changing its number of clusters.

\medskip
\noindent\emph{tree-hyper move.}
Conditional on the current partition $\pi=\{D_1,\ldots,D_k\}$, the
spanning tree is resampled exactly while preserving $\pi$, following
\citet{bhattacharyya2026bayesian}. Let $G_{K,j}$ be the subgraph of $G_K$
induced by the mesh blocks in $D_j$, and let $G_K/\pi$ be the quotient
multigraph obtained by contracting each $D_j$ to one vertex while retaining
the multiplicity of edges between regions. Independently sample a uniform
spanning tree within each $G_{K,j}$ and a uniform spanning tree of
$G_K/\pi$. Their union gives a new spanning tree $T'$ compatible with
$\pi$, and $F'$ is given by the $k-1$ edges of $T'$ connecting distinct
regions. This is a draw from the exact conditional distribution of $T$
given $\pi$ under the uniform spanning-tree prior.

\medskip
\noindent\emph{Metropolis--Hastings ratios.}
For the birth, death, and change moves, write the acceptance probability as
\[
\min\{1,R_{\mathrm{prior}}R_{\mathrm{prop}}R_M\},
\]
where $R_{\mathrm{prior}}$ is the prior ratio,
$R_{\mathrm{prop}}$ is the proposal ratio, and $R_M$
is the ratio of collapsed Bregman scores.

For a birth move from $k$ to $k+1$ clusters,
\[
R_{\mathrm{prior}}
=
\frac{\lambda_n}{k+1}
\frac{\binom{K^2-1}{k-1}}{\binom{K^2-1}{k}}
=
\frac{\lambda_n}{k+1}\frac{k}{K^2-k},
\]
and
\[
R_{\mathrm{prop}}
=
\frac{r_d(k+1)/k}{r_b(k)/(K^2-k)}
=
\frac{r_d(k+1)}{r_b(k)}
\frac{K^2-k}{k}.
\]
If the deleted edge splits an observation cluster $C$ into
$C_1$ and $C_2$, then
\[
R_M
=
\frac{M_\tau(C_1)M_\tau(C_2)}{M_\tau(C)}.
\]
Hence the birth acceptance probability is
\[
\alpha_b
=
\min\left\{
1,
\frac{\lambda_n}{k+1}
\frac{r_d(k+1)}{r_b(k)}
\frac{M_\tau(C_1)M_\tau(C_2)}{M_\tau(C)}
\right\}.
\]

For a death move from $k$ to $k-1$ clusters,
\[
R_{\mathrm{prior}}
=
\frac{k}{\lambda_n}
\frac{\binom{K^2-1}{k-1}}{\binom{K^2-1}{k-2}}
=
\frac{k}{\lambda_n}
\frac{K^2-k+1}{k-1},
\]
and
\[
R_{\mathrm{prop}}
=
\frac{r_b(k-1)/(K^2-k+1)}{r_d(k)/(k-1)}
=
\frac{r_b(k-1)}{r_d(k)}
\frac{k-1}{K^2-k+1}.
\]
If the restored edge merges observation clusters $C_1$ and $C_2$ into
$C=C_1\cup C_2$, then
\[
R_M
=
\frac{M_\tau(C)}{M_\tau(C_1)M_\tau(C_2)}.
\]
Hence the death acceptance probability is
\[
\alpha_d
=
\min\left\{
1,
\frac{k}{\lambda_n}
\frac{r_b(k-1)}{r_d(k)}
\frac{M_\tau(C)}{M_\tau(C_1)M_\tau(C_2)}
\right\}.
\]

For a change move,
\[
R_{\mathrm{prior}}=1,
\qquad
R_{\mathrm{prop}}=1.
\]
Let $\mathcal A_{\mathrm{old}}$ and $\mathcal A_{\mathrm{new}}$ denote the
observation clusters modified by the move before and after the proposal.
All unchanged cluster scores cancel, giving
\[
R_M
=
\frac{
\displaystyle\prod_{C\in\mathcal A_{\mathrm{new}}}M_\tau(C)
}{
\displaystyle\prod_{C\in\mathcal A_{\mathrm{old}}}M_\tau(C)
},
\qquad
\alpha_c=\min\{1,R_M\}.
\]

For the tree-hyper move, the partition and all collapsed scores remain
unchanged. Since the new tree is sampled from its exact conditional
distribution given the current partition, the move is a Gibbs update and
is accepted with probability one.
\end{document}